\documentclass[a4paper,10pt]{article}

\usepackage{amsthm,amsmath,amssymb,amscd,verbatim,graphicx,setspace,soul,hyperref}
\usepackage[dvipsnames]{xcolor}
\usepackage{centernot} 
\usepackage{bm}  
\usepackage{enumitem}
\setlist[enumerate]{itemsep=0mm}
\usepackage{mathtools}
\usepackage{soul} 
\usepackage{tikz-cd}

\usepackage[utf8]{inputenc}
\usepackage[english]{babel}
 
\usepackage{geometry}
\usepackage{marginnote}

%
%


\usepackage[round]{natbib}
\setlength{\bibsep}{0.0pt}


\let\originalleft\left
\let\originalright\right
\renewcommand{\left}{\mathopen{}\mathclose\bgroup\originalleft}
\renewcommand{\right}{\aftergroup\egroup\originalright}

\newcommand{\Set}[2]{%
  \{\, #1 \mid #2 \, \}%
}

\def\AA{\mathcal{A}}  
  
  \def\II{\mathcal{I}}
\def\JJ{\mathcal{J}}

\def\III{\mathbb{I}} \def\JJJ{\mathbb{J}} \def\KKK{\mathbb{K}} 
  \def\NNN{\mathbb{N}} 
   
\def\RRR{\mathbb{R}}   
 \def\VVV{\mathbb{V}}

\def\<{\left<} \def\>{\right>} 
\def\({\left(} \def\){\right)}

\DeclareMathOperator\id{id}

\DeclareMathOperator\Span{Span}

\theoremstyle{plain}
\newtheorem{thm}{Theorem}
\newtheorem{prop}[thm]{Proposition}

\newtheorem{lemma}[thm]{Lemma}

\theoremstyle{definition}

\newtheorem{example}[thm]{Example}
\newtheorem*{original standing assumptions}{Original Harsanyi Assumptions}
\newtheorem*{standing assumptions}{Standing Assumptions}
\newtheorem{rem}[thm]{Remark}

\newcommand{\ProdI}{\prod_{i \in \III}}
\newcommand{\SumI}{\bigoplus_{i \in \III}}

\newcommand\ov[1]{\overline{#1}}
\newcommand\fI{f_\III}
\newcommand\ovfI{\ov{f}_\III}
\newcommand\gI{g_\III}

\newcommand\uI{u_\III}
\newcommand\UI{U_\III}

\def\inv{^{-1}}

\def\Geq{\succsim}
\def\Leq{\precsim}
\def\sGeq{\succ}

\def\Eq{\sim}

\def\nLeq{\not\Leq}
\def\Incom{\curlywedge}


\def\GeqV{\succsim_V}
\def\LeqV{\precsim_V}
\def\sGeqV{\succ_{V}}
\def\EqV{\sim_V}

\def\IncomV{\Incom_V}

\def\GeqVi{\succsim_{V_i}}

\def\sGeqVi{\succ_{V_i}}

\def\IncomVi{\Incom_{V_i}}

\def\GeqVj{\succsim_{V_j}}

\def\GeqP{\succsim_\mathrm{P}}     

\def\sGeqP{\succ_\mathrm{P}}

\def\IncomP{\curlywedge_\mathrm{P}}


\def\aa{\alpha}

\def\ll{\lambda}


\newcommand{\restr}[1]{\lower.45ex\hbox{$|$}\lower.5ex\hbox{}_{#1}} 




\def\IFF{\Leftrightarrow}

 %

\title{Aggregation for potentially infinite populations without continuity or completeness%
\thanks{David McCarthy thanks the Research Grants Council of the Hong Kong Special Administrative Region, China (HKU 750012H) for support. Teruji Thomas thanks the Leverhulme trust for funding through the project `Population Ethics: Theory and Practice' (RPG-2014-064). An earlier version of this paper appeared as `Aggregation for general populations without continuity or completeness' MPRA Paper No. 80820 (2017).}
}
\author{David McCarthy%
\footnote{Corresponding author, Dept. of Philosophy, University of Hong Kong, Hong Kong, 
 \texttt{davidmccarthy1@gmail.com}} 
\and 
Kalle Mikkola%
\footnote{Dept. of Mathematics and Systems Analysis, Aalto University, Finland,
\texttt{kalle.mikkola@iki.fi}} 
\and 
Teruji Thomas%
\footnote{Global Priorities Institute, University of Oxford, United Kingdom, 
\texttt{teru.thomas@oxon.org}}
}

\def\V{V}
\def\X{X}

\def\THEN{\implies}


 

\newcommand{\CP}{C_P}

\begin{document}
\date{\vspace{-5ex}}

\maketitle

\begin{abstract}
We present an abstract social aggregation theorem. 
Society, and each individual, has a preorder that may be interpreted as expressing values or beliefs. 
The preorders are allowed to violate both completeness and continuity,  and the population is allowed to be infinite. 
The preorders are only assumed to 
be represented by  functions with values in partially ordered vector spaces, and whose product has convex range.   
This includes all preorders that satisfy strong independence. 
Any Pareto indifferent social preorder is then shown to be  represented by a linear transformation 
of the representations of the  individual preorders. 
Further Pareto conditions on the social  preorder correspond to positivity conditions on the transformation.
When all the Pareto conditions hold and the population is finite, the  social preorder is represented by a sum of individual preorder representations. 
We provide two applications. 
The first yields an extremely general version of Harsanyi's social aggregation theorem. 
The second  generalizes a classic result about linear  opinion  pooling.

\smallskip
\noindent \textbf{Keywords.} 
Social aggregation; discontinuous preferences and comparative likelihood relations; incomplete preferences and comparative likelihood relations; infinite populations; Harsanyi's 
social aggregation theorem; linear opinion pooling; partially ordered vector spaces.

\smallskip
\noindent \textbf{JEL Classification.} D60, D63, D70, D81, D83.
\end{abstract}

\section{Introduction}

Individuals may have incomplete 
values and beliefs.%
\footnote{The literature on the topics of this paragraph is vast. References that are directly relevant to our approach are given in section~\ref{S:povs}.} 
They may be undecided which of two goods is 
preferable, 
or which of two events is more likely. 
The former is reflected, for example, in the 
literature developing multi-utility theory, 
and the latter in 
work 
on representing beliefs 
by sets of probability measures, 
and on decision theories that feature such sets. 
Individuals may also have discontinuous values and beliefs.%
\footnote{The applications we discuss are either decision theoretic (values) or to do with comparative likelihood (beliefs). Each subject has its own family of continuity axioms, but we rarely need to specify these, and we rely on context to determine which family we are discussing.} 
For example, they may see some goods as being infinitely more valuable than others (or, equivalently, see some goods as only having significance as tiebreakers). 
In addition, allowing individuals to regard some events as  
infinitesimally likely 
provides a solution to a number of problems to do with conditional probability, decision theory, game theory and conditional preference.

A standard question is how to aggregate the 
values 
or  beliefs of individuals to form a collective view. 
Here we also wish to allow the population to be infinite. One rationale is that even in a finite society, 
a decision maker having incomplete information about the values and beliefs of members of society may wish to model each individual as an infinite set of types.%
\footnote{In the context of values, see e.g. \citet{lZ1997}, 
attributing the idea to \citet{jH1967}; for beliefs, see e.g. \citet{fH2015}.} 
But the most obvious rationale comes from the much discussed problem of intergenerational equity. Here, the 
possibility  
that society will extend indefinitely into the future 
requires an infinite number of people. It is also commonly modelled by an infinite sequence of generations, each one with a social utility function; in such a model, the generations play the role of `individuals' whose interests are to be aggregated.%
\footnote{\label{fn:infinite utility streams}The literature on this topic is largely shaped by the approaches of \citet{fR1928} (opposing impatience), and of \citet{tK1960} and \citet{pD1965} (requiring impatience). Our results are 
compatible with both approaches.}

Section~\ref{S:main} presents 
a family of abstract aggregation theorems. 
Each of possibly infinitely many individuals, and society, is equipped  with a preorder  on a given set.
We assume that these preorders are represented by  functions with values in partially ordered vector spaces, and whose product has convex range. 
The use of partially ordered vector spaces is explained and motivated in section \ref{S:povs};
axiomatizations of such representations are given in sections~\ref{S:harsanyi} and \ref{S:convexification}. 
Roughly speaking, our main result shows that Pareto indifference holds if and only if the social preorder can be represented by  a linear combination of the representations of the individual preorders.
Further Pareto conditions correspond to positivity conditions on the linear mapping.

Section~\ref{S:applications} illustrates the interest of the 
results with two  applications, corresponding respectively to the aggregation of values and of beliefs. The first yields an extremely general version of the celebrated social aggregation theorem of \citet{jH1955} that assumes only the central expected utility axiom of strong independence.  The other  shows that aggregate beliefs are given by linear pooling.  To reiterate, unlike other results in the literature, these results hold without any completeness or continuity assumptions, and  allow  for an infinite population.

We end in section \ref{S:literature} with a discussion of related literature, but for now we acknowledge \citet{pM1995} and \citet{DMM1995} for drawing attention to formal similarities between  preference  aggregation and  opinion  pooling, and emphasizing the  usefulness  of the convex range assumption we use below.

The proofs of our main theorems rely only on concepts from basic linear algebra, rather than any theorems from convex or functional analysis. All proofs are in the Appendix.

\section{Main results}\label{S:main}

\subsection{Representations in partially ordered vector spaces}\label{S:povs}
We are going to consider representations of preorders with values in partially ordered vector spaces. We first recall the basic definitions, and give some examples illustrating this type of representation.

Recall that a \emph{preorder} $\Geq$ is a binary relation that is reflexive and transitive;  we write $\Eq$ and $\sGeq$ for the symmetric and 
asymmetric parts of $\Geq$ respectively. Since preorders can be incomplete, 
we write $x \Incom y$ if neither $x \Geq y$ nor $y \Geq x$. 

Let $(X,\Geq)$ and $(X',\Geq')$ be preordered sets. A function $f \colon X \to X'$ is 
{\em increasing} if $x \Geq y \Rightarrow f(x) \Geq' f(y)$; {\em strictly increasing} if $x \Geq y \Rightarrow f(x) \Geq' f(y)$ and $x \sGeq y \Rightarrow f(x) \sGeq' f(y)$; 
a {\em representation} of $\Geq$ if $x \Geq y \Leftrightarrow f(x) \Geq' f(y)$; an {\em order embedding} if it is an injective representation; and an {\em order isomorphism} if it is a bijective representation.

A {\em preordered vector space} is a real
vector space $V$ with a 
 (possibly incomplete)  preorder $\Geq_V$ that is \emph{linear} in the sense that $v\Geq_V v' \Leftrightarrow \lambda v+w\Geq_V \lambda v'+w$, for all $v, v', w\in\VVV$ and $\lambda>0$.\footnote{A linear preorder, in our sense, is sometimes called a vector preorder.
See section \ref{S:literature} for generalisation to 
vector spaces over 
ordered fields other than $\RRR$. 
 } 
Note that we allow vector spaces to have infinite dimension.
A {\em partially ordered} vector space is a preordered vector space in which the linear preorder is a partial order; that is, it is anti-symmetric.  An {\em ordered} vector space is a partially ordered vector space in which the partial order is an order; that is, it is complete. 
When $L\colon V\to V'$ is a linear map between partially ordered vector spaces, $L$ is increasing if and only if it is \emph{positive}, in the sense that $v\Geq_V 0\Rightarrow Lv\Geq_{V'} 0$; and $L$ is strictly increasing if and only if it is \emph{strictly positive}, meaning that $v\Geq_V 0\Rightarrow Lv\Geq_{V'} 0$ and $v\sGeq_V0\Rightarrow Lv\sGeq_{V'} 0$.

In this paper we will be exclusively concerned with representations with values in partially ordered vector spaces.%
\footnote{See however Remark \ref{rem:preorder} for comments relevant to merely preordered vector spaces.}
We provide an axiomatic basis for using such representations 
for preference relations 
in 
Lemma~\ref{L:SI}, 
and
for comparative likelihood relations in 
Lemma~\ref{L:SI2} below, but for now we focus on examples.
The set $\RRR$ of real numbers, with the usual ordering, is a 
simple example of a partially ordered vector space, so our representations include familiar real-valued ones. Allowing for arbitrary partially ordered vector spaces allows for natural representations of incomplete and discontinuous preorders, as the following examples illustrate.

\begin{example}\label{ex:fame} 
Consider bundles of three goods (say fame, love, and money) represented by points in  $X=\RRR^3_+$. 
Let $V =  \RRR^3$, with the following 
linear 
partial order:
\[(x_1,x_2,x_3)\Geq_V (y_1,y_2,y_3)\iff (x_1\geq y_1,\,x_2\geq y_2, \text{ and, if } x_1=y_1\text{ and } x_2=y_2, \text{ then } x_3\geq y_3).
\]
Let $f \colon \RRR^3_+ \to V$ be the function 
$f(x_1,x_2,x_3)= (\tfrac23x_1+\tfrac13x_2,\tfrac13x_1+\tfrac 23x_2,x_3)$.  
Suppose person
$A$'s preference relation $\Geq_A$ 
over simple lotteries  over $\RRR^3_+$
is represented by expectations of $f$, 
thereby satisfying the expected utility axiom of strong independence.%
\footnote{\label{fn:SI}A preorder $\Geq$ on a convex set $X$ satisfies strong independence if for each $\aa \in (0,1)$, $x, y, z \in X$, $x \Geq y$ if and only if $\aa x + (1-\aa)z \Geq \aa y + (1-\aa) z$.} 
It violates the completeness axiom, because, for example, $(1,0,0) \Incom_A (0,1,0)$, reflecting the fact that $A$ 
finds fame and love only roughly comparable. 
(Note that the incomparability is limited in the sense that two units of fame are preferred to one unit of love, and vice versa.) The preferences violate standard decision theoretic notions of continuity, such as the Archimedean axiom of \citet{BG1954} and the mixture continuity axiom of \citet{HM1953}, 
reflecting the fact that while $A$ sees money as valuable, she finds fame (likewise love) infinitely more valuable. 
\end{example}

\begin{example}\label{ex:sphere}
Consider a sphere $S$ divided into open northern and southern hemispheres $S_+$ and $S_-$, and equator $S_0$. Let $\mu_+$, $\mu_-$, and $\mu_0$ be the uniform probability measures on $S_+$, $S_-$, and $S_0$, respectively. For every measurable set $A\subset S$ define 
$f(A)=(\mu_+(A\cap S_+),\mu_-(A\cap S_-),\mu_0(A\cap S_0))
\in V$, for $V=\RRR^3$ as in the previous example. 
This $f$ represents a likelihood preorder on the algebra $X$ 
of measurable subsets of $S$. The preorder is incomplete, since, for example, the hemispheres $S_+$ and $S_-$ are incomparable. 
(Here we have allowed the incomparability to be unlimited, in the sense that any positive-measure subsets of $S_+$ and $S_-$ are incomparable.)
Moreover, the equator $S_0$, though more likely than the null set, is less likely than the interior of any spherical triangle, no matter how small. Correspondingly, the likelihood preorder  violates 
standard continuity axioms for comparative likelihood,
such as the 
Archimedean axiom C6 of \citet{tF1973}, or the monotone continuity axiom of \citet{cV1964}  
and its weakening C8 in \citet{tF1973}. 
For both of these reasons, the likelihood preorder cannot 
be represented by a standard ($[0,1]$-valued) probability measure.  
\end{example}

In Example \ref{ex:fame}, the domain $X=\RRR^3_+$ is a convex set and the representation $f$ is \emph{mixture preserving}, 
i.e.  $f(\aa x+(1-\aa)x')=\aa f(x)+(1-\aa)f(x')$ for all $x,x'\in X,\aa\in [0,1]$.  
When $V=\RRR$, von Neumann-Morgenstern 
expected utility representations of preferences are the paradigmatic  representations of this type. 
In Example \ref{ex:sphere}, the domain $X$ is a Boolean algebra of sets, and the representation $f$ is a {\em vector measure}, meaning that it is finitely additive: $f(x \cup x') = f(x) + f(x')$ for disjoint $x,x'\in X$. When $V=\RRR$, probability measures are the paradigmatic representations of this type.%
\footnote{Standard treatments of probability theory assume countable additivity, but the motivation for this further requirement is reduced  given that we 
will not be assuming 
continuity axioms like Monotone Continuity.}

The applications of our aggregation theorems we provide in sections~\ref{S:harsanyi} and \ref{s:pooling} involve mixture-preserving and 
vector-measure representations with values in arbitrary partially ordered vector spaces. The next examples illustrate how such representations generalize some standard ways of representing incomplete or discontinuous preorders.

\begin{example}[Multi-representation]\label{ex:multi}
Suppose a preorder $\Geq$ on a set $X$ 
is represented in the following way. There is a family 
$\Set{f_i}{i\in\II}$ of functions 
from $X$ to $\RRR$ 
such that $x \Geq x'$ if and only if $f_i(x) \geq f_i(x')$ for all $i \in \II$. This can be rewritten as a representation by a single function $F \colon X \to \prod_{i\in\II}\RRR$ when $\prod_{i\in\II}\RRR$ is equipped with the product partial order,%
\footnote{In general, for a family $\{(X_i,\Geq_i):i\in\II\}$ of preordered sets, the product preorder $\GeqP$  on $\prod_{i\in\II} X_i$ is defined by the condition that $(x_i)_{i\in\II} \GeqP (y_i)_{i\in\II}$ if and only if $x_i \Geq_i y_i$ for each $i \in \II$.}
making it a partially ordered vector space, and $F$ is defined by $(F(x))_i=f_i(x)$. 
Note that if the $f_i$ are mixture preserving on convex $X$, then $F$ is also mixture preserving; and if the $f_i$ are 
probability measures 
on an algebra  $X$,  then $F$ is a vector measure.
\end{example}
Multi-representations like this 
are used in expected utility theory to represent the preferences of agents with incomplete values;%
\footnote{See e.g. \cite{rA1962, pF1982, SSK1995, SB1998, DMO2004, BS2008,  oE2008, MM2008, oE2014, GK2012, GK2013, MMT2017a}.}
in decision theory to represent the preferences of agents with incomplete beliefs;%
\footnote{See e.g. \citet{GS1989, SSK1995, tB2002, GMMS2003, rN2006, GMMS2010, OOR2012, GK2013}.}
and in probability theory to represent 
agents with incomplete beliefs.%
\footnote{For entries to a vast literature, see \citet{jH2003} and \citet{mS2016}; on multi-representation of incomplete comparative likelihood preorders, discussed further in section~\ref{s:pooling}, see \citet{dI1992} and \citet{AL2014}.}

\begin{example}[Lexicographic representation]\label{ex:lex}
Suppose a preorder $\Geq$ on a set $X$ 
is represented in the following way. There is a finite vector 
$(f_1, \dots, f_n)$ 
of functions from $X$ to $\RRR$ such that $x \Geq x'$ if and only if $f_j(x) = f_j(x')$ for all $j$ or $f_j(x) > f_j(x')$ for the least $j$ such that $f_j(x) \neq f_j(x')$. 
The $f_j$ can again be combined into a single function $F\colon X\to\RRR^n$, with $(F(x))_j=f_j(x)$.  As in Example \ref{ex:multi}, if the $f_j$ are mixture-preserving functions or probability measures, then $F$ is a mixture-preserving function or a vector measure. Moreover, this $F$ represents $\Geq$ if we endow $\RRR^n$ with the `lexicographic' order. To give the  general picture, let
$(\JJ, \Geq_{\JJ})$ be an ordered set, and let $\RRR^\JJ_\text{wo}$ be the subspace of $\RRR^\JJ$ whose members have well-ordered support.%
\footnote{That is, $\RRR^\JJ_\text{wo} = \Set{f \in \RRR^\JJ}{\Set{j \in \JJ}{f(j) \neq 0} \text{ is well-ordered by $\Geq_\JJ$}}$.} 
The lexicographic order $\Geq_{\text{lex}}$ on $\RRR^\JJ_\text{wo}$ is defined by the property that 
$f \Geq_{\text{lex}} f'$ if and only if $f = f'$ or $f(j) > f'(j)$ 
for the $\Geq_\JJ$-least $j$ such that $f(j) \neq f'(j)$. This makes $\RRR^\JJ_\text{wo}$ what \citet{HW1952} call a `lexicographic function space'. It is an ordered vector space. 
In the original example we can identity $\RRR^n$ with $\RRR^\JJ_\text{wo}$, for $\JJ\coloneqq\{1,2,\ldots,n\}$ with the usual ordering. 
\end{example}
Representations with values in lexicographic function spaces
are used in expected utility theory to represent the preferences of agents with discontinuous values;%
\footnote{See \citet{mH1954, pF1971, BBD1989, dB2016, HOR2016, MMT2017b}.}
in decision 
and game 
theory to represent the preferences of agents with discontinuous beliefs;%
\footnote{See \citet{BBD1991a, BBD1991b, BFK2008}.}  
and in probability theory to represent  
agents with discontinuous beliefs.%
\footnote{See \citet{jH2010, BH2018}.}

The constructions in these examples can be combined, yielding representations of possibly incomplete and discontinuous relations that take values in 
$\prod_{i \in \II}\RRR_{\text{wo}}^{\JJ_i}$, 
a partially ordered vector space where each component has the lexicographic order and their product has the product partial order.%
\footnote{\label{fn:matrix}If desired, the $\JJ_i$ can be enlarged so that the representation may be taken into $\prod_{i \in \II}\RRR_{\text{wo}}^{\JJ}$. Elements in this space can be seen as $\II \times \JJ$ matrices, with 
the row space lexicographically ordered, and one matrix ranking higher than another if it ranks higher in each row.} 
In fact, this construction is fully general: every partially ordered vector space can be  order-embedded in such a space.%
\footnote{That every ordered vector space can be order-embedded in a lexicographic function space was shown by \citet{HW1952}; 
the extension to partially ordered vector spaces is given in  \citet{HOR2016} and \citet{MMT2017b}.}

Finally, we note that our approach is also compatible with the use of representations with values in non-Archimedean ordered fields, such as non-standard models of the real-numbers.%
\footnote{\label{fn:fields}See e.g. \citet{pH1994a, pH1994b, pH1999, fH2009, jH2010, mP2014, BHW2018a, BHW2018b, BH2018}.}
These provide natural ways of representing discontinuous relations; see Example~\ref{ex:DR infinity} below. 
We discuss this topic further in section~\ref{S:literature}.

\begin{rem}[Preordered Vector Spaces]\label{rem:preorder} 
We focus on representations in partially ordered vector spaces, 
rather than merely preordered ones, 
because it seems desirable for objects that are ranked equally to be assigned \emph{the same} value. 
But for some purposes it is useful to consider 
preordered vector spaces instead. Our results have implications for  this more general case, 
insofar as  any preordered vector space has a partially ordered vector space as a quotient.%
\footnote{That is, if $\Geq_V$ is a linear preorder on $V$, then it determines a linear partial order on $V/{\Eq_V}$; the quotient map $V\to V/{\Eq_V}$ is a representation of $\Geq_V$.} 
Lemma~\ref{L:expectational} and 
Theorem~\ref{T:expectH} illustrate some of the extensions this makes available. 
\end{rem}

\subsection{Framework and axioms}\label{s:Framework}
We assume throughout that $X$ and $\III$ are nonempty, possibly infinite, sets. 
For each $i \in \III \, \cup \, \{0\}$, 
let $\Geq_i$ be a preorder on $X$. 
In applications, $\III$ is typically a population, 
and for each $i$ in $\III$, 
$\Geq_i$ expresses the values or beliefs of individual $i$, 
while $\Geq_0$ expresses those of the social observer.

Say that a family $\{f_i \colon X \to V_i\}_{i\in\II}$ of functions with values in vector spaces is {\em co-convex} if their joint range $(f_i)_{i\in \II}(X)\subset \prod_{i\in \II} V_i$ is convex. 
Equivalently, the $f_i$ are co-convex if there is a surjective map $q$ from $X$ onto a convex set 
$\bar{X}$, and mixture-preserving functions $\bar{f_i}\colon \bar{X}\to V_i$, such that $f_i=\bar{f_i} \circ q$. 
Thus co-convexity is a modest generalisation of the assumption that $X$ is a convex set and the $f_i$ are mixture preserving.

The central assumption of our main theorems will be that for $i \in \III \cup \{0\}$, the $\Geq_i$ have co-convex representations 
$f_i$ with values in partially ordered vector spaces $V_i$. 
 In the special case in which $\III$ is finite and each $V_i$ is the real numbers with the usual ordering, 
the requirement of co-convex representations was introduced by \citet{DMM1995}. 

Any preordered set can be shown to have a representation with values in  some  partially ordered vector space \citep[Lemma A.6]{MMT2018}, so the question is the significance of co-convexity.  
We postpone this question, though, until sections~\ref{S:harsanyi} and \ref{s:pooling}, where we present contexts in which co-convex representations naturally arise. 

We consider the following Pareto-style axioms.%
\footnote{See e.g. \citet{jW1993, jW1995} for discussion in the context of Harsanyi's theorem.}
\begin{quote}
\begin{enumerate}[label=\textup{P\arabic*}, nosep]
\item\label{P1} If $x \Eq_i y$ for all $i \in \III$, then $x \Eq_0 y$.
\item\label{P2} If $x \Geq_i y$ for all $i \in \III$, then $x \Geq_0 y$.
\item\label{P3} If $x \Geq_i y$ for all $i \in \III$, 
and $x \sGeq_j y$ for some $j$, then $x \sGeq_0 y$.
\item\label{P4} If $x \Geq_i y$ for all $i \in \III \setminus \{j\}$ and $x \Incom_j y$, then 
$x \nLeq_0 y$.
\end{enumerate} 
\end{quote}

When $\III$ is finite
and the $\Geq_i$ are preference relations, 
\ref{P1} is Pareto indifference; 
\ref{P1}--\ref{P2} together are sometimes called semi-strong Pareto; and \ref{P1}--\ref{P3} are strong Pareto. Since we are allowing for incompleteness, \ref{P4} is a natural supplement. 
Given \ref{P2} and a sufficiently rich domain, \ref{P4} follows from the simpler  condition that,  if $x\Eq_i y$ for all $i\in\III\setminus\{j\}$ and $x\Incom_j y$, then $x\Incom_0 y$. But to avoid domain conditions, we will appeal to \ref{P4} as formulated. 
\ref{P1}--\ref{P4} are also natural conditions under other interpretations of the $\Geq_i$, such as when  they are comparative likelihood relations.

\subsection{Results}\label{s:Results} 
Suppose we are given co-convex representations $f_i\colon X\to V_i$ of the $\Geq_i$, $i\in \III$.  
Set $V_\III \coloneqq \ProdI V_i$ and $\fI\coloneqq (f_i)_{i\in\III}\colon X\to V_\III$. We are interested in the question of whether $\Geq_0$ has a representation of the form $L\fI$, with $L$ a linear map from $V_\III$ into some partially ordered vector space $V$.    
When $\III$ is finite, we can write 
\[
\textstyle{L\fI = \sum_{i\in\III} L_if_i}
\]
where $L_i\colon V_i\to V$ is the $i$th component of $L$;%
\footnote{In other words, regardless of whether $\III$ is finite, $L_i \coloneqq L1_i$, where $1_i$ is the natural embedding of $V_i$ into $V_\III$.}
thus $L\fI$ is an additive representation.
But when $\III$ is infinite, $L$ is not determined by its components; relatedly, the sum over $i$ does not make sense.%
\footnote{Nonetheless, the linearity of $L$ does entail a kind of finite additivity; see section \ref{S:harsanyi} for further discussion in the context of Harsanyi's theorem.}
In any case, by endowing $V_\III$ with the product partial order $\GeqP$, we can consider the positivity of $L$.  
Note that if $L$ is positive, or strictly positive, then so is every $L_i$, but  if $\III$ is infinite, the converse need not hold.

Since $\fI$ and $L\fI$ are automatically co-convex, a necessary condition for the existence of a representation 
of $\Geq_0$ 
of the form $L\fI$ is that there exists \emph{some} representation 
$f_0\colon X\to V_0$ 
of $\Geq_0$ 
such that all the $f_i$, including $f_0$, are co-convex. Our first theorem says that this necessary condition is also sufficient, and lays out the connection between the Pareto-style axioms and the positivity properties of $L$.

\begin{thm}\label{T:infinitaryH}
Assume that for $i \in \III \cup\{0\}$ 
the $\Geq_i$ have co-convex representations $f_i$.   
\begin{enumerate}[label=(\alph*), nosep]
\item\label{part:infinitaryH P1} \ref{P1} holds if and only if  $\Geq_0$ has a representation of the form $L\fI$, with $L$ linear. 
\item\label{part:infinitaryH P2}  \ref{P1}\!--\ref{P2} hold if and only if 
$\Geq_0$ has a representation of the form $L\fI$, 
with $L$ linear and positive. 
\item\label{part:infinitaryH P3} \ref{P1}\!--\ref{P3} hold if and only if 
 $\Geq_0$ has a representation of the form $L\fI$, with $L$ linear and strictly positive.  
\item\label{part:infinitaryH P4} \ref{P1}\!--\ref{P4} hold if and only $\Geq_0$ has a representation of the form $L\fI$, with $L$ linear and
strictly positive, 
  and every $L_i$ 
an order embedding.
\end{enumerate}
\end{thm}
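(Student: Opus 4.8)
The plan is to prove both implications, leaning on parts~\ref{part:infinitaryH P1}--\ref{part:infinitaryH P3} for everything except the order-embedding clause and the distinctive role of~\ref{P4}.

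For the ``if'' direction, suppose $\Geq_0$ is represented by $L\fI$ with $L$ linear and strictly positive and each $L_i$ an order embedding. Then \ref{P1}--\ref{P3} hold by the ``if'' halves of parts~\ref{part:infinitaryH P1}--\ref{part:infinitaryH P3}. To verify~\ref{P4}, suppose $x\Geq_i y$ for all $i\in\III\setminus\{j\}$ and $x\Incom_j y$, and write $d\coloneqq\fI(x)-\fI(y)$, so that $d_i\GeqVi 0$ for $i\neq j$ and $d_j\IncomVj 0$. Decompose $d=1_j(d_j)+d'$, where $d'$ has $j$th component $0$ and agrees with $d$ elsewhere; then $d'\GeqP 0$, so $Ld'\Geq 0$. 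If we had $x\Leq_0 y$, i.e.\ $Ld\Leq 0$, then $L_jd_j=Ld-Ld'$ would be a sum of two elements that are $\Leq 0$, whence $L_j(-d_j)\Geq 0$; since $L_j$ is an order embedding this forces $-d_j\GeqVj 0$, contradicting $d_j\IncomVj 0$. Hence $x\nLeq_0 y$.

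The ``only if'' direction is the substantive one. Part~\ref{part:infinitaryH P3} already supplies a strictly positive linear $L$ with $L\fI$ representing $\Geq_0$; the new task is to arrange in addition that every component map is an order embedding. Rather than try to repair a given $L$---a naive choice can fail to be injective on some $V_i$, as when all individuals share one representation---I would reconstruct the representation directly on $V_\III$ with $L=\id$. Concretely, after the harmless reduction to the case where the range of each $f_i$ spans $V_i$, I would equip $V_\III$ with the linear preorder whose positive cone $C$ is generated by three families: the \emph{realizable Pareto directions} $\{\,\fI(x)-\fI(y) : x\Geq_0 y\,\}$, the product positive cone $\{\,v\in V_\III : v\GeqP 0\,\}$, and the axis cones $1_i(\{\,w\in V_i : w\GeqVi 0\,\})$ for $i\in\III$. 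By construction $\id$ is linear and positive, each $1_i$ is positive, and $\fI$ satisfies the forward half of a representation of $\Geq_0$; if $C$ turns out to be pointed, these forward halves upgrade to genuine representations and order embeddings, delivering the required $L=\id$ with components $L_i=1_i$.

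Everything therefore reduces to the single claim that $C$ is pointed and introduces no spurious comparabilities: no $\Geq_0$-negative realizable difference, and no axis element $1_i(w)$ with $w\not\GeqVi 0$, may lie in $C$. This is where the Pareto axioms enter in their intended roles---\ref{P2} and~\ref{P3} to control the comparable and the strict directions, and~\ref{P4} to block exactly the incomparable axis elements $1_i(w)$ with $w\IncomVi 0$---and where co-convexity does its work: the convexity of the joint range $\fI(X)$ is what lets me pass between abstract membership in the generated cone and honest Pareto comparisons of realizable differences, via an algebraic-interior and cone-separation argument that stays within linear algebra (no Hahn--Banach). I expect this pointedness lemma to be the main obstacle, and the genuinely hard case to be an infinite population, where $L$ is not determined by its components and the sum $\sum_i L_if_i$ is unavailable; the virtue of the $L=\id$ formulation is precisely that it sidesteps such sums, reducing the infinite case to the same cone-theoretic statement as the finite one.
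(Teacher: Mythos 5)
Your ``if'' direction is correct and essentially identical to the paper's: the decomposition $\fI(x)-\fI(y)=1_j(d_j)+d'$ with $d'\GeqP 0$, followed by the observation that $Ld\LeqV 0$ would force $L_jd_j\LeqV 0$ and hence $d_j\LeqVj 0$ by the order-embedding property, is exactly how the paper verifies \ref{P4}.

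The ``only if'' direction, however, has a genuine gap, located precisely at the step you defer: the cone $C$ you build on $V_\III$ is \emph{not} pointed in general, even when \ref{P1}--\ref{P4} all hold. Take $\III=\{1,2\}$, $X=\RRR^2$, $f_1,f_2$ the coordinate projections with the standard order on $\RRR$, and $\Geq_0$ represented by $f_0(x)=x_1+x_2$. Then \ref{P1}--\ref{P4} hold (\ref{P4} vacuously, since each $\Geq_i$ is complete), the representations are co-convex, yet $(1,0)\Eq_0(0,1)$ puts the entire line $\RRR(1,-1)$ inside your cone. More fundamentally, no partial order on $V_\III$ can make $L=\id$ work here: a representation into a \emph{partially ordered} space forces $L\fI(x)=L\fI(y)$ whenever $x\Eq_0 y$, and $x\Eq_0 y$ does not imply $\fI(x)=\fI(y)$. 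So ``$C$ is pointed'' is not an obstacle to be overcome but a false claim; the construction must instead pass to the quotient $V_\III/(C\cap-C)$ and take $L$ to be the quotient map, which is exactly what the paper does. (Your generating set is otherwise the paper's $C_0+C_P$: the axis cones $1_i(\{w\GeqVi 0\})$ are already contained in the product cone $C_P$, so they add nothing.) Two further points. First, ruling out ``$\Geq_0$-negative realizable differences'' in $C$ is not the full representation requirement: you must show that $\fI(x)-\fI(y)\in C$ implies $x\Geq_0 y$ outright, which in particular excludes $x\Incom_0 y$; the paper gets this from \ref{P2} together with convexity of $\fI(X)$ via Lemma~\ref{L:Vspan}, by writing $\fI(x)-\fI(y)=v_0+v_P$ and showing the $C_P$-part is itself a realizable $\Geq_0$-comparison --- a direct computation, with no algebraic-interior or cone-separation argument needed. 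Second, once the quotient is in place, the order-embedding property of the $L_i$ is where \ref{P4} (together with \ref{P3}) genuinely enters, by solving $1_iv=v_0+v_P$ for the residual $i$-th direction and applying \ref{P3}/\ref{P4} to exclude the strict and incomparable cases; your sketch points in this direction but needs to be run on the quotient, not on $V_\III$ itself.
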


An obvious question is whether \emph{every} representation $f_0\colon X\to V_0$ of $\Geq_0$, co-convex with $\fI$, has the form $L\fI$.   Note that, if $f_0$ is a representation, then so is
$f_0+b$, for any constant $b\in V_0$, but at most one of these will be of the form $L\fI$. So a more reasonable question is whether we can write every $f_0$ in the form $L\fI+b$. 
The following result gives an affirmative answer, subject to the following domain richness condition.  
  
\begin{quote}
\begin{enumerate}[label=\textup{DR}, nosep]
\item \label{DR} $V_\III = \Span(\fI(X)-\fI(X))$. 
\end{enumerate}
\end{quote}
In other words, $\fI(X)$ is not contained in any proper affine subspace of $V_\III$; 
equivalently, $V_\III$ is the affine hull of $\fI(X)$.%
\footnote{\label{fn:DR}To interpret \ref{DR}, it is worth noting that, by Lemma \ref{L:Vspan} below, and taking into account the fact that $\fI(X)$ is convex, $\Span(\fI(X)-\fI(X))=\{\lambda(\fI(x)-\fI(y)): \lambda>0, x,y\in X\}$. 
Thus when the $\Geq_i$ are preference relations, so that the $V_i$ may be seen as utility spaces, 
\ref{DR} says that any logically possible profile of 
utility differences (i.e. any element of $V_\III$) is realized as the utility difference between some $x,y\in X$, at least up to scale. Indeed,  \ref{DR} is equivalent to the conjunction of the claims 
(i) that 
$V_i=\Span(f_i(X)-f_i(X))$, so that 
that none of the utility spaces $V_i$ is gratuitously large 
(in terminology introduced in section~\ref{S:uniqueness}, this means that $f_i$ is pervasive); 
and (ii) that $\Span(\fI(X)-\fI(X))=\ProdI \Span(f_i(X)-f_i(X))$.}

\begin{thm}\label{T:infinitaryH-2} 
Assume that for $i \in \III \cup\{0\}$
the $\Geq_i$ have co-convex representations $f_i$. 
For the left-to-right directions of parts~\ref{part:infinitaryH-2 P2}--\ref{part:infinitaryH-2 P4}, assume \ref{DR}. 
\begin{enumerate}[label=(\alph*), nosep]
\item\label{part:infinitaryH-2 P1} \ref{P1} holds if and only if $f_0=L\fI+b$ for some $b\in V_0$ and $L$ linear. 
\item\label{part:infinitaryH-2 P2} \ref{P1}\!--\ref{P2} hold if and only if 
$f_0=L\fI+b$ for some $b\in V_0$ and $L$ 
linear and positive. 
\item\label{part:infinitaryH-2 P3} \ref{P1}\!--\ref{P3} hold  if and only if 
$f_0=L\fI+b$ for some $b\in V_0$ and $L$  
linear and strictly positive. 
\item\label{part:infinitaryH-2 P4} \ref{P1}\!--\ref{P4} hold if and only if  $f_0=L\fI+b$ for some $b\in V_0$ and $L$ 
linear and strictly positive,  with 
every $L_i$ an order embedding.
\end{enumerate}
\end{thm}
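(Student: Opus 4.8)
The plan is to handle the two directions of each equivalence separately: the reverse implications are direct verifications, and the forward implications require constructing the pair $(L,b)$ and then reading off the positivity of $L$ from the Pareto axioms. For the reverse directions, suppose $f_0 = L\fI + b$. Since each $V_i$ is \emph{partially} ordered, $x \Eq_i y$ for all $i \in \III$ gives $\fI(x) = \fI(y)$, hence $f_0(x) = f_0(y)$ and $x \Eq_0 y$; this is \ref{P1}, and needs no hypothesis on $L$. If $L$ is positive, then $x \Geq_i y$ for all $i$ gives $\fI(x) - \fI(y) \GeqP 0$, so $f_0(x) - f_0(y) = L(\fI(x) - \fI(y)) \Geq_{V_0} 0$ and $x \Geq_0 y$, which is \ref{P2}; strict positivity upgrades this to \ref{P3}, because the hypothesis of \ref{P3} unpacks to exactly $\fI(x) \sGeqP \fI(y)$.

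The reverse direction of part~\ref{part:infinitaryH-2 P4} is slightly more involved. Writing $v := \fI(x) - \fI(y)$ and $v_j := f_j(x) - f_j(y)$, I would split $v = 1_j v_j + v'$, where $v'$ agrees with $v$ off coordinate $j$ and vanishes in coordinate $j$. The hypotheses of \ref{P4} give $v' \GeqP 0$, hence $L v' \Geq_{V_0} 0$ by positivity, while $v_j \Incom_{V_j} 0$. If $f_0(x) \Leq_{V_0} f_0(y)$, i.e. $Lv \Leq_{V_0} 0$, then $L_j v_j = Lv - Lv' \Leq_{V_0} 0$, and because $L_j$ is an order embedding this forces $v_j \Leq_{V_j} 0$, contradicting incomparability; thus $x \nLeq_0 y$.

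For the forward directions, the core observation is that co-convexity of the \emph{full} family (including $f_0$) makes the joint graph $R := \{\,(\fI(x), f_0(x)) : x \in X\,\} \subseteq V_\III \times V_0$ convex. Axiom \ref{P1} says precisely that $\fI(x) = \fI(y)$ implies $f_0(x) = f_0(y)$, so $R$ is the graph of a single-valued function $g$ on the convex set $\fI(X)$; convexity of $R$ then forces $g$ to be mixture preserving. Affineness of $g$ in turn makes the assignment $\fI(x) - \fI(y) \mapsto f_0(x) - f_0(y)$ a well-defined linear map $L_0$ on $W := \Span(\fI(X) - \fI(X))$. Fixing a base point $x_0$, setting $b := f_0(x_0) - L_0\fI(x_0)$, and extending $L_0$ to any linear $L$ on $V_\III$ yields $f_0 = L\fI + b$, proving part~\ref{part:infinitaryH-2 P1} with no use of \ref{DR}. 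Under \ref{DR} we have $W = V_\III$, so $L = L_0$ is already defined on the whole space and is uniquely determined.

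To obtain positivity in parts~\ref{part:infinitaryH-2 P2}--\ref{part:infinitaryH-2 P4}, I would use the fact (Lemma~\ref{L:Vspan} together with convexity of $\fI(X)$) that under \ref{DR} every $v \in V_\III$ is a scaled difference $v = \lambda(\fI(x) - \fI(y))$ with $\lambda > 0$, so that $Lv = \lambda(f_0(x) - f_0(y))$. Then $v \GeqP 0$ is equivalent to $x \Geq_i y$ for all $i$, and \ref{P2} gives $Lv \Geq_{V_0} 0$; similarly $v \sGeqP 0$ unpacks to the hypothesis of \ref{P3}, giving strict positivity. The delicate step, which I expect to be the main obstacle, is the order-embedding claim in part~\ref{part:infinitaryH-2 P4}. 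Here the point is that for any $w \in V_j$ the single-coordinate element $1_j w \in V_\III$ is itself a scaled difference $\lambda(\fI(x) - \fI(y))$ under \ref{DR}; this means $x \Eq_i y$ for all $i \neq j$ while $f_j(x) - f_j(y)$ is a positive multiple of $w$. For $w \Incom_{V_j} 0$ this realizes the hypothesis of \ref{P4} (applied to $w$ and to $-w$), forcing $L_j w \Incom_{V_0} 0$. A trichotomy on the position of $w$ relative to $0$ then yields the order embedding: the comparable cases $w \sGeq_{V_j} 0$ and $w \sLeq_{V_j} 0$ are settled by strict positivity, the incomparable case by the fact just established, and antisymmetry of $V_j$ handles $w \Eq_{V_j} 0$; together these give $L_j w \Geq_{V_0} 0 \Leftrightarrow w \Geq_{V_j} 0$ and $L_j w = 0 \Rightarrow w = 0$. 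The care needed in isolating a single coordinate via \ref{DR} and in the bookkeeping of this trichotomy is where I expect the argument to be most demanding.
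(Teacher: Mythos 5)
Your proof is correct, and for parts~\ref{part:infinitaryH-2 P1}--\ref{part:infinitaryH-2 P3} it essentially coincides with the paper's: the paper's forward direction of part~\ref{part:infinitaryH-2 P1} is exactly your graph argument, packaged as a standalone ``abstract Harsanyi theorem'' (Theorem~\ref{t:AbsHarsanyi}), and the positivity arguments for \ref{part:infinitaryH-2 P2}--\ref{part:infinitaryH-2 P3} via \ref{DR} and Lemma~\ref{L:Vspan} are identical. The genuine divergence is in part~\ref{part:infinitaryH-2 P4}. The paper does not argue directly: it first invokes Theorem~\ref{T:infinitaryH}\ref{part:infinitaryH P4} to produce an auxiliary map $L'\colon V_\III\to V'$ whose components $L'_i$ are already known to be order embeddings (a fact established there by the convex-cone construction $C=C_0+C_P$), and then transfers the property to $L_i$ through a chain of equivalences $v\Geq_{V_i}0\iff L'_iv\Geq_{V'}0\iff\cdots\iff L_iv\Geq_{V_0}0$, again using \ref{DR} to realize $1_iv$ as a scaled difference $\lambda^{-1}(\fI(x)-\fI(y))$. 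You instead prove the order-embedding claim from scratch: under \ref{DR} the element $1_jw$ is a scaled difference, which forces $x\Eq_i y$ for $i\neq j$ and makes $f_j(x)-f_j(y)$ a positive multiple of $w$, so that $w\Incom_{V_j}0$ realizes the hypothesis of \ref{P4} in both directions and yields $L_jw\Incom_{V_0}0$; your four-way case analysis ($w\sGeq_{V_j}0$, $w\sLeq_{V_j}0$, $w=0$, $w\Incom_{V_j}0$) then closes the argument. This buys a self-contained proof of part~\ref{part:infinitaryH-2 P4} that never touches the cone machinery behind Theorem~\ref{T:infinitaryH}\ref{part:infinitaryH P4}, at the cost of leaning on \ref{DR} for one more purpose (realizing single-coordinate vectors) --- which is harmless here since \ref{DR} is already assumed for this part. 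Likewise, your direct verification of the right-to-left directions (in particular the decomposition $v=1_jv_j+v'$ for \ref{P4}) reproduces inline what the paper delegates to Theorem~\ref{T:infinitaryH}.
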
 
 \noindent In the special case in which $\III$ is finite and $V_i =\RRR$ for each $i\in \III \cup \{0\}$ (so that \ref{P4} is vacuous), this is proved in \citet[Prop. 1]{DMM1995} without assuming \ref{DR}.%
\footnote{In this special case, our methods give an alternative and arguably simpler \ref{DR}-free proof of part~\ref{part:infinitaryH-2 P1}, but they do not appear to help at all with \ref{DR}-free proofs of the other parts. 
See however the discussion at the end of section \ref{S:domain}.}

So far we have allowed the representations $f_i$ to have values in different partially ordered vector spaces.  One might wish to construct a single partially ordered vector space $V$ in which all of the different preorders are represented, 
and in which the representation of $\Geq_0$ is essentially the sum of the representations of the $\Geq_i$.  `Essentially' is required here since we cannot literally sum over $\III$ when it is infinite. But say that a map $S\colon \ProdI V\to V$ \emph{extends summation} if it restricts to the summation map $\SumI V\to V$; 
equivalently, each component $S_i$ of $S$ is the identity map on $V$.

\begin{thm}\label{T:infinitaryH-3} 
Assume that for $i \in \III \cup\{0\}$ 
the $\Geq_i$ have co-convex representations $f_i$.   
Then \ref{P1}\!--\ref{P4} hold if and only if there exists a partially ordered vector space $V$,  representations $g_i\colon X\to V$ 
of the $\Geq_i$ for $i \in \III \cup\{0\}$, 
and a linear map $S\colon\ProdI V\to V$ such that 
\begin{enumerate}[label=(\alph*), nosep]
    \item \label{part:H-3a} $g_0=S\gI$

    \item \label{part:H-3b} $S$ extends 
    summation and is strictly positive on 
    $\ProdI \Span (g_i(X)-g_i(X))$%
\footnote{Here positivity refers to the product partial order on $\ProdI V$, restricted to 
 $\ProdI \Span (g_i(X)-g_i(X))$.}    
\item \label{part:H-3c} $\fI$, $f_0$, $\gI$, and $g_0$ are 
together 
co-convex.
\end{enumerate}
If \ref{DR} holds as well as \ref{P1}--\ref{P4}, we can further require $V=V_0$ and $g_0=f_0$. \end{thm}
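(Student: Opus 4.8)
The plan is to package both directions around Theorems~\ref{T:infinitaryH} and~\ref{T:infinitaryH-2}, so that the only genuinely new work is the linear-algebraic construction of the common space $V$ and of a summation-extending map $S$ on the infinite product $\ProdI V$.

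For the ``if'' direction I would assume \ref{part:H-3a}--\ref{part:H-3c} and verify \ref{P1}--\ref{P4} directly from $g_0=S\gI$, using only that each $g_i$ represents $\Geq_i$ and the properties of $S$ in \ref{part:H-3b}. Writing $d\coloneqq\gI(x)-\gI(y)$, one has $d\in\ProdI\Span(g_i(X)-g_i(X))$, so: \ref{P1} holds because $x\Eq_i y$ for all $i$ forces $d=0$, hence $g_0(x)=g_0(y)$; and \ref{P2}, \ref{P3} follow from positivity and strict positivity of $S$ on this span, since the hypotheses make $d\GeqP 0$, resp.\ $d\sGeqP 0$. The only non-routine case is \ref{P4}: assuming $x\Geq_i y$ for $i\ne j$ and (for contradiction) $y\Geq_0 x$, I would split $d=1_j(d_j)+d''$, where $1_j$ is the inclusion of the $j$th copy of $V$ and $d''$ has $j$th coordinate $0$ and all other coordinates $d_i\Geq 0$. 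Both summands lie in the span; since $S$ extends summation, $S(1_j(d_j))=d_j$, and since $S$ is positive on the span, $S(d'')\Geq 0$, whence $d_j\Leq S(d)=g_0(x)-g_0(y)\Leq 0$, i.e.\ $y\Geq_j x$, contradicting $x\Incom_j y$.

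For the ``only if'' direction I would first apply Theorem~\ref{T:infinitaryH}\ref{part:infinitaryH P4} to obtain a representation $g_0=L\fI$ of $\Geq_0$ with $L\colon V_\III\to V$ linear and strictly positive and each $L_i$ an order embedding; I take this $V$ as the common space and set $g_i\coloneqq L_if_i$, which represents $\Geq_i$ since $L_i$ is an order embedding. It remains to build $S$. The key device is to enlarge $\Span(f_i(X)-f_i(X))$ to $\tilde W_i\coloneqq\Span(f_i(X))$ and to define, on $\ProdI L_i(\tilde W_i)$, the map $Su\coloneqq L\big((L_i^{-1}u_i)_{i\in\III}\big)$ (well defined since each $L_i$ is injective). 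This single formula is strictly positive on $\ProdI\Span(g_i(X)-g_i(X))=\ProdI L_i(\Span(f_i(X)-f_i(X)))$ (because $L$ is strictly positive and each $L_i$ reflects order), satisfies $S\gI=L\fI=g_0$ since $\gI(x)=(L_if_i(x))_i\in\ProdI L_i(\tilde W_i)$, and agrees with summation on the finitely supported elements $\bigoplus_i L_i(\tilde W_i)$ (using $L\circ 1_i=L_i$). I would then glue this to the summation map on $\bigoplus_i V$ (the two agree on that overlap) and extend by linearity to all of $\ProdI V$; positivity is required only on the span, which sits inside $\ProdI L_i(\tilde W_i)$, so the arbitrary extension does not affect it. Condition \ref{part:H-3c} is immediate, since $(\fI,f_0,\gI,g_0)$ is the image of the convex set $(\fI,f_0)(X)$ under the affine map $(a,c)\mapsto(a,c,(L_ia_i)_i,La)$.

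The main obstacle is exactly this construction of $S$: because $\III$ may be infinite there is no honest summation map on $\ProdI V$, and for a general partially ordered $V$ one cannot invoke a Riesz/Hahn--Banach positive-extension theorem to push the summation map from $\bigoplus_i V$ to a map positive on the full product. The explicit formula $S=L\circ(\text{componentwise }L_i^{-1})$ is what sidesteps this, converting positivity of $S$ into the already-established strict positivity of $L$. Finally, for the addendum I would run the same construction from Theorem~\ref{T:infinitaryH-2}\ref{part:infinitaryH-2 P4}, which under \ref{DR} gives $f_0=L\fI+b$ with $L\colon V_\III\to V_0$ as above; taking $V=V_0$, $g_0=f_0$, $g_i=L_if_i$ for $i\ne j_0$, and $g_{j_0}=L_{j_0}f_{j_0}+b$ for one fixed $j_0\in\III$, the constant $b$ is absorbed because $1_{j_0}(b)\in\bigoplus_i V_0$ and $S$ extends summation, so $S\gI=L\fI+b=f_0$, while the difference-spans, and hence the positivity requirement, are unchanged by the constant shift.
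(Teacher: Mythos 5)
Your proof is correct and follows essentially the same route as the paper's: both directions hinge on Theorem~\ref{T:infinitaryH}\ref{part:infinitaryH P4} (resp.\ Theorem~\ref{T:infinitaryH-2}\ref{part:infinitaryH-2 P4} for the addendum), with $g_i=L_if_i$ and $S$ obtained by gluing the explicit map $L\circ(\text{componentwise }L_i^{-1})$ to the summation map on $\SumI V$ and extending linearly, and with the same splitting $\gI(x)-\gI(y)=1_j(d_j)+d''$ for \ref{P4} in the converse direction. The only differences are cosmetic --- you restrict the explicit formula to $\ProdI L_i(\Span(f_i(X)))$ rather than $\ProdI L_i(V_i)$ --- and in the addendum your sign for absorbing the constant ($g_{j_0}+b$ rather than $g_i-b$) is in fact the correct one.
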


\subsubsection{Domain assumptions}\label{S:domain}

\ref{DR} is a stronger domain assumption than is needed for 
Theorem~\ref{T:infinitaryH-2}\ref{part:infinitaryH-2 P2}--\ref{part:infinitaryH-2 P4} 
and the last claim in Theorem \ref{T:infinitaryH-3},%
\footnote{\label{fn:weaken DR}For example, for Theorem \ref{T:infinitaryH-2}\ref{part:infinitaryH-2 P2}\ref{part:infinitaryH-2 P3}, \ref{DR} 
could be replaced by the weaker assumption that $\Span(\fI(\X)-\fI(\X))$ contains the positive cone 
$\Set{v\in V_\III}{v \GeqP 0}$; 
for part \ref{part:infinitaryH-2 P4} 
and for 
the last part of 
Theorem~\ref{T:infinitaryH-3}
one could assume that 
$\Span(\fI(\X)-\fI(\X))$ contains both the positive cone and 
 $\SumI V_i$. 
 However, even these assumptions are stronger than necessary.}
 and we adopt it for its simplicity. It cannot simply be dropped from 
these theorems,
even when $\III$ is finite, 
 as the following example shows.

\begin{example}\label{ex:DR counterexample}
Let $X = \RRR$ and $\III = \{1,2\}$. Let $\Geq_1$ and $\Geq_2$ be represented by $f_1(x) = x$ and $f_2(x) = -x$. 
Thus $V_\III = \RRR \times \RRR$, $\fI(x)=(x,-x)$, and $\Span(\fI(X)-\fI(X)) = \RRR(1,-1)$. 
Let $V_0 = \RRR$ with the `trivial' linear partial order such that any two distinct elements are incomparable (thus $x \Geq_{V_0} 0$ if and only if $x=0$).  Let $\Geq_0$ be represented by $f_0(x) = x$.  
Thus $(\fI,f_0)(X)$ is convex, $\Geq_0$ satisfies $\ref{P1}$--$\ref{P4}$, but \ref{DR} fails. 

Suppose $f_0 = L\fI +b$, with $L \colon V_\III \to V_0$ positive.  
We have  $(1,0) \GeqP 0$ and since $L$ is positive,  $L((1,0))= 0$; and similarly, $L((0,1)) = 0$. Linearity of  $L$ implies $L= 0$, contradicting $f_0 = L\fI +b$. 
Thus parts \ref{part:infinitaryH-2 P2}--\ref{part:infinitaryH-2 P4} of Theorem~\ref{T:infinitaryH-2} do not hold. 
Similarly for 
the last claim of
Theorem \ref{T:infinitaryH-3}: there is no representation 
$g_1$ 
of $\Geq_1$ 
with values in $V_0$.
\end{example}

When $\III$ is infinite, \ref{DR} is perhaps surprisingly strong.

\begin{example}\label{ex:DR infinity}
Suppose that $\III =  \NNN$,  $X = \ProdI[0,1]$, $V_i=\RRR$, $f_i(x) = x_i$ for $i\in\III$. Then $V_\III = \ProdI \RRR$ is the space of sequences of real numbers, 
but $\Span(\fI(X)-\fI(X))$ is the subspace of bounded sequences,  
so \ref{DR} fails. 

Theorem~\ref{T:infinitaryH-2}\ref{part:infinitaryH-2 P2}--\ref{part:infinitaryH-2 P4}  also fail. 
As in \citet{BHW2018a}, choose a uniform probability measure $\mu$ on $\NNN$ with values in a non-Archimedean ordered field $F$ 
that extends the reals; as such, it is an ordered real vector space. Let $V_0$ be the ordered real vector space of finite (including infinitesimal) elements of $F$; thus for every  $x$ in $V_0$ there is some natural number $n$ with $n\Geq_{V_0} x$. 
Note that $\mu$ has values in $V_0$. Let $f_0 \colon X \to V_0$ map each element of $X$ to its $\mu$-expectation; that is $f_0(x) 
= \sum_{i \in \III} \mu(\{i\})x_i$. The $\Geq_0$ so represented satisfies \ref{P1}--\ref{P4} (note that the $\Geq_i$ are complete, so \ref{P4} is vacuous).  
All the $\Geq_i$ satisfy strong independence, but $\Geq_0$ violates both the Archimedean and mixture continuity axioms of expected utility.

Suppose $f_0 = L\fI +b$ for some linear mapping $V_\III \to V_0$, implying $b = 0$. 
Consider the sequence 
$v = (1,2,3,4, \dots)$ in $V_\III$, and for each natural number $n$, the bounded sequence $v_n = (1,2, \dots, n-1, n,n,n,n, \dots)$. 
By linearity of $L$ we must have $L(v_n) \Geq_{V_0} n-1$  (since $f_0(\tfrac 1n v_n)$ is infinitesimally close to $1$). 
But  $v \GeqP v_n$ for every natural number $n$. So if $L$ is positive, we must have $L(v) \Geq_{V_0} n$ for every $n$. 
But that is impossible, by construction of $V_0$. 
A similar argument shows that the last claim of Theorem~\ref{T:infinitaryH-3}, involving \ref{DR}, also fails.
\end{example}

An interesting question therefore is how far \ref{DR} can be weakened in Theorems~\ref{T:infinitaryH-2} and \ref{T:infinitaryH-3}. 
We plan to take up that question in other work, but for now we note that one can sometimes bypass \ref{DR} by allowing $L$ in Theorem \ref{T:infinitaryH-2} to be defined only on a subspace of $V_\III$ that contains $\fI(X)$. 
For example, by adding constants to the $f_i$, we can assume that $\fI(X)$ is contained in $Y \coloneqq \Span(\fI(X)-\fI(X))$. 
Without any need for \ref{DR},    Theorem~\ref{T:infinitaryH-2}\ref{part:infinitaryH P1}--\ref{part:infinitaryH P3} hold if $L$ is only required to be defined on $Y$. 
Part  \ref{part:infinitaryH P4} also holds under the further assumption that $Y$ contains $\SumI V_i$, which is needed for the components $L_i$ to be defined.%
\footnote{The proofs of these claims are trivial variations on the proof of Theorem~\ref{T:infinitaryH-2}, so we omit them. 
The situation for the last statement in Theorem \ref{T:infinitaryH-3} is slightly less straightforward, but, in short,  we can again replace \ref{DR} by the assumption that $Y$ contains $\SumI V_i$, if we only require in part \ref{part:H-3b} that $S$ is 
strictly 
positive on $Y'\coloneqq \Span(\gI(X)-\gI(X))$.
That is to say, assuming that $Y$ contains $\SumI V_i$, 
\ref{P1}--\ref{P4} hold if and only if there exist representations $g_i\colon X\to V_0$ of $\Geq_i$ for $i\in\III$, and a linear map $S\colon \ProdI V_0\to V_0$, such that $f_0=S\gI$, $S$ extends summation, $S$ is 
strictly   
positive on $Y'$, and $\fI$, $f_0$, and $\gI$ are together co-convex. 
} 
This last assumption is weaker than \ref{DR} when $\III$ is infinite, and 
is satisfied in Example~\ref{ex:DR infinity}. 
To illustrate, $f_0$ in that example extends uniquely to a strictly positive linear $L \colon Y \to V_0$ 
that maps each bounded sequence to its $\mu$-expectation, 
with every $L_i$ an order embedding, and we have $f_0 = L\fI$.  
But the example shows that $L$ cannot be extended to a positive linear map $V_\III \to V_0$.

We will not pursue domain questions any further. 
For simplicity, we use \ref{DR} to discuss the uniqueness of the representations discussed section \ref{s:Results}, to which we now turn. But we avoid it in the applications we present in section~\ref{S:applications}.

\subsection{Uniqueness}\label{S:uniqueness}

We first address the general question of to what extent representations with values in partially ordered vector spaces are unique. Say that a function $f \colon X \to V$ is {\em pervasive} if $V = \Span(f(X)-f(X))$; equivalently, $V$  is the affine hull of $f(X)$. The restriction to pervasive representations in the next result is mild, since by adding a constant to $f$ we can always obtain a pervasive representation of the preorder on $X$ that is represented by $f$: for any $x_0\in X$, the representation $f^*\colon X\to \Span(f(X)-f(X))$ defined by $f^*(x)=f(x)-f(x_0)$ is pervasive.

\begin{lemma}\label{L:coconvex0}
Suppose given co-convex representations $f\colon X\to V$ and $g\colon X\to V'$ of a preorder $\Geq$ on $X$. Suppose that $f$ and $g$ are pervasive. 
Then there exists a unique linear order isomorphism $L\colon V\to V'$ and unique $b \in V'$ such that $g=Lf+b$. 
\end{lemma}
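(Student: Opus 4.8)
The plan is to construct $L$ as the ``slope'' of the convex set $C := (f,g)(X) \subseteq V \times V'$ and to read off $b$ as an intercept. Since $f,g$ are co-convex, $C$ is convex, so by Lemma~\ref{L:Vspan} the subspace $W := \Span(C-C)$ consists exactly of the vectors $\lambda\big((f(x),g(x))-(f(y),g(y))\big)$ with $\lambda>0$ and $x,y\in X$. Writing $\pi_V,\pi_{V'}$ for the two coordinate projections, linearity of $\pi_V,\pi_{V'}$ together with pervasiveness of $f$ and $g$ gives $\pi_V(W)=\Span(f(X)-f(X))=V$ and $\pi_{V'}(W)=\Span(g(X)-g(X))=V'$.

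The main step, and the place where the hypothesis of \emph{partial} (not merely pre-) order is essential, is to show that $W$ is the graph of a linear isomorphism $V\to V'$. For this I would verify the two transversality conditions $W\cap(\{0\}\times V')=\{0\}$ and $W\cap(V\times\{0\})=\{0\}$. Take $(0,w')\in W$, say $(0,w')=\lambda\big(f(x)-f(y),\,g(x)-g(y)\big)$. Then $f(x)=f(y)$, so $x\Eq y$ since $f$ represents $\Geq$; as $g$ is also a representation, $g(x)\Geq_{V'}g(y)$ and $g(y)\Geq_{V'}g(x)$, and \emph{antisymmetry} of $\Geq_{V'}$ forces $g(x)=g(y)$, i.e. $w'=0$. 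The symmetric argument, using antisymmetry of $\Geq_V$, gives the other condition. The first condition makes $\pi_V|_W$ injective, hence (being onto $V$) a linear isomorphism, so $L:=\pi_{V'}\circ(\pi_V|_W)\inv$ is a well-defined linear map with $W=\{(v,Lv):v\in V\}$; surjectivity of $\pi_{V'}|_W$ makes $L$ onto, and the second condition makes $L$ injective. Thus $L$ is a linear isomorphism.

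Reading off the remaining assertions is then routine. Because $(f(x)-f(y),\,g(x)-g(y))$ lies in the graph $W$, we have $g(x)-g(y)=L\big(f(x)-f(y)\big)$ for all $x,y$, so $g(x)-Lf(x)$ is independent of $x$; setting $b$ to be this common value gives $g=Lf+b$. To see that $L$ is an order isomorphism, I would again apply Lemma~\ref{L:Vspan}, now to the convex set $f(X)=\pi_V(C)$: any $v\in V$ equals $\lambda(f(x)-f(y))$ for some $\lambda>0$, and then, using linearity of the two orders, $v\Geq_V 0 \iff f(x)\Geq_V f(y) \iff x\Geq y \iff g(x)\Geq_{V'}g(y) \iff Lv\Geq_{V'}0$, which says precisely that both $L$ and $L\inv$ are positive. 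Finally, for uniqueness, if $g=L'f+b'$ as well, subtracting yields $(L-L')(f(x)-f(y))=0$ for all $x,y$, so $L-L'$ vanishes on $\Span(f(X)-f(X))=V$; hence $L=L'$ and then $b=b'$. The one genuine obstacle is the graph step: it is exactly the passage from ``the preorders agree'' to ``the representing values agree'' that requires antisymmetry of the target spaces, and is the reason the statement is confined to partially ordered rather than merely preordered vector spaces.
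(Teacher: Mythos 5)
Your proof is correct. The core mechanism---realizing $\Span\big((f,g)(X)-(f,g)(X)\big)$ as the graph of a linear map via Lemma~\ref{L:Vspan}---is exactly the one the paper uses, but the paper packages it as a separate ``abstract Harsanyi theorem'' (Theorem~\ref{t:AbsHarsanyi}), which it needs independently for Theorem~\ref{T:infinitaryH-2}, and then proves the lemma by applying that theorem twice (once in each direction, $g=Lf+b$ and $f=L'g+b'$), composing, and invoking the uniqueness clause to conclude $L'L=\id_V$ and $LL'=\id_{V'}$. You instead prove bijectivity in one pass by checking the two transversality conditions $W\cap(\{0\}\times V')=\{0\}$ and $W\cap(V\times\{0\})=\{0\}$, each of which reduces to the implication ``representing values agree $\Rightarrow$ the other representing values agree,'' where antisymmetry of the target partial orders enters---your diagnosis of why the statement is confined to partially ordered rather than preordered spaces is exactly right. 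Your route is more symmetric and self-contained; the paper's buys reuse of a lemma it must prove anyway. The remaining steps (reading off $b$, the order-isomorphism chase via Lemma~\ref{L:Vspan} applied to the convex set $f(X)$, and uniqueness from pervasiveness) match the paper's almost verbatim.
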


The following result explains the sense in which the type of representation of $\Geq_0$ discussed in Theorem \ref{T:infinitaryH} is unique.

\begin{prop}\label{prop:uniqueness1}
Assume that for $i\in\III\cup\{0\}$ the $\Geq_i$ have co-convex representations $f_i$ such that DR holds.
Suppose that $L\colon V_\III\to V$  and $L'\colon V_\III\to V'$ are linear maps to partially ordered vector spaces such that $L\fI$ and  $L'\fI$  represent $\Geq_0$. 
Then there is a unique linear order isomorphism 
$M\colon L(V_\III)\to L'(V_\III)$ such that $L'=ML$. 
 \end{prop}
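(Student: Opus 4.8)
The plan is to reduce the whole statement to Lemma~\ref{L:coconvex0} by regarding $L\fI$ and $L'\fI$ not as maps into $V$ and $V'$, but as maps into the subspaces $L(V_\III)$ and $L'(V_\III)$, where they become \emph{pervasive}. First I would record that both $L\fI\colon X\to L(V_\III)$ and $L'\fI\colon X\to L'(V_\III)$ are representations of $\Geq_0$ (this is given), each with values in a partially ordered vector space carrying the order inherited from $V$ and $V'$ respectively. Pervasiveness is where \ref{DR} enters: since $L$ is linear,
\[
\Span\bigl(L\fI(X)-L\fI(X)\bigr)=L\bigl(\Span(\fI(X)-\fI(X))\bigr)=L(V_\III),
\]
the last equality being exactly \ref{DR}, and likewise for $L'$. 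Co-convexity of the pair follows because the joint range $\{(L\fI(x),L'\fI(x)):x\in X\}$ is the image of the convex set $\fI(X)$ under the single linear map $v\mapsto(Lv,L'v)$, hence convex.

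With the hypotheses of Lemma~\ref{L:coconvex0} verified, it supplies a unique linear order isomorphism $M\colon L(V_\III)\to L'(V_\III)$ and a unique $b\in L'(V_\III)$ with $L'\fI=ML\fI+b$. The one genuine step remaining is to upgrade this identity between \emph{representations} to the asserted identity $L'=ML$ between \emph{linear maps} on all of $V_\III$. Taking differences cancels the constant: for all $x,y\in X$,
\[
L'\bigl(\fI(x)-\fI(y)\bigr)=ML\bigl(\fI(x)-\fI(y)\bigr).
\]
By \ref{DR} the differences $\fI(x)-\fI(y)$ span $V_\III$, so by linearity $L'$ and $ML$ agree on all of $V_\III$; that is, $L'=ML$. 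Comparing $L'\fI=ML\fI$ (a special case) with $L'\fI=ML\fI+b$ then forces $b=0$, consistent with the conclusion.

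Finally, uniqueness of $M$ is immediate and does not even require the uniqueness clause of Lemma~\ref{L:coconvex0}: any linear order isomorphism $M'\colon L(V_\III)\to L'(V_\III)$ with $L'=M'L$ satisfies $M(Lv)=L'v=M'(Lv)$ for every $v\in V_\III$, so $M$ and $M'$ agree on $L(V_\III)$, which is the entirety of their common domain, whence $M'=M$. I do not anticipate a real obstacle here; the only points requiring care are remembering to shrink the codomains to $L(V_\III)$ and $L'(V_\III)$ so that \ref{DR} yields pervasiveness, and then invoking \ref{DR} a second time to pass from the representation-level identity to the map-level identity $L'=ML$.
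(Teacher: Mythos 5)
Your proposal is correct and follows essentially the same route as the paper: restrict codomains so that \ref{DR} gives pervasiveness, check co-convexity of the pair $(L\fI,L'\fI)$ (your image-of-a-convex-set-under-a-linear-map observation is a slightly slicker version of the paper's pointwise argument), invoke Lemma~\ref{L:coconvex0}, and then use \ref{DR} with Lemma~\ref{L:Vspan} to promote the affine identity on representations to the linear identity $L'=ML$ on all of $V_\III$. No gaps.
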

\noindent
The proof establishes in part that  
$L\fI\colon X\to L(V_\III)$ and $L'\fI\colon X\to L'(V_\III)$ 
are co-convex, pervasive representations of $\Geq_0$, so that Lemma~\ref{L:coconvex0} applies. 
Note that Proposition \ref{prop:uniqueness1}, unlike Theorem \ref{T:infinitaryH}, assumes \ref{DR}. Here is an example to illustrate why.
\begin{example}
Let $X=\RRR$, let $\III=\{1,2\}$, and let $\Geq_0,\Geq_1,\Geq_2$ all equal the standard ordering on $\RRR$. Let $V_1,V_2=\RRR$ with the standard ordering, and let $f_1,f_2\colon X\to\RRR$ both be the identity map. 
Thus 
$V_\III = \RRR \times \RRR$ 
and DR fails. Let $L\colon V_\III \to V\coloneqq \RRR$ map $(x,y)\mapsto x+y$, 
and let 
$L'\colon V_\III \to V'\coloneqq \V_\III$ 
be the identity map. 
Then $L\fI$ and $L'\fI$ both represent $\Geq_0$, but there is no linear map $M\colon V\to V'$ such that $L'=ML$.
\end{example}

Next we establish that the $L$ and $b$ in Theorem \ref{T:infinitaryH-2}  are unique.
\begin{prop}\label{prop:uniqueness2} 
Assume that for $i\in\III\cup\{0\}$ the $\Geq_i$ have co-convex representations $f_i$ such that DR holds.   Then there at most one 
linear map $L\colon V_\III\to V_0$ and one $b\in V_0$ such that $f_0=L\fI+b$.
\end{prop}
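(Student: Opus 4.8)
The plan is to argue by subtraction, reducing the uniqueness claim to the statement that a linear map on $V_\III$ vanishing on $\fI(X)-\fI(X)$ must be the zero map, which is exactly what \ref{DR} delivers. So suppose that both $f_0 = L\fI + b$ and $f_0 = L'\fI + b'$, with $L, L'\colon V_\III \to V_0$ linear and $b, b' \in V_0$. Subtracting the two expressions, I would set $M \coloneqq L - L'$ (a linear map $V_\III \to V_0$) and $c \coloneqq b' - b \in V_0$, so that $M\fI(x) = c$ for every $x \in X$; that is, $M\fI$ is constant on $X$.

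From here the key step is to pass from constancy to vanishing. Since $M\fI(x) = M\fI(y) = c$ for all $x, y \in X$, linearity of $M$ gives $M(\fI(x) - \fI(y)) = 0$ for all $x, y$, so $M$ vanishes on the set $\fI(X) - \fI(X)$, and hence on $\Span(\fI(X) - \fI(X))$. By \ref{DR} this span is all of $V_\III$, so $M = 0$ on $V_\III$, i.e.\ $L = L'$. Feeding $M = 0$ back into $M\fI(x) = c$ then forces $c = 0$, so $b = b'$ as well, completing the argument.

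I do not expect any serious obstacle: the proof is a direct consequence of \ref{DR}, and indeed the whole role of \ref{DR} here is precisely to ensure that $\fI(X)$ affinely spans $V_\III$, so that a linear $L$ is completely pinned down by its values on $\fI(X)$. The only point needing care is the elementary observation that a linear map constant on $\fI(X)$ annihilates the linear span of the differences $\fI(X) - \fI(X)$; this is where the hypothesis enters, and it is exactly the feature that fails in the example following Proposition~\ref{prop:uniqueness1}, where \ref{DR} does not hold and uniqueness breaks down.
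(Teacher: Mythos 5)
Your proof is correct and follows essentially the same route as the paper's: both arguments reduce to the observation that $L$ and $L'$ agree on the differences $\fI(x)-\fI(y)$, which by \ref{DR} span $V_\III$, and then recover $b=b'$ by evaluating at a point. The paper phrases this via Lemma~\ref{L:Vspan} (writing an arbitrary $v\in V_\III$ as $\lambda(\fI(x_1)-\fI(x_2))$), but your subtraction argument is the same idea and needs nothing beyond linearity.
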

\noindent Here, \ref{DR} cannot be dropped, even in the special case in which $\III$ is finite and $V_i = \RRR$ for each $i\in \III \cup \{0\}$, by \citet[Cor. 1]{pF1984}.

In Theorem~\ref{T:infinitaryH-3},  
given \ref{DR}, we can choose $g_0$ to be a pervasive representation of $\Geq_0$. (Choose $f_0$ to be pervasive, using the construction before Lemma~\ref{L:coconvex0}, and then apply the last statement of Theorem \ref{T:infinitaryH-3}.) Thus we assume that $g_0$ and $g'_0$ are pervasive in the following result.
\begin{prop}\label{prop:uniqueness3} Assume that for $i\in\III\cup\{0\}$ the $\Geq_i$ have co-convex representations $f_i$ such that DR holds. Suppose $g_i\colon X\to V$ and $S\colon \ProdI V\to V$ satisfy the conditions \ref{part:H-3a}--\ref{part:H-3c} of Theorem \ref{T:infinitaryH-3}, as do
$g'_i\colon X\to V'$ and $S'\colon \ProdI V'\to V'$.
Assume also that $g_0$ and $g'_0$ are pervasive. 
Then  there exists a unique linear order isomorphism $L\colon V\to V'$ and unique constants $b_i\in V'$ such that,
      for all $i\in\III\cup\{0\}$,  $g'_i=Lg_i+b_i$.
      Moreover, $S'(v)=LS((L\inv v_i)_{i\in\III})$ for all $v\in \SumI V'+ 
      \Span(\gI'(X)-\gI'(X))$.
 \end{prop}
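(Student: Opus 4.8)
The plan is to funnel everything through the pervasive representation $\fI$, build $L$ from the pair $(g_0,g_0')$ via Lemma~\ref{L:coconvex0}, and then use the fact that both $S$ and $S'$ extend summation to propagate $L$ to the remaining coordinates one index at a time. First note that, since $g_i,S$ satisfy \ref{part:H-3a}--\ref{part:H-3c}, Theorem~\ref{T:infinitaryH-3} gives \ref{P1}--\ref{P4}, and by the characterisation of \ref{DR} recalled in footnote~\ref{fn:DR} each $f_i$ ($i\in\III$) is pervasive. I would begin with a one-sided strengthening of Lemma~\ref{L:coconvex0}: if $f\colon X\to U$ is a \emph{pervasive} representation of a preorder and $g\colon X\to U'$ is a co-convex representation of the same preorder into a partially ordered vector space, then $g=Mf+c$ for a unique linear $M$ and constant $c$. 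Indeed $f(x)=f(y)$ forces $x\Eq y$, hence $g(x)=g(y)$ by antisymmetry of the order on $U'$, so $g$ factors as $g=\phi\circ f$ with $\phi$ defined on the convex set $f(X)$; the graph of $\phi$ is the convex joint range, so $\phi$ is affine and extends uniquely to an affine map on $\aff(f(X))=U$. Applying this to $(f_i,g_i)$ and to $(f_i,g_i')$ (co-convex by the two instances of \ref{part:H-3c}, pervasive in the first slot) yields linear $M_i\colon V_i\to V$, $M_i'\colon V_i\to V'$ and constants with
\[ g_i=M_if_i+c_i,\qquad g_i'=M_i'f_i+c_i'\qquad(i\in\III). \]
In particular $\gI=\hat M\fI+c$ with $\hat M((v_i)_i):=(M_iv_i)_i$, so $g_0=S\gI=S\hat M\fI+Sc$ is an affine function of $\fI$, and likewise for $g_0'$.

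Because $g_0$ and $g_0'$ are affine images of $\fI$ and $\fI(X)$ is convex, the joint range $(g_0,g_0')(X)$ is convex; thus $g_0,g_0'$ are co-convex, and they are pervasive by hypothesis and both represent $\Geq_0$. Lemma~\ref{L:coconvex0} then supplies a unique linear order isomorphism $L\colon V\to V'$ and a unique $b_0\in V'$ with $g_0'=Lg_0+b_0$.

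To propagate $L$, substitute the factorizations into $g_0'=Lg_0+b_0$: this equates two affine functions of $\fI$, and since $\fI$ is pervasive by \ref{DR} their linear parts agree, giving
\[ S'((M_i'w_i)_i)=L\,S((M_iw_i)_i)\qquad\text{for all }(w_i)_i\in V_\III. \]
Call this $(\star)$. Evaluating $(\star)$ on $1_jw$ (the vector with $w\in V_j$ in coordinate $j$ and $0$ elsewhere) and using that $S,S'$ \emph{extend summation}, so that $S_j=\id_V$ and $S_j'=\id_{V'}$, collapses both sides to $M_j'w=LM_jw$; hence $M_j'=LM_j$. Consequently $g_j'-Lg_j=(M_j'-LM_j)f_j+(c_j'-Lc_j)=c_j'-Lc_j=:b_j$ is constant, so $g_i'=Lg_i+b_i$ for every $i\in\III\cup\{0\}$. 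Uniqueness of $L$ and $b_0$ is Lemma~\ref{L:coconvex0}, and given $L$ each $b_i=g_i'(x)-Lg_i(x)$ is forced.

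For the conjugation formula I would verify $S'(v)=LS((L\inv v_i)_i)$ on two subspaces whose sum is the stated domain. Writing $W_i:=\Span(g_i(X)-g_i(X))=M_i(V_i)$ (using pervasiveness of $f_i$) and $W_i':=\Span(g_i'(X)-g_i'(X))=LW_i$, the tuple $(M_iw_i)_i$ ranges over $\ProdI W_i$ as $(w_i)_i$ ranges over $V_\III$, so $(\star)$ together with $M_i'=LM_i$ rewrites, via $v_i=Lu_i$, as $S'((v_i)_i)=LS((L\inv v_i)_i)$ for all $(v_i)_i\in\ProdI W_i'\supseteq\Span(\gI'(X)-\gI'(X))$. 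On $\SumI V'$ the same identity holds because, for finitely supported $v$, the extends-summation property reduces both sides to $\sum_j v_j$. As both sides are linear in $v$, the identity extends to $\SumI V'+\Span(\gI'(X)-\gI'(X))$, as required. I expect the decisive step to be the propagation of $L$: the maps $S,S'$ are highly non-injective, so $(\star)$ alone does not determine the $M_i'$, and the crucial leverage is that extends-summation lets one isolate each coordinate by testing $(\star)$ on single-index vectors; a secondary technical point is the one-sided factorization lemma, which is what allows the possibly non-pervasive $g_i$ to be handled uniformly through the pervasive $f_i$.
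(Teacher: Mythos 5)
Your proof is correct and follows essentially the same route as the paper's: obtain $L$ from Lemma~\ref{L:coconvex0} applied to the co-convex, pervasive pair $(g_0,g'_0)$; factor each $g_i,g'_i$ affinely through the pervasive $f_i$ (the paper does this via Lemma~\ref{L:coconvex0} after renormalizing $g_i$, you via a one-sided consequence of Theorem~\ref{t:AbsHarsanyi}); propagate $L$ coordinatewise using \ref{DR} and the extends-summation property; and verify the conjugation identity separately on $\SumI V'$ and on $\Span(\gI'(X)-\gI'(X))$. The only cosmetic differences are that you derive co-convexity of $(g_0,g'_0)$ from the affine factorization through $\fI$ rather than from \ref{P1}, and that you isolate coordinates by evaluating the operator identity $(\star)$ on single-index vectors rather than by chasing explicit elements $y,z\in X$ supplied by \ref{DR} and Lemma~\ref{L:Vspan}.
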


\section{Applications}\label{S:applications}
We now present two applications in which co-convex  representations naturally arise.  

\subsection{Preference aggregation}\label{S:harsanyi}

Let us assume that the $\Geq_i$ are preference relations, with $\Geq_0$ that of the social observer. In this context, co-convex representations arise naturally from the main expected utility axiom of strong independence 
(see note~\ref{fn:SI}).

\begin{lemma}\label{L:SI} Suppose $X$ is a convex set. 
\begin{enumerate}[label=(\roman*), nosep]
\item \label{part:SIa} A preorder on $X$ satisfies strong independence if and only if it has a mixture-preserving representation. The representation may be chosen to be pervasive.
\item \label{part:SIb} Any family 
$\{f_i \colon X \to V_i\}_{i\in\II}$
of mixture-preserving functions is co-convex. 
\end{enumerate}
\end{lemma}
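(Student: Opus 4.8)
The plan is to treat the two parts separately; part~\ref{part:SIb} and the ``if'' direction of part~\ref{part:SIa} are routine, so the construction in the ``only if'' direction of part~\ref{part:SIa} carries all the weight. For part~\ref{part:SIb}, observe that if each $f_i$ is mixture-preserving then so is the joint map $F=(f_i)_{i\in\II}$, whence $\alpha F(x)+(1-\alpha)F(x')=F(\alpha x+(1-\alpha)x')\in F(X)$ for all $x,x'\in X$ and $\alpha\in[0,1]$; thus $F(X)$ is convex, which is exactly co-convexity. For the ``if'' direction of part~\ref{part:SIa}, given a mixture-preserving representation $f\colon X\to V$, mixture preservation turns the statement $\alpha x+(1-\alpha)z\Geq\alpha y+(1-\alpha)z$ into $\alpha f(x)+(1-\alpha)f(z)\Geq_V\alpha f(y)+(1-\alpha)f(z)$, and linearity of $\Geq_V$ (with scalar $\alpha>0$ and translation by $(1-\alpha)f(z)$) reduces this to $f(x)\Geq_V f(y)$, i.e. to $x\Geq y$; hence strong independence holds.

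For the ``only if'' direction I would take $X$ to be a convex subset of a real vector space $W$ (as in the paper's examples), fix a basepoint $x_0\in X$, and set $f(x)=x-x_0$ and $V=\Span(X-x_0)$, so that $f$ is visibly mixture-preserving and pervasive. The order is then supplied by the candidate cone
\[
C=\Set{\lambda(u-w)}{\lambda\ge 0,\ u,w\in X,\ u\Geq w}\subseteq V,
\]
with $v\Geq_V v'\iff v-v'\in C$. The first thing to verify is that $C$ is a convex cone: closure under nonnegative scaling is immediate, and for closure under addition, given $u_1\Geq w_1$ and $u_2\Geq w_2$, two applications of strong independence together with transitivity give $\beta u_1+(1-\beta)u_2\Geq\beta w_1+(1-\beta)w_2$ for every $\beta\in[0,1]$, so any sum $\lambda_1(u_1-w_1)+\lambda_2(u_2-w_2)$ is again a single generator of $C$ (take $\beta=\lambda_1/(\lambda_1+\lambda_2)$). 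This makes $\Geq_V$ a linear preorder.

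It then remains to check that $f$ represents $\Geq$. The forward direction is immediate, since $x\Geq y$ makes $f(x)-f(y)=x-y$ a generator of $C$. The hard part, and where I expect the only genuine difficulty, is the converse: from $x-y=\lambda(u-w)$ with $u\Geq w$ and $\lambda>0$ (the case $\lambda=0$ being trivial) one must recover $x\Geq y$. The vector identity $x+\lambda w=y+\lambda u$ normalizes to
\[
\tfrac{1}{1+\lambda}y+\tfrac{\lambda}{1+\lambda}u=\tfrac{1}{1+\lambda}x+\tfrac{\lambda}{1+\lambda}w .
\]
Applying strong independence to $u\Geq w$ (mixing in $y$ with weight $\tfrac{1}{1+\lambda}$) and then substituting this identity yields $\tfrac{1}{1+\lambda}x+\tfrac{\lambda}{1+\lambda}w\Geq\tfrac{1}{1+\lambda}y+\tfrac{\lambda}{1+\lambda}w$; cancelling the common summand $\tfrac{\lambda}{1+\lambda}w$ by a final use of strong independence delivers $x\Geq y$. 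Matching the mixture weights to the vector identity is the one place needing care.

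Since $\Geq_V$ is a priori only a preorder while we want values in a partially ordered vector space, I would finish by quotienting out the subspace $C\cap(-C)$: the quotient map $\pi\colon V\to V/(C\cap(-C))$ is linear and mixture-preserving and sends $C$ to a salient cone, so the induced order is antisymmetric. By the remark that a quotient map represents its associated preorder, $\pi\circ f$ is a mixture-preserving representation of $\Geq$ with values in a partially ordered vector space; and it remains pervasive because $\pi$ is surjective, giving $V/(C\cap(-C))=\Span\!\big((\pi f)(X)-(\pi f)(X)\big)$. This establishes part~\ref{part:SIa}.
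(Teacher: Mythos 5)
Your proof is correct and follows essentially the same route as the paper: the paper's argument for part~\ref{part:SIa} invokes its Lemma~\ref{L:convex cone} (the cone $C=\{\lambda(u-w):\lambda>0,\ u\Geq w\}$ is convex, and the inclusion of $X$ into $\Span X$ represents $\Geq$ for the induced linear preorder), then passes to the quotient by $C\cap(-C)$ and finally translates by a basepoint to obtain pervasiveness, while you inline the cone lemma's proof verbatim and build in the basepoint translation from the start. Part~\ref{part:SIb} and the ``if'' direction are handled identically in both.
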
 
Given that the $\Geq_i$ satisfy strong independence, 
pervasive (and, by Lemma~\ref{L:coconvex0}, essentially unique) mixture-preserving representation 
$f_i$ may be chosen for each of them, and these representations are automatically co-convex. 
The results of section \ref{S:main} therefore apply.%
\footnote{%
For applications of Theorem \ref{T:infinitaryH-2}, it may be useful to note that, in this context, \ref{DR} is equivalent to the following domain richness condition, stated directly in terms of $X$ and the $\Geq_i$: 
\begin{enumerate}[label=\textup{DR$'$}, nosep]
\item \label{DR'} Suppose given $x_i,y_i\in X$ and $\ll_i>0$, for each $i\in\III$. Then there exist $z,w\in X$ and $\ll>0$ such that, for all $i\in \III$, 
 $\frac{\ll}{\ll+\ll_i} w +
 \frac{\ll_i}{\ll+\ll_i}
 x_i\Eq_i 
 \frac{\ll}{\ll+\ll_i}z+
 \frac{\ll_i}{\ll+\ll_i}
y_i$. 
\end{enumerate}%
(Heuristically: the difference in value 
for $i$ between $z$ and $w$ is $\ll_i/\ll$ times the difference between $x_i$ and $y_i$.)}
Here, for example, is a corollary of Theorem \ref{T:infinitaryH-3}. 
\begin{thm}\label{T:preferenceH}
Suppose $X$ is convex, 
and for all 
$i \in \III \cup \{0\}$, 
$\Geq_i$ satisfies strong independence. 
Assume \ref{P1}--\ref{P4}. Then there exists a partially ordered vector space $V$, mixture-preserving representations 
$f_i\colon X\to V$ 
of the $\Geq_i$, and a linear map $S\colon \ProdI V\to V$ that extends summation, such that 
$f_0= S\fI$.
\end{thm}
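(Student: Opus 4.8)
The plan is to derive Theorem~\ref{T:preferenceH} directly from Theorem~\ref{T:infinitaryH-3}, the only genuine work being to upgrade the representations that the latter supplies to \emph{mixture-preserving} ones. First I would invoke Lemma~\ref{L:SI}\ref{part:SIa} to choose, for each $i\in\III\cup\{0\}$, a pervasive mixture-preserving representation $f_i\colon X\to V_i$ of $\Geq_i$; by Lemma~\ref{L:SI}\ref{part:SIb} the family $\{f_i\}_{i\in\III\cup\{0\}}$ is co-convex, which is exactly the standing hypothesis of Theorem~\ref{T:infinitaryH-3}. Since \ref{P1}--\ref{P4} are assumed, that theorem yields a partially ordered vector space $V$, representations $g_i\colon X\to V$ of the $\Geq_i$, and a linear map $S\colon\ProdI V\to V$ with $g_0=S\gI$ and $S$ extending summation (parts \ref{part:H-3a} and \ref{part:H-3b}), all of $\fI,f_0,\gI,g_0$ being together co-convex (part \ref{part:H-3c}). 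Every ingredient demanded by the conclusion is then in hand except that the $g_i$ are not yet known to be mixture-preserving.

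The crux is to show that each $g_i$ is in fact mixture-preserving, and here I would lean on the rigidity of co-convex pervasive representations. Fix $i\in\III\cup\{0\}$. The pair $(f_i,g_i)$ is a coordinate projection of the jointly co-convex map $(\fI,f_0,\gI,g_0)$, so its range is the linear image of a convex set and hence convex; thus $f_i$ and $g_i$ are co-convex. Both represent $\Geq_i$; the map $f_i$ is pervasive by construction, and $g_i$, viewed as a map onto $V_i':=\Span(g_i(X)-g_i(X))\subseteq V$, is pervasive as well. Lemma~\ref{L:coconvex0} therefore furnishes a linear order isomorphism $L_i\colon V_i\to V_i'$ and a constant $b_i\in V_i'$ with $g_i=L_if_i+b_i$. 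Because $f_i$ is mixture-preserving and $L_i$ is linear, $g_i$ is an affine function of a mixture-preserving map, and so is itself mixture-preserving.

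Renaming the $g_i$ as the representations $f_i$ of the statement, I obtain mixture-preserving representations of all the $\Geq_i$ into the single space $V$, together with a summation-extending linear $S$ satisfying $f_0=S\fI$, as required. The main obstacle is precisely the passage to mixture-preservation just described: one must notice that shrinking the codomain of $g_i$ to $V_i'$ makes it pervasive without disturbing co-convexity, since co-convexity depends only on the joint range and is unaffected by the choice of ambient space, and one must check that the identity $g_i=L_if_i+b_i$ genuinely transfers mixture-preservation from $f_i$ to $g_i$. The strict positivity of $S$ asserted in Theorem~\ref{T:infinitaryH-3} plays no role in the present, weaker conclusion, so it may simply be discarded.
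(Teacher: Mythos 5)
Your proposal is correct and follows the same overall route as the paper: choose co-convex mixture-preserving representations via Lemma~\ref{L:SI}, feed them into Theorem~\ref{T:infinitaryH-3}, and then verify that the resulting $g_i$ are mixture preserving. The only divergence is in that final verification. The paper argues directly: by part~\ref{part:H-3c}, $g_i$ and $f_i$ are co-convex, so there is $z\in X$ with $g_i(z)=\aa g_i(x)+(1-\aa)g_i(y)$ and $f_i(z)=\aa f_i(x)+(1-\aa)f_i(y)$; mixture preservation of $f_i$ gives $z\Eq_i \aa x+(1-\aa)y$, and since $g_i$ represents $\Geq_i$ the claim follows. You instead invoke Lemma~\ref{L:coconvex0} to obtain $g_i=L_if_i+b_i$ and push mixture preservation through the affine map; this is valid and in fact mirrors the explicit construction $g_i=L_if_i$ inside the proof of Theorem~\ref{T:infinitaryH-3}, which the paper itself notes would suffice. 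One small repair is needed in your pervasiveness step: $g_i(X)$ need not be contained in $V_i'=\Span(g_i(X)-g_i(X))$, only in a coset of it, so $g_i$ is not literally a pervasive map into $V_i'$. Apply Lemma~\ref{L:coconvex0} to $g_i-g_i(x_0)$ for a fixed $x_0\in X$ (which is pervasive into $V_i'$ and still co-convex with $f_i$, since translation preserves convexity of the joint range) and absorb $g_i(x_0)$ into the constant $b_i$; nothing downstream changes.
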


In this context, our results generalize the celebrated social aggregation theorem of \citet{jH1955}.  In his framework, 
the population $\III$ is finite, 
and $X$ is the set of simple probability measures on a given set of social outcomes. 
In contrast, we allow $\III$ to be infinite, and we allow $X$ to be an arbitrary convex set, which may in particular be infinite dimensional.%
\footnote{The dimension of $X$ is, by definition, the dimension of the smallest affine space that contains it; equivalently, the dimension of the vector space $\Span(X-X)$.} 
This allows for a wide range of models of uncertainty. 
For example, in the setting of objective risk, 
$X$ may be any convex set of probability measures on a measurable space;  in the setting of objective risk and subjective uncertainty, it may be the set of Anscombe-Aumann acts;  
in the setting of subjective uncertainty, it may be the set of Savage acts when those are equipped with convex structure,  as in for example \citet{GMMS2003};%
\footnote{As is well known, however, allowing for subjective uncertainty is likely to lead to impossibility results; see the end of section~\ref{S:literature}.}
it may be a set of simple lotteries with nonstandard probabilities; 
or it may be an arbitrary 
mixture space, 
and hence isomorphic to a convex subset of a vector space as noted by \citet{mH1954}.%
\footnote{For a discussion of the relationship between mixture spaces in the sense of Hausner and mixture \emph{sets} in the sense of \citet{HM1953}, see \citet{pM2001}; Mongin constructs a natural map from each mixture set onto a convex set, and shows that it is an isomorphism if and only if the mixture set is a mixture space (or in his terminology `non-degenerate').}

One version of Harsanyi's result is that, if \ref{P1}--\ref{P3} hold, as well as strong independence, continuity and completeness for each $\Geq_i$, then the social preorder can be represented by the sum of real-valued mixture-preserving individual utility functions.%
\footnote{For discussion of different variations of Harsanyi's result, see e.g. \citet{jW1993}.
Note that Harsanyi's result is usually stated in terms of a \emph{weighted} sum of individual utility functions; but one can absorb the weights into the utility functions to get an unweighted sum.
Of course, in both the weighted and unweighted versions, the representation of the social preorder is a linear transformation of the profile of individual utilities. 
Roughly speaking, Theorems \ref{T:infinitaryH} and \ref{T:infinitaryH-2} generalize the formulation with weighted sums, while Theorem~\ref{T:infinitaryH-3} and its corollaries Theorems~\ref{T:preferenceH} 
and \ref{T:expectH} 
generalize the unweighted version.
}
Theorem~\ref{T:preferenceH} shows that essentially the same conclusion holds without assuming continuity or completeness, but assuming \ref{P4} (which is vacuous given completeness),  and allowing the utility functions to be vector-valued. 
In light of the discussion after Example \ref{ex:lex}, 
and especially note~\ref{fn:matrix}, 
the conclusion of  Theorem~\ref{T:preferenceH} may be made visibly closer to Harsanyi's by taking utility values to be matrices of real numbers; Harsanyi's conclusion is then the case where the matrices are one-by-one.

The only caveat is that, when $\III$ is infinite, the linear map $S$ used to combine the individual utility functions 
$f_i$ 
does not simply sum them up. However, we still get separability, or `finite additivity'. 
If $\III=\JJJ\sqcup\KKK$, then $V_\III=V_\JJJ \times V_\KKK$; denoting restrictions in the obvious way, we have 
$S\fI = S_\JJJ f_\JJJ+ S_\KKK f_\KKK$. 
So much follows from the linearity of $S$; since $S$ also extends summation, $S_\JJJ$ is simply the summation map when $\JJJ$ is finite.

Our results also yield fully additive representations in some important cases when $\III$ is infinite.   
Suppose, for example, that in the context of Theorem~\ref{T:preferenceH}, each element of $X$ is a gamble in which only finitely many people from $\III$ have any chance to exist. For each $i\in\III$ and $x,x'\in X$, it is natural to suppose that $x\Eq_i x'$ if $i$ is certain not to exist in either one; thus 
$f_i(x)=f_i(x')$.  By subtracting a constant from   $f_i$,  we may assume that  $f_i(x)=f_i(x')=0$.   (Note that renormalizing $f_i$ in this way changes neither the fact that $f_i$ represents $\Geq_i$ nor the fact that $S\fI$ represents $\Geq_0$.)  The upshot of this construction is that,  for each $x\in X$, $f_i(x)=0$ for all but finitely many $i\in\III$. We can therefore write $S\fI=\sum_{i\in\III} f_i$, a fully additive representation of the social preorder.

Suppose now that $X$ is a convex set 
of probability measures on a measurable space $Y$.  
It is natural to ask whether 
the representations constructed in Theorem~\ref{T:preferenceH},  for example,  can be written as integrals over $Y$, in the style of expected utility theory. 

To make the question precise, suppose we are given a vector space $V$ and a separating vector space $V'$ of linear functionals $V\to \RRR$.%
\footnote{Endowing $V$ with the weak topology 
with respect to $V'$ makes it a locally convex topological vector space whose dual is $V'$ \citep[3.10]{wR1991}.}
A function  $U\colon Y\to V$ is 
\emph{weakly $X$-integrable with respect to $V'$} 
if there exists $f\colon X \to V$ such that $\int_Y \Lambda \circ U \, \mathrm{d} \mu = \Lambda\circ f(\mu)$ for all $\Lambda \in V', \mu \in X$.
In particular, every $\Lambda\circ U$ must be 
Lesbesgue integrable against every $\mu\in X$. 
The {\em Pettis} or {\em weak} integral is defined by setting $\int_Y U \, \mathrm{d} \mu \coloneqq f(\mu)$. 
When $f \colon X\to V$ can be written in this form, 
we say that $f$ is {\em expectational}. 
The question, then, is whether the representations $f_i$, including $f_0$, can be chosen to be expectational functions.

In the most common case, where  $X$ is a convex set of \emph{finitely supported} probability measures on a measurable space $Y$ with measurable singletons,  there is a straightforward positive answer: \emph{any} mixture-preserving, vector-valued function on $X$ is expectational (independently of how $V'$ is chosen).  However, in the general case, it turns out that we need to consider representations with values in {\em preordered} vector spaces (cf.~Remark \ref{rem:preorder}). 
Indeed, the following result \citep[Lemma 4.3]{MMT2018} contrasts with Lemma \ref{L:SI}\ref{part:SIa}.
\begin{lemma}\label{L:expectational}
Let $X$ be an arbitrary convex set of probability measures. A preorder on $X$ satisfies strong independence if and only if it has an expectational (and not merely mixture-preserving) representation with values in a preordered (but not necessarily partially ordered) vector space.
\end{lemma}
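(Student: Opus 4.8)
The plan is to prove the two directions separately; the right-to-left (backward) direction is routine, and all the content lies in constructing the expectational representation from strong independence. For the backward direction, suppose $f\colon X\to V$ is expectational, so that $\Lambda\circ f(\mu)=\int_Y \Lambda\circ U\,\mathrm{d}\mu$ for all $\Lambda$ in a separating family $V'$ and all $\mu\in X$. Since the Lebesgue integral is linear in the measure, for $\alpha\in[0,1]$ and $\mu,\nu\in X$ one has $\Lambda\circ f(\alpha\mu+(1-\alpha)\nu)=\alpha\,\Lambda\circ f(\mu)+(1-\alpha)\,\Lambda\circ f(\nu)$ for every $\Lambda\in V'$; as $V'$ separates points, $f$ is mixture preserving. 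A mixture-preserving representation forces strong independence: using mixture preservation and the linearity of $\Geq_V$ (adding and subtracting $(1-\alpha)f(z)$ and rescaling by $\alpha>0$) one checks $x\Geq y\iff\alpha x+(1-\alpha)z\Geq\alpha y+(1-\alpha)z$. This is just the easy half of Lemma~\ref{L:SI}\ref{part:SIa}, whose argument goes through verbatim for merely preordered targets.

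For the forward direction, assume $\Geq$ satisfies strong independence. Let $\GG$ be the vector space of bounded measurable functions $Y\to\RRR$, so that $\int_Y g\,\mathrm{d}\mu$ is defined for every $g\in\GG$ and $\mu\in X$, and take $V\coloneqq\GG^*$, the algebraic dual. I would define $U\colon Y\to V$ by letting $U(y)=\delta_y$ be the evaluation functional $g\mapsto g(y)$, and define $f\colon X\to V$ by $f(\mu)\colon g\mapsto\int_Y g\,\mathrm{d}\mu$. With $V'\coloneqq\Set{\mathrm{ev}_g}{g\in\GG}$, where $\mathrm{ev}_g(\xi)=\xi(g)$, the family $V'$ separates $V$, and $\mathrm{ev}_g\circ U=g$ is measurable with $\int_Y \mathrm{ev}_g\circ U\,\mathrm{d}\mu=\int_Y g\,\mathrm{d}\mu=\mathrm{ev}_g(f(\mu))$; hence $f$ is the weak integral of $U$ and is expectational. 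Because the indicators of measurable sets lie in $\GG$ and separate probability measures, $f$ is injective, and it is plainly mixture preserving.

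It then remains to equip $V=\GG^*$ with a linear preorder under which $f$ represents $\Geq$. Mimicking the proof of Lemma~\ref{L:SI}\ref{part:SIa}, I would set
\[
C=\Set{\lambda\,\bigl(f(\mu)-f(\nu)\bigr)}{\lambda\ge 0,\ \mu,\nu\in X,\ \mu\Geq\nu}\subseteq V,
\]
and declare $\xi\Geq_V\xi'\iff\xi-\xi'\in C$. Strong independence is exactly what makes $C$ a convex cone: closure under positive scaling is immediate, and closure under addition follows because $\mu_1\Geq\nu_1$ and $\mu_2\Geq\nu_2$ imply $t\mu_1+(1-t)\mu_2\Geq t\nu_1+(1-t)\nu_2$, so a sum $\lambda_1(f(\mu_1)-f(\nu_1))+\lambda_2(f(\mu_2)-f(\nu_2))$ rewrites, via mixture preservation, as $\lambda\bigl(f(\mu)-f(\nu)\bigr)$ with $\mu,\nu$ the appropriate mixtures and $\mu\Geq\nu$. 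Thus $\Geq_V$ is a linear preorder. Finally I would verify $f(\mu)-f(\nu)\in C\iff\mu\Geq\nu$: the direction $\Leftarrow$ is immediate, and for $\Rightarrow$, injectivity of $f$ converts an identity $f(\mu)-f(\nu)=\lambda(f(\mu_1)-f(\nu_1))$ into the signed-measure identity $\mu-\nu=\lambda(\mu_1-\nu_1)$, whence strong independence (mixing both sides with a common measure, as in the well-definedness check) yields $\mu\Geq\nu$.

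The main obstacle, and the reason one cannot in general demand a partially ordered target, is precisely this order-transport step. The cone $C$ is well defined only because strong independence guarantees that $\mu\Geq\nu$ depends on the pair $(\mu,\nu)$ solely through the signed measure $\mu-\nu$; checking that $C$ is genuinely a convex cone and recovers $\Geq$ is where all the work lies. Having no antisymmetry, I would make no attempt to quotient $\GG^*$ by $C\cap(-C)$, since such a quotient would in general collapse the Dirac functionals $\delta_y=U(y)$ and thereby destroy the integral representation — which is the concrete reason the representation must be allowed to be merely preordered. The other design choice carrying weight is taking $V=\GG^*$ rather than $\Span(f(X)-f(X))$: only in the full dual is $U(y)=\delta_y$ guaranteed to land in the codomain.
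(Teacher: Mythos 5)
Your proof is correct, but note that the paper itself contains no proof of Lemma~\ref{L:expectational} to compare against: it is imported verbatim from \citet[Lemma 4.3]{MMT2018}. Judged on its own terms, your argument is sound and is essentially the natural adaptation of the paper's own machinery (Lemma~\ref{L:SI}\ref{part:SIa} via Lemma~\ref{L:convex cone}) to the expectational setting. The backward direction is handled exactly as in the easy half of Lemma~\ref{L:SI}\ref{part:SIa}, and you are right that no antisymmetry is needed there. In the forward direction, your two design choices are the ones that carry the content: taking $V=\GG^*$ for $\GG$ the bounded measurable functions, so that the evaluations $\delta_y$ and the integration functionals $\mu\mapsto\int_Y(\cdot)\,\mathrm{d}\mu$ live in a single space with $V'$ the canonical image of $\GG$ in $\GG^{**}$ (this is what upgrades ``mixture preserving'' to ``expectational''); and refusing to quotient by $C\cap(-C)$, which is precisely why the target is only preordered, in line with Remark~\ref{rem:preorder}. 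Two small points you should make explicit in a written-up version: the step from $f(\mu)-f(\nu)=\lambda\bigl(f(\mu_1)-f(\nu_1)\bigr)$ to $\mu-\nu=\lambda(\mu_1-\nu_1)$ uses that integration against all indicators separates \emph{signed} measures, not just the injectivity of $f$ on $X$; and the case $\lambda=0$ in the verification that $f(\mu)-f(\nu)\in C$ implies $\mu\Geq\nu$ should be dispatched via injectivity and reflexivity. Neither is a gap, just a gloss. Your closing diagnosis of why the partially ordered quotient fails is heuristically right but slightly imprecise: the obstruction is not that the quotient collapses the $\delta_y$ per se, but that the functionals $\mathrm{ev}_g$ that vanish on $C\cap(-C)$ need not form a separating family on the quotient, so no witnessing $V'$ survives.
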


The next result uses this to elaborate on Theorem~\ref{T:infinitaryH-3}.

\begin{thm}\label{T:expectH}
Suppose $X$ is a convex set of probability measures on a measurable space $Y$, 
and, for all 
$i \in \III \cup \{0\}$, 
$\Geq_i$ satisfies strong independence. 
Assume \ref{P1}--\ref{P4}. 
Then there exists a preordered vector space $V$ equipped with   
a separating vector space $V'$ of linear functionals;  
for $i \in \III \cup \{0\}$, 
functions $U_i \colon Y \to V$ 
that are  weakly $X$-integrable  with respect to $V'$, 
such that $\mu \mapsto \int_Y U_i \,\mathrm{d}\mu$ represents $\Geq_i$; 
and a linear map $S \colon\ProdI V\to V$ that extends summation 
such that $U_0 = S\UI$.
\end{thm}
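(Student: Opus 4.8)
The plan is to reduce the problem to the common-space, summation form that is already available for mixture-preserving representations, and then to upgrade those representations to expectational ones by routing them through a single \emph{universal integration map} on $X$. The two things I must protect through this upgrade are the defining relation $f_0=S\fI$ and the existence of one separating family of functionals that makes every $U_i$ weakly integrable simultaneously.

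First I would invoke Theorem~\ref{T:preferenceH}. Since each $\Geq_i$ satisfies strong independence, Lemma~\ref{L:SI}\ref{part:SIa} supplies mixture-preserving representations of the $\Geq_i$, and Lemma~\ref{L:SI}\ref{part:SIb} shows they are co-convex, so Theorem~\ref{T:preferenceH} applies and yields a single partially ordered vector space $W$, mixture-preserving representations $f_i\colon X\to W$ of the $\Geq_i$, and a linear $S\colon\ProdI W\to W$ extending summation with $f_0=S\fI$. This already furnishes one space carrying all the preorders and the required summation map; the only remaining task is to realize each $f_i$ as a weak integral over $Y$.

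Next I would introduce the universal integration map. Let $\mathcal{S}$ be the space of simple real-valued functions on $Y$, let $f^*\colon X\to\mathcal{S}^*$ send $\mu$ to the functional $u\mapsto\int_Y u\,\mathrm{d}\mu$, and let $U^*\colon Y\to\mathcal{S}^*$ be $U^*(y)=\mathrm{ev}_y$. Regarding $\mathcal{S}$ as a separating family on $\mathcal{S}^*$, a direct check gives $\int_Y U^*\,\mathrm{d}\mu=f^*(\mu)$; moreover $f^*$ is injective on $X$, so (up to a harmless additive constant) each $f_i$ factors as $f_i=\hat f_i\circ f^*$ with $\hat f_i$ linear on $\Span(f^*(X)-f^*(X))$. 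Choosing linear extensions $\tilde f_i\colon\mathcal{S}^*\to W$ of $\hat f_i$ for $i\in\III$, setting $\tilde f_0\coloneqq S\circ(\tilde f_i)_{i\in\III}$ (which extends $\hat f_0$ because $f_0=S\fI$), and defining $U_i\coloneqq\tilde f_i\circ U^*\colon Y\to W$, I would verify that for every functional $\Lambda$ of the form $\Lambda\circ\tilde f_i=\langle\,\cdot\,,u\rangle$ with $u\in\mathcal{S}$ one has $\Lambda\circ U_i=u$ and $\int_Y\Lambda\circ U_i\,\mathrm{d}\mu=\Lambda(f_i(\mu))$; hence $\int_Y U_i\,\mathrm{d}\mu=f_i(\mu)$ represents $\Geq_i$. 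Since $(\tilde f_i)_{i\in\III}\circ U^*=\UI$, we get $U_0=\tilde f_0\circ U^*=S\UI$, exactly as required.

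The hard part is weak integrability, and it is precisely where the \emph{preordered} (rather than partially ordered) target enters. To have a single separating family $V'$ making all $U_i$ weakly $X$-integrable, every $\Lambda\in V'$ must satisfy $\Lambda\circ\tilde f_i\in\mathcal{S}$ (viewed inside $\mathcal{S}^{**}$) for all $i\in\III\cup\{0\}$, so that each $\Lambda\circ U_i$ is a genuine simple function on $Y$ whose $\mu$-integral returns $\Lambda(f_i(\mu))$. The functionals on $W$ that factor through simple functions in this way need not separate the points of $W$, so in general one must pass to the quotient of $W$ by the subspace they annihilate, which makes the quotient merely preordered, in line with Lemma~\ref{L:expectational}. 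The main obstacle is therefore to perform this quotient so that (i) the descended functionals still separate points and still recover each order relation on $f_i(X)$, and (ii) $S$ and all the $U_i$ descend consistently, keeping $S$ a summation-extending map. I would discharge this by importing the construction behind Lemma~\ref{L:expectational}, applied uniformly to the common space $W$, and checking its compatibility with $S$.
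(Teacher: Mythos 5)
There is a genuine gap, and it sits exactly where you locate it but do not close it: the construction of a \emph{single} separating family $V'$ that makes every $U_i$ weakly $X$-integrable while preserving the representation property. Your route first builds the partially ordered common space $W$ via Theorem~\ref{T:preferenceH} and only then tries to retrofit expectationality. But the functionals $\Lambda$ on $W$ for which each $\Lambda\circ U_i$ is a genuine $\mu$-integrable function with $\int_Y \Lambda\circ U_i\,\mathrm{d}\mu=\Lambda(f_i(\mu))$ are precisely those $\Lambda$ for which $\Lambda\circ f_i$ is expectational, and a real-valued mixture-preserving function on a convex set of probability measures need not be expectational. Nothing in the construction of $W$ (an abstract quotient of cones) guarantees that this subfamily of functionals determines the order of $W$ on $\Span(f_i(X)-f_i(X))$, nor even that it is large enough to separate the relevant points. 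Consequently, quotienting $W$ by the common kernel of these functionals can destroy the property that the composite still represents $\Geq_i$: the pushed-forward cone is $C+\ker q$, so $qf_i(\mu)\Geq qf_i(\nu)$ can hold without $\mu\Geq_i\nu$. Your proposed remedy---``importing the construction behind Lemma~\ref{L:expectational}, applied uniformly to $W$''---is not a well-defined operation: that lemma builds a preordered space directly from the preorder on $X$, not from a previously fixed partially ordered representation, so it cannot be ``applied to $W$'' to repair a family of functionals that may already be too poor.

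The paper avoids this by reversing the order of construction, and the reversal is essential rather than cosmetic. It starts from Lemma~\ref{L:expectational}, which hands you expectational representations $f_i(\mu)=\int_Y u_i\,\mathrm{d}\mu$ into \emph{preordered} spaces $V_i$, each already equipped with a separating family $V'_i$ witnessing weak integrability. It passes to the partially ordered quotients $\ov V_i=V_i/{\Eq_{V_i}}$ only in order to apply Theorem~\ref{T:infinitaryH}\ref{part:infinitaryH P4}, obtaining $\ov L\colon \ov V_\III\to\ov V$; it then pulls the resulting order back to the \emph{unchanged} vector space $V_\III$, so that the map $L$ is the set-theoretic identity and $L_i$ the natural inclusions. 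Because the underlying vector spaces and the functions $u_i$ never change, $U_i=L_iu_i$ and $U_0=L\uI$ remain weakly integrable against the fixed separating family $V'=\SumI V'_i$ with no further argument, and the summation-extending $S$ is produced by rerunning the construction from the proof of Theorem~\ref{T:infinitaryH-3}, which does not require the target to be partially ordered. If you want to salvage your route, you would have to prove the missing claim that the ``expectational'' functionals on $W$ recover the order on each $\Span(f_i(X)-f_i(X))$---which is essentially a re-proof of Lemma~\ref{L:expectational} in a harder setting---so the efficient fix is to begin from that lemma rather than from Theorem~\ref{T:preferenceH}.
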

\noindent Thus when $\III$ is finite, for example, without assuming continuity or completeness or interpersonal comparisons, 
we find that the social preorder is represented by the mapping $\mu \mapsto \int_Y \sum_{i\in\III} U_i  \, \mathrm{d}\mu$; 
in other words, by expected total utility.

\subsection{Opinion pooling}\label{s:pooling}

Here we assume that $X$ is a Boolean algebra of events, 
understood as sets of states of nature, 
and the $\Geq_i$ are comparative likelihood relations on $X$, expressing the beliefs of each individual, with $\Geq_0$ expressing those of the social observer. So for events $A, B \in X$, $A \Geq_i B$ has the interpretation that, according to $i$, $A$ is at least as likely as $B$. 
 
Generalising ordinary probability measures, 
we will be considering representations by vector measures,%
\footnote{These were defined following Example~\ref{ex:sphere}.}
with values in partially ordered vector spaces. 
In Lemma~\ref{L:SI2}\ref{part:SI2-a} below, we give a necessary and sufficient condition for a preorder $\Geq$ on $X$ to be representable by a vector measure $f$. 
But in terms of standard axioms of comparative probability \citep[see e.g.][]{AL2014}, 
the existence of such a representation entails that $\Geq$ satisfies
Reflexivity, Transitivity, and Generalized Finite Cancellation. 
Positivity and Non-Triviality are equivalent respectively to the further conditions that $f(A)\Geq_V 0$ for all $A\in X$ and that $f(A)\sGeq_V 0$ for some $A\in X$. We treat Positivity and Non-Triviality as natural but optional assumptions about $\Geq$, rather than imposing these conditions on $f$.   The main remaining axioms---Completeness and Monotone Continuity---are  the ones which our use of vector measures is 
intended to avoid.

Our Pareto-style axioms \ref{P1}--\ref{P4} 
may seem as plausible here as in the context of preference aggregation. 
However, consider
\begin{example}\label{ex:urn}
Let $\III = \{1,2\}$. A ball is going to be drawn randomly from an urn containing three balls, red, yellow and blue. Individual $1$ privately observes that the ball is not red, and concludes $\{B\} \Eq_1 \{R, Y\}$. Individual $2$ privately observes that the ball is not yellow, and concludes $\{B\} \Eq_2 \{R, Y\}$. The social observer, privy to each individual's private information, will conclude $\{B\} \sGeq_0 \{R, Y\}$, contrary to \ref{P1}. 
\end{example}
The natural reply, though, is that \ref{P1}--\ref{P4} are not designed for the problem in which the observer's task is to aggregate the opinions represented by $(\Geq_i, \II_i)$, where $\II_i$ is $i$'s private information. They are meant for circumstances in which the only data the observer has, or considers relevant, is the $\Geq_i$, say when all private information is either hidden from the observer, or has been made common knowledge. Thus we assume that \ref{P1}--\ref{P4} are applicable in at least an important range of cases; see~\citet{DL2016} for related discussion. 

Linear opinion pooling is the idea that society's beliefs should be 
represented by a linear combination of individual beliefs with nonnegative coefficients, 
and goes back at least to \citet{mS1961}. When each $\Geq_i$ can be represented by a probability measure on $X$, linear pooling was axiomatized by \citet{kM1981}.%
\footnote{In this special case, the linear combination is typically normalized to a convex combination.}
An alternative 
axiomatization using 
Pareto-style conditions 
was given in \citet{pM1995} and \citet{DMM1995}, applying the Lyapunov convexity theorem.  
Further references are given in section~\ref{S:literature}. 
Here we give linear pooling results that replace ordinary probability measures 
with vector measures. 
We will first extend the approach based on Lyapunov's theorem. However, this requires a 
finite population, as well as some technical restrictions, 
so we will go on to 
explain 
an alternative approach,
still using 
Pareto-style conditions, 
that mixes objective and subjective probability, in the spirit of Anscombe-Aumann decision theory.

\subsubsection{Lyapunov} \label{s:Lyapunov}
Our aggregation theorems apply if we assume that 
the $\Geq_i$ are represented by \emph{co-convex} vector measures.
This assumption follows under certain conditions from the Lyapunov convexity theorem.
The following version is proved in \citet{AP1981} (their Theorem 2.2, which allows $X$ more generally to be an $F$-algebra). 
\begin{thm}[Lyapunov]\label{T:Lyapunov}
Suppose $X$ is a $\sigma$-algebra, and 
$\{f^i\colon X\to \RRR\}_{i\in\II}$ is a finite family of finitely additive, bounded, non-atomic
signed measures on $X$. Then the $f^i$ are co-convex. 
\end{thm}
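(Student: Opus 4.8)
The goal is to show that the joint range $R:=\{(f^i(A))_{i\in\II}:A\in X\}\subseteq\RRR^\II$ is convex; since $\II$ is finite, write $n=|\II|$ and regard $R\subseteq\RRR^n$. My plan is to run the Lindenstrauss functional-analytic proof of Lyapunov's theorem, checking that each step survives the passage from countable to mere finite additivity. First I would fix a common dominating measure $\lambda:=\sum_{i\in\II}|f^i|$, a bounded, nonnegative, finitely additive measure on $X$, and observe that $\lambda$ is itself non-atomic: if $A$ were a $\lambda$-atom with $\lambda(A)>0$, then every $B\subseteq A$ would have $|f^i|(B)\in\{0,|f^i|(A)\}$ for each $i$ (using $|f^i|\le\lambda$ and finite additivity), making $A$ an atom of some $f^i$ and contradicting non-atomicity. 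The key consequence I will use is the splitting property: any $A$ with $\lambda(A)>0$ can be partitioned into $n+1$ members of $X$ of positive $\lambda$-measure, obtained by iterating the defining dichotomy of non-atomicity.

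Next I would set $W:=\{g:0\le g\le 1\}$ inside $L^\infty(\lambda)$ and define $T\colon W\to\RRR^n$ by $T(g):=\bigl(\int g\,\mathrm{d}f^i\bigr)_{i\in\II}$, using the finitely-additive integral of bounded measurable functions. Since $T$ is linear and weak-$*$ continuous, $K:=T(W)$ is a compact convex subset of $\RRR^n$, and $R\subseteq K$ because $T(\mathbf 1_A)=(f^i(A))_{i\in\II}$. It then suffices to prove $R=K$, as $K$ is convex. Given $q\in K$, the fibre $F_q:=W\cap T^{-1}(q)$ is nonempty, convex and weak-$*$ compact, so by Krein--Milman it has an extreme point $g$. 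I claim $g=\mathbf 1_A$ for some $A\in X$. If not, the set $E:=\{\epsilon\le g\le 1-\epsilon\}$ has $\lambda(E)>0$ for some $\epsilon>0$; splitting $E$ into positive-measure pieces $E_0,\dots,E_n$ and solving the homogeneous system $\sum_{j=0}^{n}c_j\,f^i(E_j)=0$ for $i\in\II$ ($n$ equations in $n+1$ unknowns) yields a nonzero $c$, and after scaling so that $|c_j|\le\epsilon$ the perturbation $h:=\sum_j c_j\mathbf 1_{E_j}$ satisfies $g\pm h\in W$ and $T(g\pm h)=q$, so $g=\tfrac12\bigl((g+h)+(g-h)\bigr)$ contradicts extremality in $F_q$. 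Hence $q=T(g)=(f^i(A))_{i\in\II}\in R$, giving $R=K$ and thus convexity.

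The main obstacle lies entirely in the functional-analytic scaffolding rather than in this combinatorics: making precise the integral $\int g\,\mathrm{d}f^i$, the space $L^\infty(\lambda)$, and above all the weak-$*$ compactness of $W$ when $\lambda$ is only finitely additive, where there is no Radon--Nikodym theorem and the $L^1$--$L^\infty$ duality is delicate. I would handle this either through the theory of charges, realizing $L^\infty(\lambda)$ as a dual Banach space so that Banach--Alaoglu applies to the norm-closed convex set $W$, or by passing to the Stone space $S$ of $X$: each $f^i$ extends to a regular countably additive Borel measure on $S$, the clopen sets of $S$ correspond to elements of $X$, and one transfers non-atomicity and then descends the classical countably additive Lyapunov theorem back to clopen sets. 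This is essentially the route taken by \citet{AP1981}, and it is the step that genuinely exploits that $X$ is a $\sigma$-algebra and the measures are bounded.
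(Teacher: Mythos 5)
First, a point of comparison: the paper does not prove this statement at all --- it is quoted from \citet{AP1981} (their Theorem 2.2) and used as a black box --- so there is no internal proof to measure your attempt against. Judged on its own terms, your sketch has a genuine gap, and it sits exactly where you yourself locate ``the main obstacle''; the trouble is that neither of your proposed repairs works as stated. For the charges route: realizing $L^\infty(\lambda)$ as the dual of $L^1(\lambda)$ and invoking Banach--Alaoglu would make $W$ weak-$*$ compact, but it does not make $T$ weak-$*$ continuous. Weak-$*$ continuity of $g\mapsto\int g\,\mathrm{d}f^i$ is precisely the statement that $f^i$ is given by integration against an element of $L^1(\lambda)$, i.e.\ that $f^i$ has a Radon--Nikodym density with respect to $\lambda$ --- and this is exactly what fails for finitely additive measures, even under $|f^i|\le\lambda$. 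Without it, the fibre $F_q$ need not be weak-$*$ closed, Krein--Milman does not apply, and the extreme-point argument (which is otherwise fine) has nothing to act on.

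The Stone-space route does not close the gap either. Passing to the Stone space $S$ of $X$ does yield countably additive, non-atomic regular Borel measures $\hat f^i$, and classical Lyapunov makes the joint range over \emph{Borel} subsets of $S$ convex and compact; but $X$ corresponds only to the \emph{clopen} subsets of $S$, and regularity lets you approximate the Borel set produced by Lyapunov by clopen sets only up to $\epsilon$. The clopen range is therefore dense in the Borel range, not equal to it, and this is not a removable technicality: in the finitely additive setting the range is convex but in general \emph{not} closed (which is exactly why Theorem~\ref{T:Lyapunov} asserts only convexity), so no compactness argument can upgrade density to equality. Bridging that last step by a countable exhaustion argument is the substantive content of Armstrong and Prikry's proof and is where the hypothesis that $X$ is a $\sigma$-algebra (or $F$-algebra) is genuinely used; it is not scaffolding around your combinatorics but the heart of the matter. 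One smaller remark: with the paper's definition of non-atomicity (strong continuity: finite partitions of arbitrarily small total variation, following \citet{AP1981}), your $(n+1)$-fold splitting of a positive-$\lambda$-measure set is immediate; the atom dichotomy you invoke instead is the strictly weaker notion in the finitely additive world and would not by itself deliver the splitting property.
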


With this in mind, we say that a vector measure $f\colon X\to V$ is \emph{admissible} if $V$ is finite-dimensional, and, with respect to some
(hence any) basis of $V$, each component of $f$ is bounded and non-atomic.%
\footnote{Each component of $f$ is automatically a finitely additive signed measure on $X$. Following \citet{AP1981}, a finitely additive signed measure $f^j$ is said to be non-atomic if for every $\epsilon>0$ there is a finite partition $\{A_1,\ldots,A_n\}$ of $X$ such that for all $k$ the total variation $|f^j|(A_k)$ 
of $A_k$ under $f^j$ 
is less than $\epsilon$. 
Note that non-atomicity in this sense has nothing to do with the partial order on $V$.} 
To illustrate, these conditions are fulfilled by $f$ in Example \ref{ex:sphere}. 
Then Lyapunov's theorem ensures that $f(X)$ is convex; it also ensures that any finite number of admissible vector measures are co-convex.  
 
We thus obtain (for example) this application of Theorem~\ref{T:infinitaryH}.
\begin{thm}\label{T:pooling-c}
Suppose that $\III$ is finite, and that, for $i\in\III\cup\{0\}$, $\Geq_i$ is a preorder on a 
$\sigma$-algebra $X$ 
that can be represented by an admissible vector measure $f_i\colon X\to V_i$.  Suppose also that  \ref{P1}--\ref{P4} hold. 
Then there exists a 
finite dimensional 
partially ordered vector space $V$ and, for each $i\in\III$, a linear order embedding $L_i\colon V_i\to V$, 
such that 
the admissible 
vector measure
$\sum_{i\in\III} L_if_i$ represents $\Geq_0$.
\end{thm}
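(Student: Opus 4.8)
The plan is to deduce Theorem~\ref{T:pooling-c} as a direct application of Theorem~\ref{T:infinitaryH}\ref{part:infinitaryH P4}, with Lyapunov's theorem supplying the co-convexity hypothesis. First I would verify that the standing assumption of Theorem~\ref{T:infinitaryH} is met: for each $i \in \III \cup \{0\}$, the relation $\Geq_i$ is represented by an admissible vector measure $f_i \colon X \to V_i$. By definition of admissibility, each $V_i$ is finite-dimensional and, after fixing a basis, every component of $f_i$ is a bounded, non-atomic, finitely additive signed measure on the $\sigma$-algebra $X$. Collecting all the components of all the $f_i$ (finitely many, since $\III$ is finite and each $V_i$ is finite-dimensional) gives a finite family of such measures, so Theorem~\ref{T:Lyapunov} applies and tells us this family is co-convex; hence the $f_i$ themselves are co-convex. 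This is exactly the input Theorem~\ref{T:infinitaryH} requires.

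Next, since \ref{P1}--\ref{P4} are assumed, I would invoke Theorem~\ref{T:infinitaryH}\ref{part:infinitaryH P4} to obtain a partially ordered vector space $V$ and a linear, strictly positive map $L \colon V_\III \to V$ such that $L\fI$ represents $\Geq_0$, with every component $L_i \colon V_i \to V$ an order embedding. Since $\III$ is finite, we may write $L\fI = \sum_{i\in\III} L_i f_i$, as noted in the discussion preceding the theorem. This already delivers a representation of $\Geq_0$ of the desired additive form $\sum_{i\in\III} L_i f_i$ with each $L_i$ an order embedding.

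The remaining work is to arrange that $V$ is finite-dimensional and that the representing vector measure $\sum_{i\in\III} L_i f_i$ is itself admissible. The map $L$ produced by Theorem~\ref{T:infinitaryH} need not have finite-dimensional codomain, so I would replace $V$ by the image $L(V_\III) = \Span\bigl(\bigcup_{i\in\III} L_i(V_i)\bigr)$, which is a quotient of the finite-dimensional space $V_\III = \ProdI V_i$ and hence finite-dimensional; restricting the partial order makes it a finite-dimensional partially ordered vector space, the $L_i$ remain order embeddings, and $\sum_{i\in\III} L_i f_i$ still represents $\Geq_0$. Finally, admissibility of $\sum_{i\in\III} L_i f_i$ follows because each component, with respect to any basis of $L(V_\III)$, is a finite real-linear combination of the components of the $f_i$; boundedness is preserved under finite linear combinations, and non-atomicity is preserved as well since a finite linear combination of non-atomic finitely additive signed measures is non-atomic (given any $\ee>0$, refine the partitions witnessing non-atomicity for each summand simultaneously).

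The main obstacle I anticipate is precisely this last finite-dimensionality and admissibility bookkeeping: Theorem~\ref{T:infinitaryH} is proved in full generality and makes no finiteness claim about its target space, so the substantive step is recognizing that passing to the image $L(V_\III)$ is legitimate (it does not disturb the order-embedding property of the $L_i$, nor the representation of $\Geq_0$) and that admissibility is a genuinely preserved property under the linear recombination. Everything else is a routine translation of the hypotheses into the language of Theorem~\ref{T:infinitaryH} via Lyapunov's theorem.
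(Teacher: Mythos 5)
Your proposal is correct and follows essentially the same route as the paper's own proof: Lyapunov's theorem (Theorem~\ref{T:Lyapunov}) supplies co-convexity, Theorem~\ref{T:infinitaryH}\ref{part:infinitaryH P4} supplies the strictly positive $L$ with order-embedding components, and the finite-dimensionality and admissibility issues are handled exactly as in the paper by passing to $L(V_\III)$ and observing that admissibility is preserved under finite linear recombination. Your slightly more explicit verification of non-atomicity via common refinements of partitions is a welcome elaboration of what the paper dismisses as ``straightforward,'' but it is not a different argument.
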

So, under the stated assumptions, the social observer's beliefs are given by a linear pooling of individual beliefs.

\subsubsection{Convexification}\label{S:convexification}

Here we give an approach that allows the population to be infinite 
and that provides an axiomatic basis for the use of vector  measures.  
It mixes objective and subjective probabilities in the style of \citet{AA1963}.  

We proceed by embedding $X$ in a convex set $\ov X$. 
Here we only assume that $X$ is a Boolean algebra on a set $S$ of states of nature. 
Say that an \emph{extended event} is a function $F\colon S\to[0,1]$ that is constant on each cell of a finite partition $\AA\subset X$ of $S$. We  identify each $A\in X$ with the extended event $\chi_A$ given by the characteristic function of $A$. Let $\ov X$ be the set of extended events; it is a convex subset of the vector space of all functions $S\to \RRR$. 

A preorder $\ov \Geq$ on $\ov X$ can be understood as a comparative likelihood relation in the following way. Suppose that, for each $p \in [0,1]$, a coin with bias $p$ is going to be 
tossed, independently of the events encoded in $X$;
let $H_p$ be the event that it lands heads.%
\footnote{At a cost in abstraction, we could avoid the need for infinitely many coins by considering a single sample from $[0,1]$ with the uniform measure.} 
Given $F \in \ov X$, constant on 
each cell of some partition $\{E_1, \dots, E_n\} \subset X$ of $S$, 
we associate the event $H_F = E_1 H_{p_1} \lor \dots \lor E_n H_{p_n}$, where $p_j$ is the value that $F$ takes on $E_j$. 
Thus $H_F$ occurs whenever nature selects some event $E_j$ from the partition, and, independently, the coin with bias $p_j$ lands heads. 
$F \, {\ov\Geq} \, F'$ holds just in case $H_F$ is  
 judged at least as likely as $H_{F'}$.%
 \footnote{\label{fn:AA DT}While emphasizing that our conceptual approach is not decision theoretic, $\ov \Geq$ should match the preferences of an agent who gets a prize on heads.} 
So interpreted, $\ov \Geq$ should appropriately 
take into account the objective probabilities of the coin tosses;
we suggest that the appropriate constraint is strong independence.

The following proposition allows us to generate representations of likelihood relations on $X$ by vector measures.

\begin{lemma}\label{L:SI2} 
Let $X$ be a Boolean algebra and $\ov{X}$ the set of extended events.
\begin{enumerate}[label=(\roman*), nosep]
\item \label{part:SI2-a} If a preorder  $\ov{\Geq}$ on $\ov{X}$ satisfies strong independence, its restriction $\Geq$ to $X$ can be represented by a vector measure. Conversely, a preorder $\Geq$ on $X$ can be represented by a vector measure only if it arises in this way.
\item \label{part:SI2-b} If a mixture-preserving function $\ov f \colon \ov X \to V$, with $V$ a vector space,  satisfies  
$\ov f(\chi_\varnothing) = 0$, its restriction $f$ to $X$ is a vector measure. Conversely, a function $f\colon X\to V$ is a vector measure only if it arises in this way.
\end{enumerate}
\end{lemma}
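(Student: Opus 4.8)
The plan is to establish part~\ref{part:SI2-b} first, since its forward direction is the computational heart of the lemma, and then to deduce part~\ref{part:SI2-a} by combining part~\ref{part:SI2-b} with the mixture-preserving representation theorem of Lemma~\ref{L:SI}\ref{part:SIa}.

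For the forward direction of~\ref{part:SI2-b}, suppose $\ov f\colon \ov X\to V$ is mixture-preserving with $\ov f(\chi_\varnothing)=0$, and let $f$ be its restriction to $X$, i.e. $f(A)=\ov f(\chi_A)$. Given disjoint $A,B\in X$, the key observation is the identity of extended events
\[
\tfrac12\chi_A+\tfrac12\chi_B \;=\; \tfrac12\chi_{A\cup B} \;=\; \tfrac12\chi_{A\cup B}+\tfrac12\chi_\varnothing,
\]
where the first equality holds because $\chi_A+\chi_B=\chi_{A\cup B}$ pointwise (by disjointness) and the second because $\chi_\varnothing=0$. Applying the mixture-preserving $\ov f$ to the two outer expressions gives $\tfrac12 f(A)+\tfrac12 f(B)=\tfrac12 f(A\cup B)$, hence $f(A\cup B)=f(A)+f(B)$, so $f$ is finitely additive, i.e. a vector measure. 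For the converse, given a vector measure $f\colon X\to V$, I would define its extension by $\ov f(F)\coloneqq\sum_{j}p_j f(E_j)$ whenever $F$ takes the value $p_j$ on the cell $E_j$ of a finite partition $\{E_1,\dots,E_n\}\subset X$ of $S$. I would then check that this is well defined by passing to a common refinement of two representing partitions and using finite additivity of $f$; that $\ov f$ is mixture-preserving, again by refining to a partition on which both arguments are constant; that $\ov f(\chi_\varnothing)=0$ (since $f(\varnothing)=0$); and that $\ov f(\chi_A)=f(A)$.

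For part~\ref{part:SI2-a}, forward direction, suppose $\ov\Geq$ on the convex set $\ov X$ satisfies strong independence. By Lemma~\ref{L:SI}\ref{part:SIa}, $\ov\Geq$ has a mixture-preserving representation $\ov f\colon\ov X\to V$ into a partially ordered vector space. Replacing $\ov f$ by $\ov f-\ov f(\chi_\varnothing)$ --- which is again mixture-preserving and, by translation-invariance of $\Geq_V$, still a representation --- I may assume $\ov f(\chi_\varnothing)=0$. Part~\ref{part:SI2-b} then shows its restriction $f$ is a vector measure, and it represents $\Geq$ because $A\Geq B\iff\chi_A\,\ov\Geq\,\chi_B\iff \ov f(\chi_A)\Geq_V\ov f(\chi_B)\iff f(A)\Geq_V f(B)$. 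For the converse, given a vector measure $f\colon X\to V$ representing $\Geq$, I extend it to a mixture-preserving $\ov f$ with $\ov f(\chi_\varnothing)=0$ via part~\ref{part:SI2-b}, and define $\ov\Geq$ on $\ov X$ as the pullback of $\Geq_V$, i.e. $F\,\ov\Geq\,F'\iff\ov f(F)\Geq_V\ov f(F')$. This $\ov\Geq$ is a preorder whose restriction to $X$ is $\Geq$, and it satisfies strong independence: for $\alpha\in(0,1)$, linearity of $\Geq_V$ combined with the mixture-preservation of $\ov f$ gives $\ov f(F)\Geq_V\ov f(F')\iff\ov f(\alpha F+(1-\alpha)G)\Geq_V\ov f(\alpha F'+(1-\alpha)G)$.

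I expect no deep obstacle here; the only points needing care are the well-definedness of the extension $\ov f$ in the converse of~\ref{part:SI2-b} (handled by common refinement and finite additivity) and the bookkeeping that normalizing a representation by subtracting the constant $\ov f(\chi_\varnothing)$ preserves both mixture-preservation and the representing property. Everything else reduces to the single additivity identity displayed above together with the linearity of the order on $V$.
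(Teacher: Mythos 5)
Your proposal is correct and follows essentially the same route as the paper's proof: the same halving identity $\tfrac12\chi_A+\tfrac12\chi_B=\tfrac12\chi_{A\cup B}+\tfrac12\chi_\varnothing$ for the forward direction of (ii), the same partition-sum formula $\ov f(F)=\sum_j p_j f(E_j)$ for the converse, and the same reduction of (i) to (ii) via Lemma~\ref{L:SI}. The only differences are cosmetic: you spell out the well-definedness of the extension via common refinements and verify strong independence of the pullback preorder directly, where the paper leaves the former implicit and cites Lemma~\ref{L:SI} for the latter.
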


Lemma~\ref{L:SI2}, in combination with Lemma~\ref{L:SI}, 
yields the following application of Theorem \ref{T:infinitaryH}.

\begin{thm}\label{T:convexification}
Let $X$ be a Boolean algebra, and $\ov X$ the space of extended events. For $i\in\III\cup\{0\}$, let $\ov \Geq_i$ be a preorder on $\ov X$ whose restriction to $X$ is $\Geq_i$. 
Assume that each $\ov \Geq_i$ satisfies strong independence, and that the $\ov{\Geq}_i$ together satisfy \ref{P1}--\ref{P4}.

Then there is for each $i\in\III\cup\{0\}$ a vector measure $f_i \colon X\to V_i$  that  represents $\Geq_i$, and a strictly positive linear map $L\colon V_\III\to V_0$, with each  $L_i$ an order embedding, such that $f_0=L\fI$. 
\end{thm}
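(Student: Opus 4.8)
The plan is to lift the whole problem from $X$ to the convex set $\ov X$, apply Theorem~\ref{T:infinitaryH} there, and then restrict the resulting representations back down to $X$. The reason for passing to $\ov X$ is that it is convex, so that the strong-independence machinery of Lemma~\ref{L:SI} becomes available; the Boolean algebra $X$ itself, identified with the characteristic functions $\{\chi_A : A \in X\}$, is not a convex subset of $\ov X$, so one cannot work with it directly. The bridge back is Lemma~\ref{L:SI2}\ref{part:SI2-b}, which converts a suitably normalized mixture-preserving function on $\ov X$ into a vector measure on $X$.

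First I would invoke Lemma~\ref{L:SI}\ref{part:SIa}: since each $\ov \Geq_i$ satisfies strong independence on the convex set $\ov X$, it has a mixture-preserving representation $\ov f_i \colon \ov X \to V_i$. I would then normalize each one by replacing $\ov f_i$ with $\ov f_i - \ov f_i(\chi_\varnothing)$. Subtracting a constant preserves both mixture-preservation and the represented relation, and it arranges that $\ov f_i(\chi_\varnothing) = 0$. By Lemma~\ref{L:SI}\ref{part:SIb}, the family $\{\ov f_i\}_{i \in \III \cup \{0\}}$ of mixture-preserving functions is co-convex, so the $\ov \Geq_i$ have co-convex representations in the sense required by Theorem~\ref{T:infinitaryH}.

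Since the $\ov \Geq_i$ satisfy \ref{P1}--\ref{P4} by hypothesis, I would then apply Theorem~\ref{T:infinitaryH}\ref{part:infinitaryH P4} to obtain a strictly positive linear map $L \colon V_\III \to V_0$, with every component $L_i$ an order embedding, such that $\ov f_0 = L \ovfI$, where $\ovfI = (\ov f_i)_{i \in \III}$. Finally I would restrict to $X$, setting $f_i(A) \coloneqq \ov f_i(\chi_A)$. Because each $\ov f_i$ is mixture-preserving with $\ov f_i(\chi_\varnothing) = 0$, Lemma~\ref{L:SI2}\ref{part:SI2-b} guarantees that each $f_i$ is a vector measure. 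Each $f_i$ represents $\Geq_i$, since for $A, B \in X$ we have $A \Geq_i B \iff A \ov \Geq_i B \iff \ov f_i(\chi_A) \Geq_{V_i} \ov f_i(\chi_B) \iff f_i(A) \Geq_{V_i} f_i(B)$. Evaluating the identity $\ov f_0 = L \ovfI$ at characteristic functions gives $f_0 = L \fI$, and the strict positivity of $L$ and the order-embedding property of each $L_i$ are intrinsic to $L$ as a map of partially ordered vector spaces, so they are inherited unchanged.

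The one step I would watch most carefully—and the only genuine subtlety—is the normalization at $\chi_\varnothing$. It is precisely the hypothesis Lemma~\ref{L:SI2}\ref{part:SI2-b} requires in order to turn a mixture-preserving function on $\ov X$ into a vector measure on $X$, and it must be carried out \emph{before} applying Theorem~\ref{T:infinitaryH}, so that the representation identity $\ov f_0 = L\ovfI$ descends cleanly to $f_0 = L\fI$ upon restriction. Everything else is a direct application of the results already established.
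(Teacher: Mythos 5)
Your overall route is the paper's: lift everything to the convex set $\ov X$, get co-convex mixture-preserving representations from Lemma~\ref{L:SI}, apply Theorem~\ref{T:infinitaryH}\ref{part:infinitaryH P4}, and descend to vector measures on $X$ via Lemma~\ref{L:SI2}\ref{part:SI2-b}. But there is one genuine misstep in your step invoking Theorem~\ref{T:infinitaryH}\ref{part:infinitaryH P4}. That theorem does \emph{not} deliver a map $L\colon V_\III\to V_0$ with $\ov f_0 = L\ovfI$ for the representation $\ov f_0$ you fixed in advance; it only asserts that $\ov\Geq_0$ \emph{has a representation of the form} $L\ovfI$, where $L$ is a linear map into \emph{some} partially ordered vector space $V$ constructed in the proof (a quotient of $V_\III$), with no claim that $V=V_0$ or that $L\ovfI$ coincides with your chosen $\ov f_0$. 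The statement you actually want --- that the pre-chosen $f_0$ itself equals $L\fI$ (up to a constant) --- is the content of Theorem~\ref{T:infinitaryH-2}\ref{part:infinitaryH-2 P4}, which requires the domain-richness condition \ref{DR}; that condition is not assumed here, is deliberately avoided in the applications, and cannot simply be dropped (Example~\ref{ex:DR counterexample}). So as written, the identity $\ov f_0 = L\ovfI$ is unjustified and in general false for your normalized $\ov f_0$.

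The repair is small and is exactly what the paper does: since the theorem only asserts the \emph{existence} of suitable $f_i$ and $L$, you are free to discard your original $\ov f_0$ and redefine $V_0\coloneqq V$ and $\ov f_0\coloneqq L\ovfI$. This new $\ov f_0$ is mixture preserving, satisfies $\ov f_0(\chi_\varnothing)=L(\ovfI(\chi_\varnothing))=L(0)=0$ automatically (so your careful normalization of the original $\ov f_0$ at $\chi_\varnothing$ becomes unnecessary for $i=0$, though it remains essential for $i\in\III$), and restricts to a vector measure $f_0$ on $X$ with $f_0=L\fI$ by construction. With that substitution the rest of your argument --- the restriction step, the verification that each $f_i$ represents $\Geq_i$, and the observation that strict positivity of $L$ and the order-embedding property of each $L_i$ are intrinsic to $L$ --- goes through as you wrote it.
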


Let us contrast the two routes to linear pooling.
Convexification allows the population to be infinite, and avoids the restriction of the Lyapunov approach to $\Geq_i$ with admissible representations.  
In addition, Lemma~\ref{L:SI2} provides necessary and sufficient conditions for 
a preorder $\Geq$ on $X$ 
to be represented by a vector measure, whereas no such result has been given for $\Geq$ 
to be represented by an admissible vector measure. 
On the other hand, convexification 
requires extending $X$ to $\ov X$, along with Pareto and strong independence for the extended preorders. 
It is hard to see  an objection to this extension of Pareto,%
\footnote{We note, however, that if the $\Geq_i$ satisfy \ref{P1}--\ref{P4}, and are extended to strongly independent preorders $\ov \Geq_i$ on $\ov X$, it does not follow that the $\ov \Geq_i$ satisfy  \ref{P1}--\ref{P4}.}
but strong independence has its critics (even while remaining very popular) in the standard decision theoretic version of the Anscombe-Aumann framework.%
\footnote{See e.g. \citet{iG2009} for discussion.}
Those with similar doubts in our comparative likelihood framework  (cf. note~\ref{fn:AA DT}) might prefer the Lyapunov approach.  Finally, for those who prefer to avoid objective probabilities, we note the possibility of imposing convex structure directly on $X$; compare 
\citet{GMMS2003}.

\section{Related literature}\label{S:literature}
We conclude by relating our results to extant work.

In section \ref{S:povs} we noted that our use of partially ordered vector spaces generalises some other forms of representation.  These include representations with 
values in a non-Archimedean ordered field $F$, often used to model failures of continuity (see note~\ref{fn:fields} for references). Here, $F$ is generally assumed to be an extension of the real numbers, 
and as such is an ordered real vector space 
(cf.~Example~\ref{ex:DR infinity}). 
However, advocates of this approach may prefer to rework our results using $F$, rather than $\RRR$, as the basic field. This would mean interpreting such conditions as linearity, convexity, and strong independence using coefficients drawn from $F$.%
\footnote{Thus, for example,  a subset $X$ of an $F$-vector space is convex over $F$ if and only if, for all $x,y\in X$ and all $\aa\in F$ with $0<\aa<1$, we have $\aa x + (1-\aa)y\in X$.} 
We do not pursue this project, but the only results that we do not expect to extend almost verbatim are 
Theorems~\ref{T:expectH}  and  \ref{T:pooling-c}.

As already noted, in the special case in which $\III$ is finite and $V_i = \RRR$ for each $i \in \III \cup \{0\}$, Theorem~\ref{T:infinitaryH-2} is  proved in \citet[Prop. 1]{DMM1995}, without requiring \ref{DR}; note that \ref{P4} is vacuous under these assumptions. 
This is the first result we know of that emphasizes the usefulness of co-convexity in the context of social aggregation. They apply their result to 
preference aggregation to obtain Harsanyi's theorem, and to 
opinion pooling using Lyapunov, in the special case in which each $\Geq_i$ is represented by a standard probability measure.

\citet{lZ1997} allows $\III$ to be infinite, and considers the special case in which $X$ is a mixture set 
and the $f_i$ are mixture-preserving representations with values in $\RRR$ (i.e. von Neumann-Morgenstern expected 
utility functions),%
\footnote{As \citet{pM2001} proves, each mixture set $X$ maps naturally onto to a convex set $\overline X$ such that any mixture-preserving function on $X$ comes from one on $\overline X$; thus the use of mixture-preserving functions on mixture sets rather than on convex sets does not give any greater generality.}
so that the $\Geq_i$ satisfy strong independence, continuity and completeness. 
Under these assumptions, his Theorems 1 and 2 are essentially our Theorem~\ref{T:infinitaryH-2}\ref{part:infinitaryH-2 P1} and \ref{part:infinitaryH-2 P2} respectively.%
\footnote{The exact statement of Zhou's Theorem 2 is closer to the variant of Theorem~\ref{T:infinitaryH-2}\ref{part:infinitaryH-2 P2}  outlined in section~\ref{S:domain}.} 
He also raises the question of whether, in our terminology, \ref{DR} can be dropped from Theorem~\ref{T:infinitaryH-2}\ref{part:infinitaryH-2 P2}, and gives an affirmative answer under yet further assumptions.  Our examples in section \ref{S:domain} show that the answer is negative in general. But we emphasise that 
our 
Theorem \ref{T:infinitaryH} and the main part of Theorem \ref{T:infinitaryH-3} do not assume \ref{DR}.

Despite the importance of the problem, the results of \citet{DGT2015} are the only ones we know of that 
generalize Harsanyi's theorem by dropping completeness. 
They assume a finite population, and take $X$ to be the set of probability measures on a finite set of outcomes. 
They assume that the $\Geq_i$ satisfy strong independence and a slight strengthening of mixture continuity. 
It follows that the $\Geq_i$ have expected multi-utility representations \citep{SB1998, DMO2004}, and they present their results in such terms.  
Recall from Example \ref{ex:multi} than any multi-representation can be re-interpreted as a representation in a partially ordered vector space. Our Theorem \ref{T:infinitaryH-2}\ref{part:infinitaryH-2 P1} and \ref{part:infinitaryH-2 P2} say that, given \ref{P1} or \ref{P1} and \ref{P2}, the vector-valued representation of $\Geq_0$ is linearly related to those of the other $\Geq_i$. Danan {\em et al} prove essentially this same result in their setting, and go slightly further in describing the components of the linear relation in terms of the original multi-representations. 
They also explain how these results extend to the case of an infinite population,  
generalizing Zhou's results to allow for incompleteness (although, unlike Zhou, assuming a finite set of outcomes).

Our results differ from theirs in three main ways. First, we do not require continuity; 
one motivation for this is that in the absence of completeness, continuity is a more problematic assumption than it appears.%
\footnote{See \citet{jD2011} and \citet{MM2018}, extending a classic observation of \citet{dS1971}.} 
Second, as emphasized
in section~\ref{S:harsanyi}, 
we allow for a much wider range of 
interpretations of the domain $X$. 
Third, while they only consider the axioms \ref{P1}--\ref{P2}, we consider the effects of adding the standard strong Pareto-style axiom \ref{P3}, and in addition, 
the apparently novel 
\ref{P4}. 
This whole package of Pareto-style axioms is plausible, in our view, and essential to those of our results that are arguably the fullest generalizations of Harsanyi. 
Of course, our results do not wholly replace those of \citet{DGT2015}, as they study details that arise specifically in the context of continuity and multi-utility representations. 
But we believe that it is important to consider the full range of Pareto-style conditions, even in that context;  our results provide a framework for doing so, although we do not pursue it here.

As in Harsanyi's theorem, the results of \cite{lZ1997}, \cite{DGT2015} and this paper do not assume interpersonal comparisons.  
Immediately after the statement of his main theorem, though, \citet{jH1955} introduces interpersonal comparisons and a form of anonymity, leading to the further conclusion that 
the social preorder can be represented by the sum of real-valued mixture-preserving individual utility functions that have been normalized to reflect interpersonal comparisons. Generalizations of this version of Harsanyi's result, that assume interpersonal comparisons and anonymity at the outset but drop one or more of continuity, completeness, strong independence, and the requirement of a finite population, are given in \citet{mF2009}, \citet{mP2013, mP2014}, and \citet{MMT2018}.%
\footnote{While it would be straightforward to add interpersonal comparisons and anonymity to the results of this paper when $\III$ is finite, there is a well known incompatibility between full anonymity and Pareto when $\III$ is infinite. This is avoided in \citet{mP2014} by restricting to finite anonymity, and in \citet{MMT2018} 
by restricting to social lotteries in which only finitely many individuals have a chance of existing.}

Generalizations of Harsanyi to the case where preferences are over Anscombe-Aumann or Savage acts, allowing individuals to have different beliefs and values, are well known to lead to impossibility results.%
\footnote{From among a very large body of literature, see e.g. \citet{jB1990, pM1995, pM1998, MP2015}; and \citet{sZ2016}.} 
Our treatment of opinion pooling separately from preference aggregation may therefore be seen as following the advice of, for example, \citet[pp. 352--3]{pM1998}  
to aggregate opinion prior to aggregating preferences.

The classic axiomatization of linear opinion pooling, when the population is finite and the $p_i$ are standard ($[0,1]$-valued) probability measures on a $\sigma$-algebra, was given in \citet{kM1981}. 
An axiomatization using analogues of the Pareto-style conditions \ref{P1}--\ref{P3} 
was given in \citet{pM1995}; see also \citet{DMM1995}. 
\citet{cC2007} extends that approach to axiomatize linear pooling for ordinal probabilities. 
Extending McConway's approach, linear pooling results for finitely additive probability measures and an infinite population are given in \citet{fH2015} and \citet{mN2019}; 
the latter giving an equivalent  result using analogues of 
\ref{P1} and \ref{P2}.

\section{Proofs}
We begin with some simple 
observations about convex sets, cones and vector preorders.

\begin{lemma}\label{L:Vspan} 
Let $Z$ be a nonempty convex set. Then $\Span(Z-Z)=\Set{\ll(z-z')}{\ll\in\RRR,\ll\ > 0,z,z'\in Z}.$ 
\end{lemma}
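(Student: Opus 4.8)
The plan is to prove the stated equality by showing that the right-hand side, which I abbreviate $R \coloneqq \{\lambda(z-z') : \lambda > 0,\ z,z'\in Z\}$, is itself a linear subspace of the ambient vector space, and that it contains $Z-Z$. Granting this, since $\Span(Z-Z)$ is by definition the smallest subspace containing $Z-Z$, we get $\Span(Z-Z)\subseteq R$; and the reverse inclusion $R\subseteq\Span(Z-Z)$ is immediate because every $\lambda(z-z')\in R$ is a scalar multiple of the element $z-z'\in Z-Z\subseteq\Span(Z-Z)$. Hence the two inclusions yield equality.

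First I would record the trivial containment $Z-Z\subseteq R$ (take $\lambda=1$), so it remains only to verify that $R$ is a subspace. Using that $Z$ is nonempty, pick any $z\in Z$; then $0=1\cdot(z-z)\in R$, so $R$ contains $0$. For closure under scalar multiplication, given $\lambda(z-z')\in R$ and a scalar $\mu$, I would treat three cases: if $\mu>0$ then $\mu\lambda(z-z')\in R$ since $\mu\lambda>0$; if $\mu=0$ the product is $0\in R$; and if $\mu<0$ then $\mu\lambda(z-z')=|\mu|\lambda(z'-z)\in R$, using $|\mu|\lambda>0$ together with $z',z\in Z$. So $R$ is closed under all scalar multiples.

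The one step that requires genuine work, and which I expect to be the main obstacle, is closure under addition, and this is precisely where the convexity of $Z$ is needed. Given two elements $\lambda_1(z_1-z_1')$ and $\lambda_2(z_2-z_2')$ of $R$ with $\lambda_1,\lambda_2>0$, I would set $\lambda\coloneqq\lambda_1+\lambda_2>0$ and form the normalized combinations $w\coloneqq\tfrac{\lambda_1}{\lambda}z_1+\tfrac{\lambda_2}{\lambda}z_2$ and $w'\coloneqq\tfrac{\lambda_1}{\lambda}z_1'+\tfrac{\lambda_2}{\lambda}z_2'$. Because the coefficients $\tfrac{\lambda_1}{\lambda},\tfrac{\lambda_2}{\lambda}$ are positive and sum to $1$, convexity of $Z$ gives $w,w'\in Z$; and a direct expansion shows $\lambda(w-w')=\lambda_1(z_1-z_1')+\lambda_2(z_2-z_2')$, exhibiting the sum as an element of $R$. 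The essential trick is thus to absorb the two positive weights into a single convex weight so that convexity applies; everything else is routine and no earlier result is invoked.
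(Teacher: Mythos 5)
Your proof is correct and follows essentially the same route as the paper's: the only substantive step in both is closure of the right-hand side under addition, achieved by absorbing the two positive weights $\lambda_1,\lambda_2$ into the single scale $\lambda_1+\lambda_2$ and using convexity of $Z$ to keep the normalized combinations inside $Z$. Your explicit treatment of negative scalars and of $0\in R$ merely fills in details the paper dismisses as clear.
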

\begin{proof}
The right-hand side is clearly contained in and spans the left-hand side. It suffices to show that it equals its own span. It is clearly closed under scalar multiplication. To show that it is closed under addition, suppose given $\lambda(z-z')$ and $\mu(w-w')$ with $\ll,\mu>0$ and $z,z',w,w'\in Z$.   Then it is easy to check
$\lambda(z-z')+\mu(w-w')=\nu(v-v')$,
for $\nu\coloneqq\mu+\nu$, 
$v\coloneqq\frac{\ll}{\nu} z +\frac{\mu}\nu w$, and
$v'\coloneqq\frac{\ll}{\nu} z' +\frac{\mu}\nu w'$. 
Note that $\nu>0$ and that $v,v'$ are elements of $Z$, since it is convex. 
\end{proof}

Given a vector space $V$, recall that $C\subset V$ is a {\em convex cone} if  $0\in C$, 
$C+C\subset C$ and $\lambda C\subset C$ for all $\lambda>0$. 
Clearly if $C, C' \subset V$ are convex cones, then so is $C+C'$. 
The following is well-known; see e.g. \citet[G.1.3]{eO2007}.

\begin{lemma}\label{l:ccV+0}
Let $C\subset V$, where $V$ is a vector space. The binary relation $\Geq_V$ on $V$ defined by 
 $v\Geq_V w \IFF v-w\in C$ is a linear preorder if and only if $C$ is a convex cone. 
Conversely, any linear preorder on $V$ is of this form. 
\end{lemma}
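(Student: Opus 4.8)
The plan is to prove the two equivalences by translating each defining property of a convex cone into the corresponding property of $\Geq_V$, exploiting the fact that $v \Geq_V w$ depends only on the difference $v - w$. The whole argument is a direct unwinding of definitions, so I expect no serious obstacle; the only point requiring attention is the repeated, correct use of the linearity axiom, whose hypothesis couples a scaling by $\lambda$ with a translation by $w$, when isolating the additivity and homogeneity of $C$.

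First I would treat the forward direction of the main claim: assuming $C$ is a convex cone, I verify that $\Geq_V$ is a linear preorder. Reflexivity follows from $0 \in C$; transitivity follows from $C + C \subseteq C$, since $v - w \in C$ and $w - u \in C$ give $v - u = (v-w) + (w-u) \in C$; and linearity follows from $\lambda C \subseteq C$ together with $\lambda^{-1} C \subseteq C$ (both holding because $\lambda, \lambda^{-1} > 0$), which yields $v - v' \in C \iff \lambda(v - v') \in C$, that is, $v \Geq_V v' \iff \lambda v + w \Geq_V \lambda v' + w$.

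Next I would reverse this: assuming $\Geq_V$ is a linear preorder, I recover the three cone properties for $C \coloneqq \{v \in V : v \Geq_V 0\}$. Reflexivity gives $0 \in C$. For $\lambda C \subseteq C$ with $\lambda > 0$, I apply linearity with $w = 0$ to $c \Geq_V 0$ to obtain $\lambda c \Geq_V 0$. For additivity—the one step needing a little care, since the linearity axiom bundles scaling and translation—I apply linearity with $\lambda = 1$ and translation term $c'$ to the relation $c \Geq_V 0$, obtaining $c + c' \Geq_V c'$; combining this with $c' \Geq_V 0$ by transitivity yields $c + c' \Geq_V 0$, i.e. $C + C \subseteq C$.

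Finally, for the converse statement that every linear preorder arises in this way, I would again set $C \coloneqq \{v \in V : v \Geq_V 0\}$ and check that $\Geq_V$ is exactly the relation it induces. Linearity, applied with $\lambda = 1$ and translation term $-w$, gives $v \Geq_V w \iff v - w \Geq_V 0 \iff v - w \in C$, so the induced relation coincides with $\Geq_V$; and by the first part of the lemma this $C$ is automatically a convex cone.
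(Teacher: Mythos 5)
Your proof is correct: the paper itself omits the argument, simply citing \citet[G.1.3]{eO2007} as the statement is well-known, and your direct unwinding of the definitions (reflexivity from $0\in C$, transitivity from $C+C\subset C$, linearity from $\lambda C\subset C$ together with $\lambda^{-1}C\subset C$, and the converse via $C=\{v: v\Geq_V 0\}$ with the translation trick $\lambda=1$, $w=c'$ for additivity) is exactly the standard proof that citation stands in for. No gaps.
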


\begin{lemma}\label{L:convex cone} 
Suppose $Z$ is a nonempty subset of a vector space $V$, and $\Geq$ is a preorder on $Z$ that satisfies strong independence. 
Let $C \coloneqq \Set{\lambda(x-y)}{\lambda\in \RRR, \lambda> 0, x,y\in Z, x\Geq y}$. Then $C$ is a convex cone in $V$. 
Equip $V$ with the linear preorder $\GeqV$ defined by 
$v\GeqV w \iff v-w\in C$. 
Then the inclusion $\iota \colon Z \to V$ represents $\Geq$. 
\end{lemma}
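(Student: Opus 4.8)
The plan is to establish the three defining properties of a convex cone for $C$, invoke Lemma~\ref{l:ccV+0} to conclude that $\GeqV$ is a linear preorder, and then verify that the inclusion $\iota$ represents $\Geq$, i.e. that $x\Geq y\iff\iota(x)\GeqV\iota(y)$, which unwinds to $x\Geq y\iff x-y\in C$. Throughout I use that $Z$ is closed under the convex combinations formed below; this is harmless, since strong independence is only defined for preorders on convex sets, so $Z$ is implicitly convex.

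Two of the cone properties are immediate: picking any $x\in Z$ (as $Z\neq\varnothing$), reflexivity gives $x\Geq x$, hence $0=1\cdot(x-x)\in C$; and scaling a witness $\lambda(x-y)$ by $\mu>0$ produces the witness $(\mu\lambda)(x-y)$, so $\mu C\subseteq C$. The substantive property is $C+C\subseteq C$. Given witnesses $\lambda(x-y)$ and $\mu(z-u)$ with $x\Geq y$ and $z\Geq u$, I set $\nu\coloneqq\lambda+\mu$ and $\alpha\coloneqq\lambda/\nu\in(0,1)$ and rewrite the sum as $\nu(a-b)$, where $a\coloneqq\alpha x+(1-\alpha)z$ and $b\coloneqq\alpha y+(1-\alpha)u$ both lie in $Z$. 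It remains to show $a\Geq b$: strong independence applied to $x\Geq y$ (with $z$ fixed at weight $1-\alpha$) gives $a\Geq\alpha y+(1-\alpha)z$; strong independence applied to $z\Geq u$ (with $y$ fixed at weight $\alpha$) gives $\alpha y+(1-\alpha)z\Geq b$; and transitivity yields $a\Geq b$. Hence $\nu(a-b)\in C$, so $C$ is a convex cone, and Lemma~\ref{l:ccV+0} shows $\GeqV$ is a linear preorder.

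The forward implication of the representation is trivial, since $x\Geq y$ exhibits $x-y=1\cdot(x-y)$ as a member of $C$. The converse is the main obstacle. Suppose $x-y\in C$, say $x-y=\lambda(z-u)$ with $\lambda>0$ and $z\Geq u$. The key is to clear the scalar by normalizing: rearranging to $x+\lambda u=y+\lambda z$ and dividing by $1+\lambda$, with $\beta\coloneqq\lambda/(1+\lambda)\in(0,1)$, produces the single point $m\coloneqq(1-\beta)x+\beta u=(1-\beta)y+\beta z\in Z$. Strong independence applied to $z\Geq u$ (with $y$ fixed at weight $1-\beta$) gives $m=(1-\beta)y+\beta z\Geq(1-\beta)y+\beta u$; rewriting $m$ as $(1-\beta)x+\beta u$ turns this into $(1-\beta)x+\beta u\Geq(1-\beta)y+\beta u$, and strong independence then cancels the common term $\beta u$ to yield $x\Geq y$. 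Both nontrivial steps—additive closure and this converse—run on the same device: pick the convex combination that realizes the given vector identity inside $Z$, then apply strong independence (twice) to transfer the order relation.
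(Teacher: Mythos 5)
Your proof is correct and follows essentially the same route as the paper's: additive closure of $C$ via the convex-combination identity plus two applications of strong independence and transitivity, and the converse direction of the representation via the normalization $\beta=\lambda/(1+\lambda)$ followed by strong independence and cancellation (your version mixes with $y$ and cancels $u$ where the paper mixes with $x$ and cancels $z$, a cosmetic difference). Your remark that $Z$ is implicitly convex because strong independence is defined only on convex sets matches the paper's tacit assumption.
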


\begin{proof}
It is clear that $0 \in C$, and that for $\lambda >0$, $\lambda C \subset C$. To show that $C + C \subset C$, let $v, w \in C$. Then $v = \lambda(x-y)$ and $w = \mu(x'-y')$ 
for some $\lambda, \mu >0$, $x, x', y, y' \in Z$ with $x \Geq y$ and $x' \Geq y'$. 
We have
\[
	v+w = (\lambda + \mu)\Big[\big(\tfrac{\lambda}{\lambda + \mu}x + \tfrac{\mu}{\lambda + \mu}x'\big) - \big(\tfrac{\lambda}{\lambda + \mu}y + \tfrac{\mu}{\lambda + \mu}y'\big)\Big].
\] 
Let $\aa = \tfrac{\lambda}{\lambda + \mu}$. Since $\Geq$ satisfies strong independence, we have $\aa x + (1-\aa)x' \Geq \aa y + (1-\aa)x'$ and $\aa y + (1-\aa)x' \Geq \aa y + (1-\aa)y'$, and thus $\aa x + (1-\aa)x' \Geq \aa y + (1-\aa)y'$. The displayed equation then shows that $v + w \in C$, 
establishing that $C$ is a convex cone.

We now show that for $x, y \in Z$, $x \Geq y \IFF x \Geq_V y$. Clearly $x \Geq y \Rightarrow x - y \in C \Rightarrow x \Geq_V y$. Conversely, suppose  $x \Geq_V y$. Then $x - y \in C$, hence there exist $\lambda >0$, $x', y' \in Z$ with $x' \Geq y'$ and $x - y = \lambda(x'-y')$. 
Letting $\aa = \tfrac1{1+\ll}$, this rearranges to  $\aa x + (1-\aa) y' = \aa y+ (1-\aa) x'$.    
Since $x'\Geq y'$ and $\Geq$ satisfies strong independence, we have $\aa x + (1-\aa)x' \Geq \aa x + (1-\aa) y'=\aa y + (1-\aa)x'$. By strong independence again, we must have $x\Geq y$. 
\end{proof}

Recall that we write $1_i \colon V_i \to V_\III$ for the natural embedding of $V_i$ into $V_\III$.

\begin{proof}[Proof of Theorem~\ref{T:infinitaryH}]
First let us verify that, in each case, the right-hand side entails the left-hand side. So suppose  we have a partially ordered vector space $(V, \GeqV)$ and  a linear $L\colon V_\III\to V$ such that $L\fI$ represents  $\Geq_0$.
Recall that we equip $V_\III$ with the product partial order $\GeqP$.

Suppose first that $x\Eq_i y$ for every $i\in\III$. 
Then $\fI(x)=\fI(y)$, hence $L\fI(x) = L\fI(y)$. 
Since $L\fI$ represents $\Geq_0$, we find that $x\Eq_0 y$. Therefore \ref{P1} holds. 
Next, suppose that $L$ is positive, and suppose $x\Geq_i y$ for every $i\in\III$. 
Then $\fI(x) \GeqP \fI(y)$, hence $L\fI(x) \GeqV L\fI(y)$, so $x\Geq_0 y$. Thus \ref{P2} holds as well as \ref{P1}. 
Similarly, suppose that $L$ is strictly positive. If $x\Geq_i y$ for all $i\in\III$ and $x\sGeq_jy$ for some $j\in\III$, then $\fI(x)\sGeqP \fI(y)$, whence $L\fI(x)\sGeqV L\fI(y)$ and $x\sGeq_0 y$; thus \ref{P3} holds.  This covers  the right-to-left directions in \ref{part:infinitaryH P1}--\ref{part:infinitaryH P3}.

As for \ref{part:infinitaryH P4}, 
suppose that $L$ is strictly positive 
and each $L_i$ is an order embedding. 
Suppose that 
$x \Geq_i y$ for all $i \in \III \setminus \{j\}$ and $x \Incom_j y$. 
 This implies that $f_i(x) \Geq_{V_i} f_i(y)$ for all $i \in \III \setminus \{j\}$ and 
$f_j(x) \Incom_{V_j} f_j(y)$.  
We can therefore write $\fI(x)-\fI(y)=v_1+v_2$ with $v_1\GeqP 0$ and $v_2=1_j(f_j(x)-f_j(y))\IncomP 0$. 
 Since 
$L$ is strictly positive, 
 we have $L(v_1)\GeqV 0$; 
 since $L_j$ represents $\Geq_{V_j}$, 
 we have $L(v_2)\IncomV 0$. Therefore $L(\fI(x)-\fI(y))=L(v_1)+L(v_2) \not\LeqV 0$. 
 This implies 
 $x\not\Leq_0 y$. 
 This shows that \ref{P4} must hold as well as 
 \ref{P1}\!--\ref{P3}. 

Conversely, now, we show that the left-hand side entails the right in each case.

Assume \ref{P1}.
Define subsets of $V_\III$:
\[
\begin{aligned}
C_0 &=\Set{\lambda(\fI(x)-\fI(y))}{\lambda> 0, x, y \in X, x\Geq_0 y}\\
C_P &=\Set{v}{v\in V_\III, v \GeqP 0}\\
C &=C_0 + C_P.
\end{aligned}
\]
We first prove that these are convex cones. Let $X'=(\fI,f_0)(X)\subset V_\III\times V_0$. 
By the co-convexity assumption, $X'$ is convex.  
Let $\pi_\III$ and $\pi_0$ be the projections of $V_\III\times V_0$ onto $V_\III$ and $V_0$ respectively. 
Let $\Geq'$ be the preorder on $X'$ represented by the restriction of $\pi_0$ to 
$X'$: $x\Geq'y \iff \pi_0(x')\Geq_{V_0} \pi_0(y')$. This $\Geq'$ satisfies strong independence, as in Lemma~\ref{L:SI}\ref{part:SIa}. 
Define 
a subset of $V_\III\times V_0$: 
\[
C'_0 = \Set{\lambda(x'-y')}{\lambda> 0, x',y'\in X', x'\Geq' y'}.
\]
By Lemma~\ref{L:convex cone}, $C'_0$ is a convex cone. 
Note that, for $x'=(\fI,f_0)(x)$ and $y'=(\fI,f_0)(y)\in X'$, we have 
\begin{equation}\label{eq:Geq'}
x'\Geq' y'\iff \pi_0(x')\Geq_{V_0} \pi_0(y') \iff f_0(x)\Geq_{V_0} f_0(y)\iff x\Geq_0 y.
\end{equation}
This shows that $C_0 = \pi_\III(C'_0)$, and therefore that $C_0$ is a convex cone. 
Meanwhile, $C_P$ is a convex cone by Lemma \ref{l:ccV+0}.
Therefore the sum $C$ is also a convex cone. 

Next we prove 
\begin{equation}\label{eq:equivalence2}
     x\Geq_0 y\iff \fI(x)-\fI(y)\in C_0. 
\end{equation}
The left-to-right direction is obvious
from the definition of $C_0$. For the converse, we first show that $\pi_\III$ is injective on $Y \coloneqq \Span(X'-X')$. 
Since $\pi_\III$ is linear and $Y$ is a vector space, it suffices to show that if $\pi_\III(v)=0$ for some $v\in Y$, then $v=0$. So suppose $\pi_\III(v)=0$. 
By Lemma~\ref{L:Vspan}, we can write $v = \ll(a'-b')$, with $a',b' \in X'$, $\ll >0$;  by definition of $X'$, we can write 
$a'=(\fI(a),f_0(a))$  and
$b'=(\fI(b),f_0(b))$ with $a,b\in X$. 
Then $\pi_\III(v)=0$ implies $\fI(a)=\fI(b)$;  so, by \ref{P1}, $a \Eq_0 b$,  and hence $f_0(a) = f_0(b)$.  Therefore $a'=b'$ and $v=0$,  establishing injectivity of $\pi_\III$ on $Y$.

Now suppose $\fI(x)-\fI(y)\in C_0$. Let $x'=(\fI,f_0)(x)$ and $y'=(\fI,f_0)(y)$. 
Note that $\pi_\III(x' -y') = \fI(x) -\fI(y) \in C_0$. 
Since $x'-y'\in Y$, $C_0 = \pi_\III(C'_0)$, $C'_0 \subset Y$, and $\pi_\III$ is injective on $Y$, 
we can conclude that $x'-y'\in C'_0$. 
By Lemma \ref{L:convex cone}, however, $x'-y'\in C'_0$ implies $x'\Geq' y'$, and by  \eqref{eq:Geq'} we obtain $x \Geq_0 y$. 
This establishes the right-to-left direction in \eqref{eq:equivalence2}.

With these preliminaries, we now prove the left-to-right direction in part \ref{part:infinitaryH P1}. Assuming \ref{P1}, 
we define a partially ordered vector space $(V, \GeqV)$ as follows. 
We first let $V$ be the quotient of $V_\III$ by the subspace $C_0\cap-C_0$, and let $L\colon V_\III\to V$ be the quotient map. 
$L(C_0)$ is a convex cone in $V$, so it defines a linear preorder $\GeqV$ on $V$, by Lemma \ref{l:ccV+0}. Namely, we have
\[
L(v)\GeqV L(w) \iff   
L(v-w)\in L(C_0)
\iff v-w\in C_0.
\]
It follows that $\GeqV$ is a partial order: if $L(v)\EqV L(w)$ then $v-w\in C_0\cap -C_0$, so $L(v)=L(w)$. We claim that $L\colon V_\III\to V$ is the required map for part~\ref{part:infinitaryH P1}: 
that is, we claim that $L\fI$ represents $\Geq_0$. Suppose first that $x\Geq_0 y$. Then 
$\fI(x)-\fI(y)\in C_0$, 
so $L\fI(x)\GeqV L\fI(y)$. Conversely, if $L\fI(x)\GeqV L\fI(y)$, then 
$\fI(x)-\fI(y)\in C_0$.
Therefore, using \eqref{eq:equivalence2}, we find 
$x\Geq_0 y$, as desired.

For part \ref{part:infinitaryH P2}, further assuming \ref{P2},
we instead define $V$ to be the quotient of $V_\III$ by $C\cap -C$. We let $L\colon V_\III\to V$ be the quotient map, and now we equip $V$ with the linear partial order $\GeqV$ defined by 
\begin{equation}\label{eq:L for P2}
L(v)\GeqV L(w) \iff   
L(v-w)\in L(C) \iff
v-w\in C.
\end{equation}
It is clear from the fact that $C$ contains $\CP$ that 
$L$ is positive. 
It remains to prove 
that $L\fI$  represents $\Geq_0$. Suppose first that $x\Geq_0 y$. Then 
$\fI(x)-\fI(y)\in C_0\subset C$, 
so $L\fI(x)\GeqV L\fI(y)$. 
Conversely, if $L\fI(x)\GeqV L\fI(y)$, then 
$\fI(x)-\fI(y)\in C$. 
We may therefore write  
$\fI(x)-\fI(y)= v_0 + v_P$ for some $v_0 \in C_0, v_P \in \CP$. 
Solving this equation for $v_P$,  
we find $v_P \in \Span(\fI(X)-\fI(X))$. 
Since $\fI(X)$ is convex,  
Lemma~\ref{L:Vspan} implies 
$v_P = \lambda(\fI(x') - \fI(y'))$ for some $\lambda > 0$, $x', y' \in X$. 
Since $v_P \in \CP$, 
\ref{P2} 
implies $x' \Geq_0 y'$, hence, 
by \eqref{eq:equivalence2},
$v_P \in C_0$. 
This implies $\fI(x) - \fI(y) \in C_0$, 
and hence by \eqref{eq:equivalence2} again, 
$x \Geq_0 y$.

For parts
\ref{part:infinitaryH P3} and \ref{part:infinitaryH P4}, we use the same $V$, $L$, and $\GeqV$ as in part \ref{part:infinitaryH P2}. For part \ref{part:infinitaryH P3}, it only remains to show that if \ref{P1}--\ref{P3} hold,
 then $L$ is strictly positive. 
As before, $L$ is positive, and 
$\GeqV$ is a partial order; because of this,
it suffices to show that, 
for $v\in V_\III$,  $v\sGeq_{P} 0$ rules out $Lv=0$.  
Suppose on the contrary that $Lv=0$.  Then, by the way $L$ was defined, we must have $v\in C\cap -C$. We can therefore write
$-v = v_0 + v_P$ with $v_0 \in C_0$ and $v_P \in \CP$. 
Since $v_0\in C_0$, we can further write $v_0 = \lambda (\fI(x) - \fI(y))$ with $\lambda >0$, $x \Geq_0 y$. Rearranging, we find $\fI(y)-\fI(x)=\tfrac1\lambda(v +v_P)$. So if $v \sGeq_{P} 0$, we have $\fI(y)-\fI(x) \sGeqP 0$. 
\ref{P3} yields $y \sGeq_0 x$, a contradiction.

Finally, for part \ref{part:infinitaryH P4}, it suffices to show that, if \ref{P1}--\ref{P4} hold, then every $L_i$ is an order embedding: 
$v \GeqVi 0 \IFF L_i(v) \GeqV 0$.  
Since, as just established, $L$ is strictly positive, so is each $L_i$, and it  remains to show that if $L_i(v)\GeqV 0$, then $v\Geq_{V_i} 0$.
Suppose therefore that $L_i(v) \GeqV 0$. 
That is, $L1_iv\GeqV 0$, so 
by \eqref{eq:L for P2}, 
$1_i v\in C$ and  we can write 
$1_i v = v_0 + v_P$ for some $v_0 \in C_0$, $v_P \in C_P$.
We may further write $v_0 =  \lambda(\fI(x)-\fI(y))$ 
for some $\lambda > 0$, $x \Geq_0 y$. 
Rearranging, we find $\fI(y)-\fI(x)=\tfrac1\lambda(-1_iv +v_P).$ 
Therefore, $f_j(y)\GeqVj f_j(x)$  for all $j\in\III\setminus\{i\}$.
By \ref{P3} and \ref{P4} respectively, we will have 
$y \nLeq_0 x$ 
(a contradiction) if 
$f_i(y)\sGeqVi f_i(x)$ or $f_i(y)\IncomVi f_i(x)$; therefore we must have $f_i(x)\GeqVi f_i(y)$.
Now, by choice of $v_0$ and $v_P$,  $v=\lambda(f_i(x)-f_i(y))+(v_P)_i$; both terms on the right are $\GeqVi 0$, so we find $v\GeqVi 0$, as desired.
\end{proof}

The proof of Theorem~\ref{T:infinitaryH-2} rests on the following 
`abstract Harsanyi theorem'.

\begin{thm}\label{t:AbsHarsanyi} 
Let $X$ be a nonempty set. Let $Y$ and $Z$ be vector spaces and let $f \colon X \to Y$, $g \colon X \to Z$ 
be co-convex. 
 Then 
\begin{equation}\label{eq:inj}
     g(x)=g(x') \THEN f(x)=f(x') \textup{ for all } x,x'\in X \end{equation}
if and only if $f=Lg+y_0$ for some linear $L \colon Z\to Y$
and some $y_0 \in Y$. 
Moreover, the restriction of $L$ to $\Span(g(X)-g(X))$ is uniquely determined. 
\end{thm}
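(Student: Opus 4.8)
The forward (\emph{if}) direction is immediate: if $f = Lg + y_0$ with $L$ linear, then $g(x) = g(x')$ forces $f(x) = Lg(x) + y_0 = Lg(x') + y_0 = f(x')$. So the work is all in the converse, and I would organize it around a single auxiliary map. First I would note that $g(X)$ is convex, being the image of the convex set $(f,g)(X)$ under the linear projection $Y\times Z\to Z$. By \eqref{eq:inj}, the assignment $\phi(g(x)) := f(x)$ is a well-defined function $\phi\colon g(X)\to Y$, and by construction $f=\phi\circ g$. The whole theorem then reduces to showing that $\phi$ extends to an affine map on the affine hull of $g(X)$.

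The crucial step, and the one place co-convexity is actually used, is to show that $\phi$ respects convex combinations on $g(X)$. Given $x_1,x_2\in X$ and $\aa\in[0,1]$, co-convexity supplies some $x\in X$ with $(f(x),g(x))=\aa(f(x_1),g(x_1))+(1-\aa)(f(x_2),g(x_2))$; reading off the two coordinates gives $g(x)=\aa g(x_1)+(1-\aa)g(x_2)$ together with $f(x)=\aa f(x_1)+(1-\aa)f(x_2)$, whence $\phi(\aa g(x_1)+(1-\aa)g(x_2))=\aa\phi(g(x_1))+(1-\aa)\phi(g(x_2))$. This is the heart of the argument: without co-convexity one could not realize the convex combination of the $f$- and the $g$-values simultaneously at a single point of $X$, and the affineness would fail.

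With affineness of $\phi$ in hand, I would define $L$ on $W:=\Span(g(X)-g(X))$. By Lemma~\ref{L:Vspan} (applicable since $g(X)$ is convex) every $v\in W$ has the form $v=\lambda(g(x)-g(x'))$ with $\lambda>0$, and I set $L(v):=\lambda(f(x)-f(x'))$. Both well-definedness and linearity fall out of affineness of $\phi$: if $\lambda(g(x)-g(x'))=\mu(g(w)-g(w'))$, then dividing by $\lambda+\mu$ rewrites the identity as an equality of two convex combinations of points of $g(X)$, and applying $\phi$ yields $\lambda(f(x)-f(x'))=\mu(f(w)-f(w'))$; additivity uses exactly the combination formula from the proof of Lemma~\ref{L:Vspan}, and positive/negative homogeneity are routine. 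I would then extend $L$ to a linear map on all of $Z$ (arbitrarily, e.g. by zero on a complement of $W$), which changes nothing since $g(x)-g(x_0)\in W$ for every $x$. Fixing any $x_0\in X$ and putting $y_0:=f(x_0)-Lg(x_0)$, linearity gives $f(x)-f(x_0)=\phi(g(x))-\phi(g(x_0))=L(g(x)-g(x_0))=Lg(x)-Lg(x_0)$, i.e. $f=Lg+y_0$.

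For the final uniqueness claim, suppose $f=Lg+y_0=L'g+y_0'$. Then $(L-L')g\equiv y_0'-y_0$ is constant on $X$, so for any $v=\lambda(g(x)-g(x'))$ we get $(L-L')(v)=\lambda\big((y_0'-y_0)-(y_0'-y_0)\big)=0$; since such elements exhaust $W$ by Lemma~\ref{L:Vspan}, $L$ and $L'$ agree on $\Span(g(X)-g(X))$. I expect the affineness step to be the only real idea here, with the construction and checking of $L$ being bookkeeping in elementary linear algebra.
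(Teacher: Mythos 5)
Your proof is correct, and every step checks out: the affineness of $\phi$ is exactly where co-convexity enters, and the well-definedness and additivity of $L$ on $\Span(g(X)-g(X))$ do follow from it via the rearrangement into equal convex combinations, just as you say. The route is recognizably the same in substance as the paper's --- both arguments come down to elementary linear algebra on $\Span\bigl((f,g)(X)-(f,g)(X)\bigr)$ and both invoke Lemma~\ref{L:Vspan} --- but the decomposition is different. The paper sets $A \coloneqq \Span(X'-X')$ for $X'=(f,g)(X)$ and observes that condition \eqref{eq:inj}, via Lemma~\ref{L:Vspan}, is equivalent to $A$ containing no element of the form $(y,0)$ with $y\neq 0$; since $A$ is a subspace, it is then automatically the graph of a linear partial function $Z\to Y$ with domain $\Span(g(X)-g(X))$, so well-definedness, additivity, and homogeneity all come for free from ``subspace $=$ graph.'' You instead factor $f=\phi\circ g$, prove $\phi$ is affine on $g(X)$, and then verify by hand that the induced map on differences is well defined and linear. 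Your version makes the role of co-convexity more visible (your second paragraph is a good explanation of why the hypothesis cannot be dropped) at the cost of some bookkeeping that the graph formulation absorbs; the paper's version is shorter but hides the affineness of $\phi$ inside the observation that $A$ is a subspace. Both require choosing a complement of $\Span(g(X)-g(X))$ in $Z$ to extend $L$, and your uniqueness argument is the same as the paper's.
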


\begin{proof}
To take the last statement first, suppose that $f=Lg+y_0$ and also $f=L'g+y'_0$. Subtracting, we see that $L$ and $L'$ differ by the constant $y'_0-y_0$ on $g(\X)$, and therefore they are equal on $\Span(g(\X)-g(\X))$.

For the first statement, 
it is clear that \eqref{eq:inj} holds 
if $f$ is of the form $f=Lg+y_0$. For the converse, 
let $X' \coloneqq (f,g)(X)$; it is a convex set by assumption. 
Let $A \coloneqq \Span(X'-X')$; it is a linear subspace of $Y \times Z$. 
In light of Lemma~\ref{L:Vspan} applied to $X'$, the condition \eqref{eq:inj}
is equivalent to the condition that $A$ contains no elements of the form $(y,0)$ with $y \neq 0$. 
Since $A$ is a linear subspace, we find that 
\[
	(y,z), (y',z) \in A \implies y=y'.
\]
$A$ is therefore the graph of a partial function $L$ from $Z$ to $Y$. 
By definition, the domain of $L$ is the projection of $A$ to $Z$, namely $\Span(g(X)-g(X))$, 
and $L$ is characterized by the equation 
\[
	A= \{(L(z), z) \colon z \in \Span(g(X)-g(X))\}.
\]
Also, $L$ is a linear function since $A$ is a linear subspace. 
Extend $L$ arbitrarily to a linear function from $Z$ to $Y$. 
Fix $(y,z) \in X'$ and set $y_0 = y-L(z)$. 	
Then for any $x \in X$, 
we have $f(x) = L(g(x)) + y_0$. 
\end{proof}

\begin{proof}[Proof of Theorem~\ref{T:infinitaryH-2}]
Note that, if $L\fI+b$ represents $\Geq_0$, then so does $L\fI$. Thus the right-to-left direction in each part is a special case of the corresponding claim in Theorem~\ref{T:infinitaryH}.

For the left-to-right directions, assume for the remainder that \ref{P1} holds. 
This implies $\fI(x) = \fI(x') \THEN f_0(x) = f_0(x')$ for $x, x' \in X$. Co-convexity of the $f_i$ is equivalent to co-convexity of $\fI$ and $f_0$. 
Theorem \ref{t:AbsHarsanyi} 
therefore 
yields a linear map $L\colon V_\III\to V_0$ and some $b\in V_0$ such that $f_0=L\fI+b$, establishing  part~\ref{part:infinitaryH-2 P1}. 

Let $v \in V_\III$. 
By the co-convexity assumption, 
$\fI(X)$ is convex, so \ref{DR} and Lemma~\ref{L:Vspan} imply that 
there exist $x, y \in X$, $\lambda >0$ such that $\fI(x) - \fI(y) = \lambda v$. 
Since $L\fI$ represents $\Geq_0$, 
\[
	x \Geq_0 y \iff Lv \Geq_{V_0} 0.
\]
We have $v \GeqP 0 \IFF x \Geq_i y$ for every $i \in \III$. 
If \ref{P2} holds, $v \GeqP 0 \Rightarrow x \Geq_0 y$; 
if \ref{P3} holds $v \sGeqP 0 \Rightarrow x \sGeq_0 y$. 
Using the displayed equivalence, if \ref{P2} holds, $L$ is positive, and if \ref{P2}\!--\ref{P3} hold, $L$ is strictly positive. 
This establishes \ref{part:infinitaryH-2 P2}--\ref{part:infinitaryH-2 P3}. 

Now assume \ref{P1}--\ref{P4} hold. By Theorem~\ref{T:infinitaryH}\ref{part:infinitaryH P4}, 
there exists a partially ordered vector space $(V', \Geq_{V'})$ and a linear map $L' \colon V_\III \to V'$ such that 
$L'\fI$ represents $\Geq_0$,
with every $L'_i$ an order embedding. Let $v \in V_i$. By convexity of $\fI(X)$, \ref{DR} 
and Lemma~\ref{L:Vspan} again, there exist $x, y \in X$, $\lambda >0$ such that $\fI(x) - \fI(y) = \lambda 1_iv$. 
Then $v \Geq_{V_i}0 \iff L'_i v \Geq_{V'} 0 \iff L'\fI(x) \Geq_{V'} L'\fI(y)
\iff x \Geq_0 y \iff f_0(x) \Geq_{V_0} f_0(y)
\iff L\fI(x) \Geq_{V_0} L\fI(y) \iff L_i v \Geq_{V_0} 0$.
This shows that $L_i$ represents $\Geq_{V_i}$. 
Since $V_0$ and $V_i$ are partially ordered,  it also shows that $\ker L_i = \{0\}$: $L_iv=0\iff L_iv\Eq_{V_0}0\iff v\Eq_{V_i} 0\iff v=0$. Therefore $L_i$ is injective, and 
$L_i$ is an order embedding, establishing \ref{part:infinitaryH-2 P4}. 
\end{proof}

\begin{proof}[Proof of Theorem~\ref{T:infinitaryH-3}]
For the right-to-left direction, suppose we are given $V$, $g_i$, and $S$ satisfying the three conditions \ref{part:H-3a}--\ref{part:H-3c}, or even just \ref{part:H-3a} and \ref{part:H-3b}.
Suppose $x, y \in X$ with $x \Geq_i y$ for all $i\in\III$.
Since $g_i$ represents $\Geq_i$, this implies $\gI(x)-\gI(y) \GeqP 0$, where now $\GeqP$ is the product partial order on $\ProdI V$. Since 
$\gI(x)-\gI(y)\in Y\coloneqq  \ProdI \Span(g_i(X)-g_i(X))$,
and $S$ is assumed strictly positive on $Y$, 
we find 
$S\gI(x) \Geq_{V} S\gI(y)$, and since 
$S\gI$ represents $\Geq_0$, this implies $x \Geq_0 y$. 
This establishes \ref{P1} and \ref{P2}. If $x\sGeq_i y$ for some $i\in\III$, then a similar argument yields $x\sGeq_0 y$, and hence \ref{P3}. 

For \ref{P4}, suppose $x \Geq_i y$ for all $i \in \III\setminus\{j\}$ and $x \Incom_j y$. Let $1_j$ be the inclusion of $V$ into $\ProdI V$ as the $j$th factor.
Set $v \coloneqq g_j(x)-g_j(y)$, so that $v\Incom_V 0$.
We may write $\gI(x) - \gI(y) = v_P + 1_jv$ with $v_P \GeqP 0$, and $v_P,1_jv\in Y$. Since $S$ is strictly positive on $Y$,  we have $Sv_P\Geq_V 0$, 
and since $S$ extends summation we have $S1_jv=v\Incom_V 0$. Therefore (since $\Geq_V$ is a linear preorder) we have $Sv_P+S1_jv\nLeq_V 0$, and since $S\gI$ represents $\Geq_0$ we find $x\nLeq_0 y$, as desired.

Conversely, suppose that \ref{P1}--\ref{P4} hold. 
From Theorem~\ref{T:infinitaryH}\ref{part:infinitaryH P4}, 
there is a strictly positive linear map  $L\colon V_\III\to V$ with values in some partially ordered vector space $V$, such that $L\fI$ represents $\Geq_0$ and every $L_i$ is an order embedding.%
\footnote{Anticipating the proof of Theorem~\ref{T:expectH}, we note that the following construction of the $g_i$ and $S$ does not depend on the $V_i$ or $V$ being partially ordered, rather than merely preordered.}
This implies that  $g_i \coloneqq L_i f_i$ is a representation of of $\Geq_i$ with values in  $V$ 
for each $i \in \III$.  
Together the $g_i$ define $\gI\colon X\to \ProdI V$. 

We now define the map $S\colon\ProdI V\to V$. 
 The construction is illustrated by the following commutative diagram.
\[
\begin{tikzcd}
X \arrow[d,"\fI"]\arrow[drr,"\gI"] \\
 \ProdI V_i  \arrow[r, "(L_i)"] \arrow[dr, "L"] & \ProdI V'_i \arrow[d, "S_1"]  \arrow[r,hook] & \ProdI V  \arrow[dl,"S"]\\
 & V & \arrow[l, "S_2"] \SumI V \arrow[u,hook]
\end{tikzcd}
\]
 Here, let $V'_i \coloneqq L_i(V_i)$; it is a subspace of $V$ order-isomorphic to $V_i$. 
(Note that $L_i$  is injective, since it is an order embedding.) 
The $L_i$ together define an isomorphism $(L_i)\colon \ProdI V_i \to \ProdI V'_i$. 
On the other hand, $L$ is a map from $V_\III=\ProdI V_i$ to $V$. 
We therefore obtain a map $S_1\colon \ProdI V'_i\to V$ by 
$S_1=L\circ(L_i)\inv$.

Next, we have the summation $S_2\colon \SumI V\to V$. The domains of $S_1$ and $S_2$ are both subspaces of $\ProdI V$. The intersection of their domains is $\SumI V'_i$. 
Given 
$v\in V_i'$, 
$L(1_iL_i\inv(v))=v$
(where here $1_i$ is the inclusion of $V_i$ into $\ProdI V_i$).
This shows that $S_1$ and $S_2$ coincide on each $V'_i$, and therefore on $\SumI V'_i$. 
 Therefore there exists a linear map $S \colon \ProdI V\to V$ that extends both $S_1$ and $S_2$.

By construction, $\gI=(L_i)\fI$ and therefore
$S\gI = L\fI$. Therefore $g_0\coloneqq S\gI$ represents $\Geq_0$, as required for part \ref{part:H-3a}.  

For part \ref{part:H-3b} it remains to show that $S$ is strictly positive on $Y$. 
Note that in the commutative diagram  defining $S$, the horizontal map $(L_i)\colon \ProdI V_i \to \ProdI V$ is an order embedding, under which $Y$ is the image of $Y'\coloneqq \ProdI \Span(f_i(X)-f_i(X))$. Therefore $S$ is strictly positive on $Y$ if and only if $L$ is strictly positive on $Y'$. 
But $L$ is strictly positive  on $V_\III$, which contains $Y'$.

Part \ref{part:H-3c} is straightforward from the fact that $\fI$ and $f_0$ are co-convex and $\gI$ and $g_0$ are linear transforms of $\fI$.   

For the last claim of the theorem, given \ref{DR} as well as \ref{P1}--\ref{P4}, 
Theorem 
\ref{T:infinitaryH-2}\ref{part:infinitaryH-2 P4} 
provides a linear $L \colon V_\III \to V_0$ such that $f_0 = L\fI +b$ for some $b\in V_0$. 
Repeating the construction above, now with $V=V_0$, we 
obtain $g_i$ and $S$ satisfying conditions \ref{part:H-3b} and \ref{part:H-3c} but instead of \ref{part:H-3a} we have $f_0=L\fI+b=S\gI+b$. 
However, picking any $i\in\III$, we can replace $g_i$ by $g_i-b$ to obtain $f_0=S\gI$.
\end{proof}

\begin{proof}[Proof of Lemma~\ref{L:coconvex0}] 
We have $f(x)=f(x')\iff x\Eq x'\iff g(x)=g(x')$. 
Since $f$ and $g$ are pervasive, 
Theorem~\ref{t:AbsHarsanyi} tells us that there is a unique linear $L \colon V \to V'$ and $b \in V'$ such that 
$g = Lf +b$, and also a unique linear $L' \colon V' \to V$ and $b' \in V$ such that 
$f = L'g +b'$. 
(Note that here $b$ and $b'$ are uniquely determined as well as $L$ and $L'$.)
Together we find $f = L'Lf + (L'b+b')$. 
We also have $f = \id_V f$, so the uniqueness statement in Theorem~\ref{t:AbsHarsanyi} implies that $L'L =  \id_V$. 
A similar argument gives $LL' =  \id_{V'}$, showing that $L$ is bijective.
It remains to show that $L$ is an order isomorphism:   for all $v_1,v_2\in V$, we want 
$v_1\Geq_V v_2\iff Lv_1\Geq_{V'} Lv_2$. By Lemma \ref{L:Vspan}, and the fact that $f$ is pervasive, there exist $x_1,x_2\in X$ and $\lambda>0$ such that $v_1-v_2=\lambda(f(x_1)-f(x_2))$. We have 
$v_1\Geq_V v_2 
\iff f(x_1)\Geq_V f(x_2) \iff x_1\Geq x_2 \iff g(x_1)\Geq_{V'} g(x_2) \iff Lf(x_1)\Geq_{V'} Lf(x_2)\iff L(v_1)\Geq_{V'} L(v_2)$, as desired. 
\end{proof}

\begin{proof}[Proof of Proposition \ref{prop:uniqueness1}]
$\Span(L\fI(X)-L\fI(X))=L\Span(\fI(X)-\fI(X))$ by the linearity of $L$, and $\Span(\fI(X)-\fI(X))=V_\III$, by \ref{DR}. Thus $\Span(L\fI(X)-L\fI(X))=L(V_\III)$, showing that $L\fI\colon V_\III \to L(V_\III)$ is pervasive. So too is $L'\fI\colon V_\III\to L'(V_\III)$.

We next show that $L\fI$ and $L'\fI$ are co-convex. Fix any $x_1,x_2\in X$ and 
$\alpha\in[0,1]$. Since $\fI(X)$ is convex, there exists $x_3\in X$ such that $\fI(x_3)=\alpha \fI(x_1)+(1-\alpha )\fI(x_2)$. Therefore 
$L\fI(x_3)=\alpha L\fI(x_1)+(1-\alpha )L\fI(x_2)$, and similarly 
$L'\fI(x_3)=\alpha L'\fI(x_1)+(1-\alpha )L'\fI(x_2)$. Therefore
$(L\fI,L'\fI)(x_3)=\alpha (L\fI,L'\fI)(x_1)+(1-\alpha )(L\fI,L'\fI)(x_2)$. Thus $L\fI$ and $L'\fI$ are co-convex.

By Lemma~\ref{L:coconvex0} there is therefore 
a unique linear order isomorphism $M\colon L(V_\III)\to L'(V_\III)$ and $b \in L'(V_\III)$ such that $L'\fI=ML\fI +b$. 
Suppose given $v\in V_\III$; by \ref{DR} and Lemma \ref{L:Vspan} there are $x_1,x_2\in X$, and $\lambda>0$, such that $v=\lambda(\fI(x_1)-\fI(x_2))$.  From the fact that 
 $L'\fI=ML\fI +b$ 
it follows that $L'v=MLv$, as required.
\end{proof}

\begin{proof}[Proof of Proposition \ref{prop:uniqueness2}] 
Suppose that $f_0=L\fI+b$ and also $f_0=L'\fI+b'$, so we have to show $L=L'$ and $b=b'$. 
 By \ref{DR} and Lemma \ref{L:Vspan}, we can write any $v\in V_\III$ in the form $v=\lambda(\fI(x_1)-\fI(x_2))$, with $x_1,x_2\in X$ and $\lambda>0$. Applying $L$ or $L'$ we find $Lv=\lambda(f_0(x_1)-f_0(x_2))=L'v$;  therefore $L=L'$. Moreover, we have $b=f_0(x_1)-L\fI(x_1)=f_0(x_1)-L'\fI(x_1)=b'$.
 \end{proof}

\begin{proof}[Proof of Proposition \ref{prop:uniqueness3}] 
First we claim that $g_0$ and $g'_0$ are co-convex. Fix $x_1,x_2\in X$ and $\aa\in [0,1]$. Since $\fI$ and $g_0$ are co-convex, there exists $x_3\in X$ such that 
$\fI(x_3)=\aa \fI(x_1)+(1-\aa)\fI(x_2)$
and $g_0(x_3)=\aa g_0(x_1)+(1-\aa)g_0(x_2)$. Similarly there exists
$x'_3\in X$ such that 
$\fI(x'_3)=\aa \fI(x_1)+(1-\aa)\fI(x_2)$
and $g'_0(x'_3)=\aa g'_0(x_1)+(1-\aa)g'_0(x_2)$. 
By Theorem~\ref{T:infinitaryH-3}, \ref{P1} holds. 
Since $\fI(x'_3)=\fI(x_3)$,  we must, by \ref{P1}, have $g_0(x'_3)=g_0(x_3)$. 
Therefore $g_0(x'_3)=\aa g_0(x_1)+(1-\aa)g_0(x_2)$,
 establishing that $g_0$ and $g'_0$ are co-convex. 

Thus, by Lemma~\ref{L:coconvex0},
there is a unique linear order isomorphism $L \colon  V\to V'$ and 
unique $b_0\in V'$ such that $g'_0 = Lg_0 + b_0$. 

By assumption, for each $i\in\III$, 
$f_i$ and $g_i$ are co-convex, and (by \ref{DR}) $f_i$ is pervasive. 
As described before Lemma~\ref{L:coconvex0}, for any $x_0\in X$, $g_i-g_i(x_0)$ is a pervasive map $X\to \Span(g_i(X)-g_i(X))$. Since $f_i$ is also pervasive, Lemma~\ref{L:coconvex0} gives a unique linear order isomorphism $M_i\colon V_i\to\Span(g_i(X)-g_i(X))$ and unique $c_i\in V$ such that $g_i -g_i(x_0)= M_i f_i +c_i$. Thus $g_i=M_if_i+c_i+g_i(x_0)$. It is easy to verify from the uniqueness claim in Lemma~\ref{L:coconvex0} that both $M_i$ and $c_i+g_i(x_0)$ are uniquely determined by this equation, independently of $x_0$.  
As $i$ varies these together define $a\in \ProdI V$ and a linear order isomorphism $M\colon V_\III\to \ProdI \Span(g_i(X)-g_i(X))$ such that $\gI=M\fI+a$.
Now fix $i\in\III$ and $x\in X$. By \ref{DR} and Lemma \ref{L:Vspan}, there exist $y,z\in X$ and $\lambda>0$ such that $\lambda(\fI(y)-\fI(z))=1_if_i(x)$, where $1_i$ is the inclusion of $V_i$ into $V_\III$. 
 We therefore have 
$\lambda(\gI(y)-\gI(z))=\lambda(M\fI(y)-M\fI(z))=M1_if_i(x)=1_i M_i f_i(x) = 1_i(g_i(x)-a_i)$, where in the last two terms $1_i$ is the inclusion of $V$ into $\ProdI V$ as the $i$th factor.
Applying $S$ and using the fact that it extends summation we find
$g_0(y)-g_0(z)=S\gI(y)-S\gI(z)=(g_i(x) - a_i)/\lambda$.

A parallel argument gives
$g'_0(y)-g'_0(z)=(g'_i(x) - a'_i)/\lambda$ for some $a'_i\in V'$. 
Since $g'_0=Lg_0+b_0$, we have 
$g'_0(y)-g'_0(z)=Lg_0(y)-Lg_0(z)$; so in combination we find   
\begin{equation}
\label{eq:L} 
Lg_i(x)-La_i=g'_i(x)-a'_i.
\end{equation}
Rearranging, 
$g'_i=Lg_i+(a'_i-La_i)$. We set $b_i=(a'_i-La_i)\in V'$.
Note that by \eqref{eq:L}, $b_i=g'_i(x)-Lg_i(x)$ for every $x\in X$, and is uniquely determined by this equation. 

For the last statement, suppose first that $v\in\SumI V'$. We have
\[
LS((L\inv v_i)_{i\in\III})=L(\sum_{i\in\III} L\inv(v_i))=\sum_{i\in\III} LL\inv(v_i)=\sum_{i\in\III} v_i = S'(v),
\]
where every sum has finitely many non-zero terms.
On the other hand, suppose 
$v\in \Span(\gI'(X)-\gI'(X))$. 
Using Lemma \ref{L:Vspan} we can write $v$ in the form $v=\lambda(\gI'(x)-\gI'(y))$ with $x,y\in X$ and $\lambda>0$.  By equation \eqref{eq:L}, 
we have $L\inv v_i=\lambda(g_i(x_i)-g_i(y_i))$,
so $S((L\inv v_i)_{i\in\III})=\lambda(g_0(x)-g_0(y))$ and then $LS((L\inv v_i)_{i\in\III}) = \lambda(g'_0(x)-g'_0(y))=S'v$.
\end{proof}

\begin{proof}[Proof of Lemma~\ref{L:SI}]
Suppose that $\Geq$ is a preorder on $X$. It is straightforward to check that $\Geq$ satisfies strong independence if it has a mixture-preserving representation. 
Conversely, if $\Geq$ satisfies strong independence, 
let $V$ be the vector space $\Span X$ and $\iota \colon X \to V$ the inclusion. 
Taking $Z=X$ 
in Lemma~\ref{L:convex cone}, 
$\Geq$ and $\iota$ define a convex cone $C$ in $V$ and a linear preorder $\Geq_V$ 
such that $\iota$ then represents $\Geq$. 
Let $\ov{V}$ be the quotient of $V$ by the subspace $C\cap-C$, and let $L \colon V \to \ov{V}$ be the quotient map. 
Define a linear 
partial order
$\Geq_{\ov{V}}$ on $\ov{V}$ by $L(v) \Geq_{\ov{V}} L(w) \IFF v -w \in C$. 
It is clear that $L$ is a representation of $\Geq_V$, hence $L\iota$ is a representation of $\Geq$, 
with values in the partially ordered vector space $\ov{V}$.
To complete the proof of part \ref{part:SIa}, it suffices to apply the construction of pervasive representations described before Lemma~\ref{L:coconvex0}.

For part \ref{part:SIb}, fix any $x,y\in X$ and $\alpha\in[0,1]$. Let $z=\aa x + (1-\aa) y$. From the fact that the 
$f_i$  are mixture preserving, it follows that, for all $i$, $f_i(z)=\aa f_i(x)+(1-\aa)f_i(y)$, and therefore 
$\aa (f_i)_{i\in\II}(x)+(1-\aa)(f_i)_{i\in\II}(y)=(f_i)_{i\in\II}(z)$, establishing that  $(f_i)_{i\in \II}(X)$ is convex.
\end{proof}

\begin{proof}[Proof of Theorem~\ref{T:preferenceH}] 
By Lemma~\ref{L:SI}, we can choose co-convex, mixture-preserving representations $F_i\colon X\to V_i$, for $i\in\III\cup\{0\}$. We use these $F_i$ as the `$f_i$' in Theorem \ref{T:infinitaryH-3}, which then yields $V$, 
$g_i\colon X\to V$ and $S\colon \ProdI V\to V$.  For our current purposes we define $f_i\coloneqq g_i$.

The only thing left to show is that the $g_i$ are mixture-preserving.  
This follows from the construction of the $g_i$ in Theorem \ref{T:infinitaryH-3}, but here is a direct argument.
 Fix $x,y\in X$ and $\alpha\in[0,1]$.  By Theorem 
 \ref{T:infinitaryH-3}\ref{part:H-3c}, $g_i$ and 
 $F_i$ 
 are co-convex. So there exists $z\in X$ such that 
 $g_i(z)=\aa g_i(x)+(1-\aa)g_i(y)$ 
 and $F_i(z)=\aa F_i(x)+(1-\aa)F_i(y)$.  Since $F_i$ is mixture preserving, the second of these equations holds if and only if $F_i(z)=F_i(\aa x+(1-\aa)y)$, 
 or equivalently if and only if   $z\Eq_i \aa x+(1-\aa) y$. But then, since $g_i$ represents $\Geq_i$,  
 $g_i(z)=g_i(\aa x+(1-\aa)y)$, and 
 therefore $g_i(\aa x+(1-\aa)y)=\aa g_i(x)+(1-\aa)g_i(y)$, as desired. 
\end{proof}

\begin{proof}[Proof of Theorem~\ref{T:expectH}]
Lemma~\ref{L:expectational} 
 yields an expectational representation $f_i\colon X\to V_i$ of each $\Geq_i$; suppose $f_i(\mu)=\int_Y u_i\,\mathrm{d}\mu$. 
Let $\ov{V}_i=V_i/{\Eq_{V_i}}$ be the partially ordered quotient of $V_i$, and $\ov f_i\colon X\to \ov{V}_i$ be the composition of $f_i$ with the quotient map; $\ov f_i$ also represents $\Geq_i$. Since the $f_i$ are expectational, they are mixture preserving, and it follows from Lemma~\ref{L:SI} that they are co-convex; then the $\ov f_i$ are co-convex as well. By 
Theorem \ref{T:infinitaryH}\ref{part:infinitaryH P4} 
there is a partially ordered vector space $\ov V$ and a strictly positive linear map $\ov{L}\colon \ov{V}_\III\to \ov V$ such that $\ov{L}\ovfI$ represents $\Geq_0$, and such that the components $\ov{L}_i\colon \ov{V}_i\to \ov V$ are order embeddings.
Now define a preorder $\Geq_{V}$ on $V\coloneqq V_\III$ by $x\Geq_{V} y \iff \ov{L}\ov{x}\Geq_{\ov V} \ov{L}\ov{y}$, 
where $\ov{x},\ov{y}\in\ov{V}_\III$ are the images of $x,y\in V_\III$.  Let $L$ denote the set-theoretic identity map $V_\III\to V$. Since $\ov{L}$ is strictly positive, so is $L$. 
Since each $\ov{L}_i$ is an order-embedding, so is each $L_i$ 
(note $L$ is injective). 
Moreover, $L\fI$ represents $\Geq_0$. 
If $V'_i$ is the given separating vector space of linear functionals on $V_i$, then $V'\coloneqq \SumI V'_i$ is a separating vector space of linear functionals on $V_\III$, so on $V$; with respect to $V'$, we have $\fI(\mu)=\int_Y \uI \,\mathrm{d}\mu$.
Because $L$ is the set-theoretic identity map we have $L \fI(\mu)=L\int_Y \uI\,\mathrm{d}\mu=\int_Y L \uI\,\mathrm{d}\mu$.

We now appeal to the proof of the left-to-right direction of Theorem \ref{T:infinitaryH-3}: the construction (which works even though $V_i$ and $V$ are now merely preordered vector spaces) takes as input $\fI$ and $L$ and yields $g_i\colon X\to  V$ representing $\Geq_i$ (for $i\in \III\cup\{0\}$) and $S\colon \ProdI V\to V$ extending summation such that $g_0=Sg$. Moreover, by construction $g_i=L_if_i$ for $i\in\III$, and $g_0=L\fI$. 
We find    $g_i(\mu)=L_i \int_Y u_i\,\mathrm{d}\mu =\int_Y L_iu_i\,\mathrm{d}\mu$, 
and, as already stated, $g_0(\mu)=\int_Y L\uI\,\mathrm{d}\mu$. Defining $U_0=L\uI$ and $U_i=L_iu_i$ for $i\in\III$, it remains to show that $U_0=SU_\III$, i.e. that $L\uI=S(L_iu_i)$. And indeed $S$ is constructed so that $S\circ (L_i)=L$. 
\end{proof}

\begin{proof}[Proof of Theorem~\ref{T:pooling-c}]
Since Lyapunov's Theorem (Theorem \ref{T:Lyapunov}) implies that the $f_i$ are co-convex, Theorem \ref{T:infinitaryH} yields the vector space $V$ and linear map $L$ with components $L_i$. The only additional points to be checked are that $V$ can be taken to be finite-dimensional and that $\sum_{i\in\III} L_if_i$ is an admissible vector measure. 

On the first point, one can harmlessly replace $V$ by its subspace $L(V_\III)$, since the latter contains the image of every $L_i$; since, under the hypotheses of the theorem, $V_\III$ is finite-dimensional, so is $L(V_\III)$. (Alternatively, $V$ as constructed in the proof of Theorem \ref{T:infinitaryH} will already be finite dimensional.)

On the second point, given that the $f_i$ are admissible, it is straightforward to verify that their product $\fI\colon X\to V_\III$ is admissible, and so then is the linear transform $L\fI=\sum_{i\in\III} L_i f_i$ of $\fI$. 
\end{proof}

\begin{proof}[Proof of Lemma~\ref{L:SI2}]
We begin with part \ref{part:SI2-b}. Given $\ov f$, we have to check that its restriction $f$ to $X$ is additive. For $A,B$ disjoint in $X$ we have
\[
\begin{split}
\tfrac12 f(A\cup B)
=\tfrac12\ov f(\chi_A+\chi_B)
= \tfrac12\ov f(\chi_A+\chi_B)+\tfrac12\ov f(\chi_{\emptyset})
=\ov f(\tfrac12\chi_A+\tfrac12\chi_B)\\
=\tfrac12 \ov f(\chi_A)+\tfrac 12\ov f(\chi_B)
=\tfrac12(f(A)+f(B)).
\end{split}
\]
Hence $f(A\cup B)=f(A)+f(B)$.
Here we have used the fact that $X$ is embedded into $\ov X$ by $A\mapsto \chi_A$, the assumption that $\ov f(\chi_\emptyset)=0$, the mixture preservation property (twice), and again the embedding of $X$.

Conversely, suppose given a vector measure $f\colon X\to V$. 
We essentially define $\ov f \colon \ov X \to V$ by 
setting $\ov f(F)=\int_S F \,{\mathrm{d}f}$ for each $F \in \ov X$. 
Explicitly, suppose $F$ is constant on each cell of the partition $\{E_1,\ldots,E_n\}\subset X$ of $S$, taking value $p_j$ on $E_j$. Then set 
$\ov f(F)= \sum_{j=1}^n p_j f(E_j)$. 
This $\ov f$ is clearly mixture preserving, and the restriction of $\ov f$ to $X$ is $f$.

For part \ref{part:SI2-a}, suppose given $\ov \Geq$ satisfying strong independence; by Lemma~\ref{L:SI}, it admits a mixture-preserving representation $\ov f\colon \ov X\to V$. Subtracting a constant, we can assume $\ov f(\chi_\emptyset)=0$. Thus, by part \ref{part:SI2-b}, $\ov f$ restricts to a vector measure $f$ on $X$, which automatically represents $\Geq$. 

Conversely, if $\Geq$ can be represented by a vector measure $f\colon X\to V$, then, by part 
\ref{part:SI2-b}, 
$f$ extends to a mixture-preserving function $\ov f\colon \ov X\to V$; the preorder $\ov \Geq$ on $\ov X$ represented by $\ov f$ satisfies strong independence, by Lemma~\ref{L:SI}, and its restriction to $X$ is $\Geq$.
\end{proof}

\begin{proof}[Proof of Theorem~\ref{T:convexification}]
From Lemma~\ref{L:SI} we get co-convex mixture-preserving representations $\ov f_i\colon \ov X\to V_i$ of each $\ov\Geq_i$. 
Subtracting a constant in each case, we can assume $\ov f_i(\chi_\emptyset)=0$. 

Theorem \ref{T:infinitaryH}\ref{part:infinitaryH P4} yields a 
strictly positive linear map 
$L\colon V_\III\to V$, for some partially ordered vector space $V$, 
with every $L_i$ an order embedding, and
such that $L\ovfI$ represents $\Geq_0$.  By Lemma~\ref{L:SI2}\ref{part:SI2-b}, the $\ov f_i$ restrict to representations $f_i$
of the $\Geq_i$ by vector measures. 
We recover the statement of the theorem by redefining $V_0\coloneqq V$  and $\ov f_0\coloneqq L\ovfI$.
 \end{proof}

\bibliographystyle{plainnat}

\end{document}